\documentclass[11pt,a4paper]{article}

\usepackage{fullpage}
\usepackage{amsmath,amssymb,amsthm,ascmac}
\usepackage{comment}
\usepackage{enumerate}
\usepackage[cal=pxtx]{mathalpha}
\usepackage{makecell}
\usepackage{mleftright}
\usepackage{graphicx}
\usepackage{setspace}
\usepackage[most]{tcolorbox}
\definecolor{firebrick}{RGB}{178,34,34}
\usepackage[linesnumbered,ruled,noend,nofillcomment]{algorithm2e}
\SetKwProg{Function}{Function}{:}{}
\SetKwInput{Precondition}{Precondition}
\SetKwComment{Comment}{$\triangleright$ }{}
\SetProgSty{textnormal}
\SetCommentSty{TODO}
\usepackage[pdfencoding=auto,psdextra]{hyperref}
\usepackage{cleveref}
\Crefname{algocf}{Algorithm}{Algorithms}
\usepackage[style=trad-alpha]{biblatex}
\newtheorem{theorem}{Theorem}[section]
\newtheorem{lemma}{Lemma}[section]
\newtheorem{proposition}{Proposition}[section]
\newtheorem{corollary}{Corollary}[section]
\newtheorem{example}{Example}[section]
\newtheorem{claim}{Claim}[section]
\theoremstyle{definition}
\newtheorem{definition}{Definition}[section]

\usepackage{booktabs}
\usepackage{mathtools}
\usepackage[section]{placeins}

\newcommand{\UnderlineCase}[1]{\smallskip\noindent\underline{\emph{#1}}}

\newcommand{\eps}{\varepsilon}
\newcommand{\true}{\emph{true}}
\newcommand{\false}{\emph{false}}
\newcommand{\A}{\mathcal{A}}
\newcommand{\B}{\mathcal{B}}
\newcommand{\last}{\mathrm{last}}
\newcommand{\Comp}{{\sf Comp}}
\newcommand{\SendMaxToMin}{{\sf SendMaxToMin}}
\newcommand{\ScalingSendMaxToMin}{{\sf ScalingSendMaxToMin}}
\newcommand{\PROPOneNonPROP}{{\sf PROP1\text{-}nonPROP}}
\newcommand{\EFOneHalfMMS}{{\sf EF1\text{\sf +}1/2\text{\sf -}TPS}}
\newcommand{\EgalSequential}{{\sf Egal\text{-}Sequential}}
\newcommand{\HallMatching}{{\sf HallMatching}}
\newcommand{\ProtectedScalingSendMaxToMin}{{\sf ProtectedScalingSendMaxToMin}}
\newcommand{\TPS}{\mathrm{TPS}}

\title{Witness-Certified Fair Division with Comparison Queries}

\usepackage{authblk}
\author[1]{Tatsuhito Yamagata}
\author[1]{Hanna Sumita}
\affil{Institute of Science Tokyo, Japan}
\date{}

\begin{document}

\maketitle

\begin{abstract}
We study fair division of indivisible goods when agents' valuations are accessed only through ordinal comparisons between bundles, with arbitrary tie-breaking.
In this model, even deciding whether a given allocation is envy-free up to one good (EF1) can be impossible.
This suggests explicit fairness certificates as a natural algorithmic object.
Our main contribution is a certificate-preserving scaling framework, which recursively contracts goods, solves a smaller instance, and expands the solution while repairing an explicit envy-eliminating witness certificate.
For arbitrary identical monotone valuations, this yields a certified EF1 allocation for $n$ agents and $m$ goods using $O(n \log n \log(m/n))$ comparison queries, within an $O(\log n)$ factor of the $\Omega(n \log (m/n))$ communication lower bound.
For identical additive valuations, we additionally obtain a $1/2$-MMS guarantee within the same query complexity.
For non-identical additive valuations, exploiting our EF1+$1/2$-MMS algorithm, we accelerate the existing matching-based PROP1+$1/2$-MMS framework, improving the query complexity from $O(n^4\log m)$ to $O(n^3\log m)$.
Finally, we study the structure of such certificates through $k$-witness EF1, a hierarchy between EF1 and EFX.
\end{abstract}

\section{Introduction}

Fair division of indivisible goods studies how to allocate a set $M$ of $m$ indivisible items among $n$ agents so as to satisfy desirable fairness guarantees such as envy-freeness and proportionality \cite{Foley67,Steinhaus48,varian1973equity}.
When items are indivisible, exact envy-freeness (EF) or proportionality (PROP) may not be satisfied; consequently, prominent relaxations have been proposed, including 
\emph{envy-freeness up to one good} (EF1)~\cite{Bud2011,LMMS2004}, which requires that each agent's envy toward another agent can be eliminated by removing \emph{some} good from the envied bundle;
\emph{proportionality up to one good} (PROP1)~\cite{CFS2017}, which requires that each agent attain her proportional share after receiving some good outside her bundle; and the \emph{maximin share} (MMS)~\cite{Bud2011}, which is the value that an agent could receive by partitioning the goods into $n$ bundles and receiving the worst one (an $\alpha$-MMS allocation guarantees each agent an $\alpha$ fraction of her MMS)~\cite{LMMS2004,CFS2017,kurokawa2018fair}.
For additive valuations, EF1 and PROP1 allocations always exist~\cite{Bud2011,LMMS2004}, while exact MMS allocations may not exist and approximation guarantees are studied instead.
These existence results, however, presuppose that the valuations are fully known.
In this paper, we ask how efficiently such allocations can be found when the algorithm can access preferences only through ordinal comparisons.

\paragraph{The comparison-based query model.}
We adopt the comparison-based query model introduced by Bu, Li, Liu, Song, and Tao~\cite{BLLST2024}.
Each agent $i$ has a private cardinal valuation $u_i$ over $M$, and an algorithm may only ask agent $i$ to compare two bundles $X$ and $Y$; the answer is a single bit, consistent with $u_i$ but with arbitrary tie-breaking when $u_i(X)=u_i(Y)$.
This model captures settings in which agents can rank items but cannot (or should not) report cardinal values.
This is a weak but standard elicitation model for fair division, and it is strictly weaker than the value-query model, since one comparison can be simulated by two value queries.
The model has a notable consequence: because of adversarial tie-breaking of equality, even verifying that a given allocation is EF1 can be information-theoretically impossible (see \Cref{prop:ef1-undecidable-ties}).
An algorithm therefore cannot proceed by guess-and-check.
This suggests constructing, along with the allocation, an explicit \emph{witness certificate} of fairness, that is, a designated good in each bundle whose removal is guaranteed to eliminate envy.
This certificate viewpoint is the common thread of this paper.
Our main algorithm maintains and repairs explicit witnesses through a recursive scaling process, and the same perspective yields further applications.

\paragraph{Communication and description complexity.}
Feige~\cite{Fei25} studies the communication complexity of fair division with arbitrary queries about valuations, and introduces the notion of \emph{description complexity}: the number of bits needed to write down an acceptable allocation.
The description complexity of EF1 allocations is low, because there always exists a contiguous EF1 allocation of the goods~\cite{BCFIMPVZ22,Iga23}, which can be described by the $n-1$ cut points together with a permutation assigning the resulting intervals to the agents.
Feige~\cite{Fei25} shows that, for identical valuations, communication complexity coincides with description complexity, and thus both are $O(n\log (m/n))$.
For general additive valuations, the description complexity is only $O(n\log m)$, while the best known communication upper bound remains $O(m\log m)$.
On the hardness side, Feige proves that finding a PROP1 allocation requires $\Omega(n\log (m/n))$ bits of communication, even for randomized protocols and identical binary valuations; the same lower bound holds for computing an exact MMS allocation~\cite{Fei25}.
Since EF1 implies PROP1 for additive valuations and each comparison query uses one bit, this lower bound applies to EF1 in the comparison-query model.
We remark that the lower bound holds only for PROP1 (hence EF1) and \emph{exact} MMS.
For $\alpha$-MMS with $\alpha < 1$, the best known lower bound is $\Omega(\log(m/n))$~\cite{BLLST2024}.

\paragraph{From description to discovery.}
For identical valuations, the communication complexity of $O(n \log (m/n))$ is achieved by a trivial protocol: one agent computes a contiguous EF1 allocation by the algorithm in~\cite{BCFIMPVZ22}, and announces it using $O(n\log(m/n))$ bits.
However, this protocol cannot be carried to the comparison-query model, because the algorithm never learns any cardinal value.
A similar situation appears in comparison-based sorting, where the sorted order can be announced with $\lceil\log_2 n!\rceil$ bits by anyone who knows it, yet the canonical question is whether pairwise comparisons can achieve it.
Our comparison-query model asks whether an algorithm can \emph{discover} such an allocation through the weaker but standard form of preference elicitation.
There is one more difficulty.
As noted above, EF1 cannot be verified from comparison queries, and the algorithm must also find witnesses that certify EF1.
This leads to the following guiding question:
\begin{quote}
    \emph{Can an algorithm find certified fair allocations using only comparison queries at a cost close to their information content?
    In particular, for EF1 under identical valuations, can one achieve query complexity close to the $\Theta(n\log(m/n))$ benchmark?}
\end{quote}
The identical-valuation case is especially clean because description and communication complexity coincide.

\paragraph{Known algorithmic bounds.}
Bu, Li, Liu, Song, and Tao~\cite{BLLST2024} initiated the algorithmic study of the comparison-query model. 
Motivated by applications where the number of goods $m$ is extremely large, they provide $O(n^2\log m)$-query algorithms for EF1 and PROP1 under identical additive valuations, and $O(n^4\log m)$-query algorithms for PROP1 and PROP1+$1/2$-MMS under non-identical additive valuations.
They also prove an $\Omega(\log(m/n))$ lower bound that holds for EF1, PROP1, and $\alpha$-MMS for every $\alpha > 0$, already when the number $n$ of agents is a constant.
They further showed that an EF1 allocation can be found with $O(\log m)$ comparison queries for $n\le 3$ even with general additive valuations.

These bounds leave two different gaps.
For identical valuations, while logarithmic dependence on $m$ is already known, the best comparison-query upper bound has an extra factor of $n$ over the $\Theta(n\log(m/n))$ description/communication bound.
Thus, in this case, improving the dependence on $n$ asks whether comparison queries can discover and certify an EF1 allocation with essentially the same number of bits as are needed just to describe one.

For non-identical valuations, the situation is more challenging.
Whether EF1 can be achieved with $\mathrm{poly}(n,\log m)$ comparison queries is open already for $n=4$.
In fact, this difficulty is not specific to comparison queries.
Even in the stronger communication model of Feige~\cite{Fei25}, the best upper bound for EF1 under general additive valuations is $O(m\log m)$ by round-robin, while the description complexity is only $O(n\log m)$.
Thus, closing the non-identical EF1 gap appears to require new ideas even before imposing the restriction to comparison queries.

\paragraph{Our main result.}
For identical valuations, we close the gap to the description complexity up to a single $O(\log n)$ factor.
\Cref{tab:main-results} summarizes our results.
Our main technical contribution is a certificate-preserving scaling framework: it recursively contracts pairs of goods into meta-goods, solves the smaller instance, and expands the solution one level at a time while repairing an explicit witness for envy elimination.
For arbitrary identical \emph{monotone} valuations, the framework computes a certified EF1 allocation using $O(n\log n\log(m/n))$ comparison queries (\Cref{thm:sendmaxtomin}), which matches the $\Omega(n\log(m/n))$ lower bound up to an $O(\log n)$ factor.
Moreover, the algorithm outputs an explicit witness for each bundle, which certifies EF1 even in the comparison-query model.
We call such an allocation \emph{1-witness EF1}\footnote{This solution concept was introduced by Conitzer, Freeman, Shah, and Wortman Vaughan~\cite{CFSV19} under the name \emph{strong EF1}, and the existence is already shown via the Nash welfare argument.
Our contribution is not existence but query-efficient construction. We adopt the witness terminology to emphasize the role of the designated goods and the extension to $k$ witnesses.}.
Since each comparison query uses one bit, our algorithm is also a communication protocol.

For identical \emph{additive} valuations, we strengthen the guarantee at no extra cost.
Within the same $O(n\log n\log(m/n))$ query complexity, we can compute an allocation that is simultaneously EF1 and $1/2$-TPS (\Cref{thm:ef1-halfmms}).
Here, TPS denotes the \emph{truncated proportional share}~\cite{BEF22}, lying between the MMS and PROP values.
In particular, the allocation is $1/2$-MMS.

\paragraph{Applications of the certificate machinery.}
The witness certificates produced by our framework can be used as algorithmic primitives for non-identical instances and as structural objects.

First, for \emph{non-identical} additive valuations, we accelerate the matching-based framework of~\cite{BLLST2024} for PROP1+$1/2$-MMS by one polynomial factor, from $O(n^4\log m)$ to $O(n^3\log m)$ queries.
Moreover, our guarantee is strengthened to PROP1 plus $1/2$-TPS (\Cref{thm:prop1-half-mms-non-identical}).
The speedup comes from an efficient construction of an underlying bipartite graph by using witness certificates for each agent obtained from our EF1 algorithm.
Note that our result achieves almost the same guarantee as Feige's Aprop protocol~\cite{Fei25} (see \Cref{sec:further-related-work}) with only comparison queries.

Second, we make partial progress on EF1 allocations under non-identical valuations by using witness detection.
If, for some unknown integer $P$, there exists a matching of each agent $i$ to a good $g$ with $u_i(g)>u_i(M)/(nP)$, then we can compute a 1-witness EF1 allocation using $O(\mathrm{poly}(n,P)\log m)$ queries (\Cref{prop:non-identical-EF1-heavy-anchors}).
A similar argument yields an $O(\mathrm{poly}(n,K)\log m)$-query algorithm for binary valuations with hidden support size at most $K$ (\Cref{prop:non-identical-EF1-binary}).

Third, we study the structure of witness certificates through a hierarchy called \emph{$k$-witness EF1}.
This notion requires that the last $k$ goods of every bundle must all serve as EF1 witnesses.
While our algorithms above construct 1-witness certificates, the hierarchy asks how much fairness is guaranteed when more goods in each bundle are certified as envy-eliminating witnesses. 

Note that $m$-witness EF1 coincides with envy-freeness up to \emph{any} good (EFX)~\cite{GMT2014,PR2020}, which requires that envy is eliminated by the removal of \emph{any} good from the envied bundle.
Thus, the hierarchy interpolates between 1-witness EF1 and EFX.
We show that, under general additive valuations, every 2-witness EF1 allocation is $1/2$-TPS (\Cref{thm:2-witness-implies-half-mms}), and the TPS guarantee cannot be improved for $k$-witness EF1 with $k\ge 2$.
We also show that every 3-witness EF1 allocation is $4/7$-MMS (\Cref{thm:3-witness-implies-47-mms}), and the factor cannot exceed $10/17$ (\Cref{ex:3-witness-example}).
We note that our lower bound on the approximation ratio matches the current best guarantee for EFX~\cite{ABM2018}, and explicit witnesses form a structural relaxation of EFX that already captures its known MMS guarantee at $k=3$.
Thus, the witness viewpoint is also a structural relaxation of EFX.

\begin{table}[ht]
    \centering
    \small
    \caption{
        Algorithms for fair division of indivisible goods in the comparison-based query model. The lower bound $\Omega(n\log (m/n))$ of~\cite{Fei25} applies to all rows, since EF1 implies PROP1 under additive valuations.
        We note that $1/2$-TPS implies $1/2$-MMS.
        For $\alpha$-MMS with $\alpha<1$, only the weaker lower bound $\Omega(\log(m/n))$ of~\cite{BLLST2024} is known.
        We also obtain parameterized EF1 algorithms for restricted non-identical settings (\Cref{prop:non-identical-EF1-heavy-anchors,prop:non-identical-EF1-binary}).
    }
    \label{tab:main-results}
    \begin{tabular}{lllll}
        \toprule
        Algorithm & Identical? & Additive? & Fairness & \makecell[l]{Query Complexity} \\
        \midrule
        \makecell[l]{\cite{BLLST2024}} & Identical & Additive & EF1 & $O(n^2 \log m)$ \\
        \hline
        \makecell[l]{\textbf{Ours} \\ (\Cref{thm:sendmaxtomin})} & Identical & \textbf{Monotone} & EF1 & $\boldsymbol{O(n \log n \log (m/n))}$ \\
        \hline
        \makecell[l]{\textbf{Ours} \\ (\Cref{thm:ef1-halfmms})} 
        & Identical & Additive & \makecell[l]{EF1 \\ + \textbf{$1/2$-TPS}} & $\boldsymbol{O(n \log n \log (m/n))}$ \\
        \hline
        \hline
        \makecell[l]{\cite{BLLST2024}} & Non-identical & Additive & \makecell[l]{PROP1 \\ + $1/2$-MMS} & $O(n^4 \log m)$ \\
        \hline
        \makecell[l]{\textbf{Ours} \\ (\Cref{thm:prop1-half-mms-non-identical})} 
        & Non-identical & Additive & \makecell[l]{PROP1 \\ + \textbf{$1/2$-TPS}} & $\boldsymbol{O(n^3 \log m)}$ \\
        \bottomrule
    \end{tabular}
\end{table}

\paragraph{Optimality and Lower bounds.}
For identical valuations, the communication lower bound of $\Omega(n\log(m/n))$ also applies to comparison queries, while our upper bound is $O(n\log n\log(m/n))$.
In our framework, the remaining $O(\log n)$ factor comes from a sorting barrier in witness repair tasks within each scaling level (Appendix~\ref{app:witness-repair-lb}).
This suggests that removing the remaining $O(\log n)$ factor might need a different approach.
Under the non-identical setting, our $O(n^3\log m)$ bound for PROP1 should be contrasted with the same $\Omega(n\log(m/n))$ lower bound, and no $\mathrm{poly}(n,\log m)$ upper bound for EF1 with $n\ge 4$ agents is known.
Even in Feige's stronger communication model, the best known EF1 protocol for general additive valuations uses $O(m\log m)$ bits via round-robin, while the existence of $O(n\log m)$-bit descriptions is implied from contiguous EF1 allocations~\cite{Iga23}.

\subsection{Technical Overview}
Recall two features of the model.
First, as the target query bound is $O(\mathrm{poly}(n, \log m))$, almost all goods never appear in any query as individuals.
This rules out simulating standard algorithms such as round-robin or envy-cycle elimination, which scan every good.
Second, because EF1 cannot be verified from comparisons, we need to construct an explicit witness to eliminate envy throughout; the technical work lies in repairing witnesses cheaply.

Our main technical tool is a ``scaling'' framework that repeatedly pairs goods and contracts each pair into a single meta-good, shrinking the instance size geometrically.
At each level, we solve an intermediate problem of finding a 1-witness EF1 allocation using a transfer procedure, \textsf{SendMaxToMin}, which moves a designated ``last'' good from an unmarked bundle with the current maximum value to the minimum bundle whenever the comparison queries indicate that the transfer does not decrease the minimum of the two.
The obstacle is that expanding a witness meta-good into two (meta-)goods may destroy the witness certificate; the last good of a bundle may no longer certify EF1 after expansion.
The key insight is that, after expanding one level of scaling, removing the last two goods from any bundle never yields a bundle higher than the current minimum.
This condition allows the witnesses to be repaired locally.
A refined transfer procedure, $\SendMaxToMin'$, marks both the sender and receiver bundles.
Each iteration newly marks at least one bundle, and hence there are at most $n$ iterations per level.
Each transfer uses $O(\log n)$ comparison queries to reinsert the modified bundles into the sorted order of bundles.
Thus, the repair phase finishes in $O(n \log n)$ comparison queries per scaling level, yielding the overall $O(n \log n \log (m/n))$ bound.
We remark that the procedure $\SendMaxToMin$ can also find an EF1 allocation, but it is not efficient on its own.

A 1-witness EF1 allocation can be used to identify ``high-value'' goods (\Cref{lem:large-goods-in-witnesses}).
In a 1-witness EF1 allocation with $k$ bundles for an identical additive valuation $u$, any singleton good whose value exceeds the minimum bundle value must itself be a designated witness for EF1.
Since the minimum bundle has value at most $u(M)/k$, the witness set contains all goods of value higher than $u(M)/k$.

We utilize this fact to obtain the guarantee of EF1 plus $1/2$-TPS for identical additive valuations.
Appropriately dealing with high-value goods is necessary for $1/2$-TPS.
We first identify a set $W_0$ containing all goods with values higher than $u(M)/(2n+1)$.
We then allocate only these witness goods using the Egal-Sequential algorithm~\cite{AR2020} (whose output is EFX and hence $1$-witness EF1), append the remaining goods, and finally re-run the scaling-based EF1 routine while ensuring that witness goods stay in their assigned bundles.
This yields an EF1 allocation where every good with value higher than $u(M)/n$ forms a singleton bundle, which guarantees the $1/2$-TPS property.

For the non-identical additive case, we adopt the high-level strategy introduced in \cite{BLLST2024}: construct a bipartite graph between agents and candidate bundles which we call the \emph{want-this graph}, and select a non-empty matching via a \emph{Hall matching} argument.
Our contribution is to speed up the construction of the want-this graph.
By reusing our 1-witness EF1 algorithm, we preprocess each agent $i$ to obtain a two-field certificate $(T_i,g_i)$.
The threshold bundle $T_i$ quickly certifies that a candidate bundle is non-PROP (the PROP condition is not satisfied) for agent $i$; the witness good $g_i$ certifies that every candidate bundle not containing $g_i$ and not below the threshold is PROP1, with $g_i$ serving as the witness.
This allows most edge decisions in the want-this graph to be made using only $O(1)$ comparisons.
The only exceptional case is when the candidate bundle contains the single witness good $g_i$.
This can happen for at most one bundle per agent in each iteration, and in that case, we invoke a subroutine for deciding PROP1 versus non-PROP given in \cite{BLLST2024}.

To guarantee simultaneously PROP1 and $1/2$-TPS for non-identical additive valuations, we modify the above algorithm by replacing the $1$-witness EF1 allocation used for threshold computation with a $1$-witness EF1 plus $1/2$-TPS allocation, and prioritizing the assignment of singletons that satisfy the PROP1 condition.
This approach strengthens the $1/2$-MMS guarantee of~\cite{BLLST2024} to $1/2$-TPS.

\Cref{lem:large-goods-in-witnesses} also gives partial EF1 results for non-identical additive valuations.
By witness detection, we can extract high-value goods for each agent.
We guess a scale $q$ by doubling trick, and we run separately $\ScalingSendMaxToMin$ for $q$ bundles with respect to each agent's valuation.
Once $q$ becomes large enough, the witness sets contain all high-value goods for every agent.
Thus, if such goods exist, we can assign such goods to agents so that the goods can be designated witnesses and the remaining goods have bounded values.
Note that those goods alone do not guarantee EF1, and also the remaining goods must be compressed.
By using the matched high-value goods as thresholds, we can compress the remaining goods into a small number of meta-goods.
Then the round-robin algorithm for the high-value goods and meta-goods yields a 1-witness EF1 allocation.

The final part of the paper asks what maintaining two or three witnesses per bundle would imply.
Our proof of the $4/7$-MMS guarantee for 3-witness EF1 (\Cref{thm:3-witness-implies-47-mms}) shares its essential idea with the argument for EFX in~\cite{ABM2018}.
Both arguments proceed by repeatedly reducing a given allocation, and they differ in the operation applied to bundles of size two.

\subsection{Further related work}\label{sec:further-related-work}
A classical way to model incomplete preferences is the value-query model, where the algorithm may ask for the numerical value $u_i(S)$ of a bundle $S$~\cite{LMMS2004,OPS2021}.
Query complexity varies dramatically across fairness notions in this model.
Oh, Procaccia, and Suksompong~\cite{OPS2021} showed that for two agents with monotone valuations an EF1 allocation can be found with $O(\log m)$ value queries, with polylogarithmic bounds for three agents, whereas EFX requires exponentially many value queries already for two agents in general classes~\cite{PR2020}, and linearly many for two agents with identical additive valuations~\cite{OPS2021}.
Communication complexity has also been studied for discrete fair division~\cite{PR2020b} and cake cutting~\cite{BN2019}.

Our PROP1+$1/2$-TPS result can be compared with Feige's Aprop protocol~\cite{Fei25}, which achieves PROP1 together with $\frac{n}{2n-1}$-TPS using $O(n\log n\log m)$ bits of communication (and expected $O(n\log m)$ with randomization).
However, these protocols use stronger queries that ask agents to compare valuations rescaled by specific factors, which comparison queries cannot do.
 
Many existence proofs for EF1 rest on sequential picking rules such as round-robin, but correctly implementing such rules from ordinal information is itself costly.
It is shown in~\cite{LMSS24} that reproducing the round-robin outcome from item comparisons requires $\Omega(nm+m\log m)$ queries even with randomization.
This contrast further illustrates why near-logarithmic dependence on $m$ requires algorithms designed natively for the comparison-query model, rather than simulations of classical fair-division algorithms.


\section{Preliminaries}

In this section, we introduce the fair allocation problem under the comparison-based query model.

We are given a set of $n$ agents labeled $1, 2, \dots, n$, and a set $M = \{g_1, g_2, \dots, g_m\}$ of $m$ goods. We denote the set of agents by $[n] \coloneqq \{1, 2, \dots, n\}$.
A subset of $M$ is called a \emph{bundle}.
    
An \emph{allocation} is a partition of the goods into $n$ bundles, denoted by $\A = (A_1, A_2, \dots, A_n)$, where a bundle $A_i$ is to be allocated to agent $i$ for each $i \in [n]$.
We note that $A_i \cap A_j = \emptyset$ for any distinct $i, j\in [n]$ and $\bigcup_{i\in [n]} A_i = M$.
A subpartition of $M$, in which some goods remain unallocated, is called a \emph{partial allocation}.

Each agent $i \in [n]$ has a \emph{valuation function} $u_i\colon 2^M \to \mathbb{R}_{\ge 0}$ that represents how much the agent values a given bundle.
For every good $g \in M$ and a valuation function $u$, we abbreviate $u(\{g\})$ as $u(g)$. 
A valuation function $u$ is said to be \emph{normalized} if $u(\emptyset) = 0$, and \emph{monotone} if for all $X, Y \subseteq M$, if $X \subseteq Y$, then $u(X) \le u(Y)$.
In this paper, we assume that each $u_i$ is normalized and monotone.

A valuation function $u_i$ is said to be \emph{additive} if $u_i(X) = \sum_{g \in X} u_i(g)$ for all $X \subseteq M$.
Valuation functions $u_1, \dots, u_n$ are said to be \emph{identical} if there exists a valuation function $u$ such that $u_i = u$ for all $i \in [n]$.
We sometimes assume that valuation functions are additive or identical.

The goal of our problem is to find an allocation $\A = (A_1, A_2, \dots, A_n)$ that satisfies fairness criteria, which will be defined later.

For a bundle $X \subseteq M$ and a good $x \in M$, we write $X + x$ for $X \cup \{x\}$ and $X - x$ for $X \setminus \{x\}$.

For the sake of exposition, we treat bundles as ordered lists.
For a nonempty bundle $X$, we denote the last good of $X$ by $\last(X)$, and the $k$-th good from the end by $\last_k(X)$.


\subsection{Comparison-Based Query Model}

We adopt the comparison-query model introduced by \cite{BLLST2024} for the fair allocation of indivisible goods.
In this model, valuation functions are not directly provided as input and can only be accessed via \emph{comparison-based queries}.

\begin{definition}[Comparison-Based Query]
    The algorithm can access agents' valuations via the following query:
    \begin{quote}
        $\Comp_i(X, Y)$: Given two bundles $X, Y \subseteq M$ (possibly overlapping), query agent $i$ to determine which bundle they prefer.
        The query returns $\true$ if $u_i(X) < u_i(Y)$, returns $\false$ if $u_i(X) > u_i(Y)$, and may return either $\true$ or $\false$ if $u_i(X) = u_i(Y)$.
    \end{quote}
    \noindent When the valuations are identical, we abbreviate $\Comp_i$ as $\Comp$.
\end{definition}
In this model, we only know whether $u_i(X)\le u_i(Y)$ or $u_i(X)\ge u_i(Y)$. Thus, the algorithm cannot certify $u_i(X)= u_i(Y)$, nor can it distinguish a preference with ties from a strict one.
The answers of queries can form a cycle of inequalities, such as $\Comp_i(X, Y)=\true$, $\Comp_i(Y, Z)=\true$ and $\Comp_i(Z, X)=\true$, but the cycle means $u_i(X)=u_i(Y)=u_i(Z)$.
Thus, we can correctly sort goods or bundles with comparison queries.

Throughout this paper, we assume that comparison queries dominate the computation time, and we therefore evaluate only the query complexity.
We manage bundles using explicit ordered lists, and do not focus on the running time.
For \Cref{alg:prop1-non-identical,alg:prop1-half-mms-non-identical}, the time complexity of bipartite matching may become the bottleneck.


\subsection{Fairness Criteria}

We now introduce several fairness criteria.
The most fundamental fairness criteria are envy-freeness (EF) and proportionality (PROP).
\begin{definition}[EF]
    An allocation $\A = (A_1, A_2, \dots, A_n)$ is said to be \emph{envy-free} (EF) if 
    $u_i(A_i) \ge u_i(A_j)$ for all agents $i, j \in [n]$.
\end{definition}

\begin{definition}[PROP]
    An allocation $\A = (A_1, A_2, \dots, A_n)$ is said to be \emph{proportional} (PROP) if 
    $u_i(A_i) \ge u_i(M) / n$ for all agents $i \in [n]$.
\end{definition}

It is well known that allocations satisfying EF or PROP may not always exist.
When there are two agents and one good with a positive value, allocating the good to either agent is not EF nor PROP.
To address this issue, relaxed fairness notions that allow for the margin of one good have been proposed.

\begin{definition}[EF1~\cite{LMMS2004,Bud2011}]
    An allocation $\A = (A_1, A_2, \dots, A_n)$ is said to be \emph{envy-free up to one good} (EF1) if for all agents $i, j \in [n]$, either $u_i(A_i) \ge u_i(A_j)$ or there exists a good $g \in A_j$ such that $u_i(A_i) \ge u_i(A_j - g)$.
\end{definition}

\begin{definition}[PROP1~\cite{CFS2017}]
    An allocation $\A = (A_1, A_2, \dots, A_n)$ is said to be \emph{proportional up to one good} (PROP1) if for all agents $i \in [n]$, either $u_i(A_i) \ge u_i(M) / n$ or there exists a good $g \in M \setminus A_i$ such that $u_i(A_i + g) \ge u_i(M) / n$.
\end{definition}

Under additive valuations, any EF1 allocation $\A$ is PROP1.
Indeed, if $u_i(A_i)<u_i(M)/n$, then some bundle $A_j$ satisfies $u_i(A_j)>u_i(M)/n$.
Since $\A$ is EF1, there exists $g\in A_j$ such that $u_i(A_i)\ge u_i(A_j-g)$, and hence, $u_i(A_i+g)\ge u_i(A_j)>u_i(M)/n$ holds.

It is known that an EF1 (and hence PROP1) allocation always exists.
In particular, the envy-cycle elimination algorithm~\cite{LMMS2004} finds an EF1 allocation using $O(mn^3)$ comparison queries.

We provide the following observation on PROP1 allocations, which will be used later.
\begin{lemma}
\label{lem:prop1-singleton}
    Assume that valuations are additive.
    Let $\A = (A_1, A_2, \dots, A_n)$ be an allocation satisfying $|A_n| = 1$.
    Fix an agent $i \in [n-1]$.
    If $A_i$ satisfies either $u_i(A_i) \ge u_i(M \setminus A_n) / (n - 1)$ or there exists a good $g \in M \setminus (A_n\cup A_i)$ such that $u_i(A_i + g) \ge u_i(M \setminus A_n) / (n - 1)$, then $A_i$ also satisfies either $u_i(A_i) \ge u_i(M) / n$ or there exists a good $g' \in M \setminus A_i$ such that $u_i(A_i + g') \ge u_i(M) / n$.
\end{lemma}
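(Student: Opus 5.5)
Write $a = u_i(M\setminus A_n)$ and $b = u_i(A_n)$, so that $u_i(M) = a + b$. The hypothesis says that agent $i$ reaches the reduced proportional share $a/(n-1)$, either with $A_i$ alone or with $A_i + g$ for some $g\in M\setminus(A_n\cup A_i)$. The conclusion asks for the full share $(a+b)/n$, possibly after adding one good $g'\notin A_i$. I will split into two cases according to how $b$ compares with $a/(n-1)$. In each case the natural candidate for $g'$ is either the unique good of $A_n$ or the good $g$ from the hypothesis. Since $i \le n-1$, the unique good of $A_n$ lies in $M\setminus A_i$, and by additivity its value is $b$.

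\textbf{Case 1: $b \le a/(n-1)$.} Then $(a+b)/n \le \bigl(a + a/(n-1)\bigr)/n = a/(n-1)$. So the hypothesis carries over directly. If $u_i(A_i)\ge a/(n-1)$, then $u_i(A_i)\ge u_i(M)/n$. Otherwise the hypothesis gives $g\in M\setminus(A_n\cup A_i)\subseteq M\setminus A_i$ with $u_i(A_i+g)\ge a/(n-1)\ge u_i(M)/n$, and we take $g'=g$.

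\textbf{Case 2: $b > a/(n-1)$.} Take $g'$ to be the unique good of $A_n$, so $u_i(A_i+g') = u_i(A_i) + b$. We need $u_i(A_i) + b \ge (a+b)/n$. Rearranging, this is $u_i(A_i) + \frac{n-1}{n}b \ge \frac{a}{n}$. It holds even when $u_i(A_i)=0$, because $\frac{n-1}{n}b > \frac{n-1}{n}\cdot\frac{a}{n-1} = \frac{a}{n}$. In this case the hypothesis on $A_i$ is not even needed.

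I expect no real obstacle here. The only points requiring care are that $i\ne n$, so that the good of $A_n$ is available as $g'$, and the elementary inequality comparing $a/(n-1)$ with $(a+b)/n$, which is what drives the case split. The degenerate case $n=1$ cannot occur, since $i\in[n-1]$ requires $n\ge 2$.
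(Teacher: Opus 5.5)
Your proof is correct and follows the paper's argument. Your threshold $b \le a/(n-1)$ is equivalent to the paper's case split $u_i(A_n) \le u_i(M)/n$, and in each case you choose $g'$ the same way: the hypothesis good $g$ when $u_i(A_n)$ is small, and the unique good of $A_n$ otherwise.
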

\begin{proof}
    If $u_i(A_n) \le u_i(M) / n$, then $u_i(M \setminus A_n) / (n - 1) \ge u_i(M) / n$, which implies the claim trivially.
    Otherwise, choosing $g' \in A_n$ yields $u_i(A_i + g') \ge u_i(M) / n$, and thus the claim holds.
\end{proof}

EFX is a stronger fairness notion derived from EF1. 
When valuations are identical, an EFX allocation always exists~\cite{PR2020}.
For additive valuations, the existence of EFX allocations is a major open problem~\cite{Ama2023}; recently, counterexamples were found for general monotone valuations~\cite{AMMSW26} and, with a compact construction, for submodular valuations~\cite{MS26}.

\begin{definition}[EFX~\cite{GMT2014,PR2020}]
    An allocation $\A = (A_1, A_2, \dots, A_n)$ is said to be \emph{envy-free up to any good} (EFX) if for all agents $i, j \in [n]$ and for all goods $g \in A_j$, we have $u_i(A_i) \ge u_i(A_j - g)$.
\end{definition}

For additive valuations, another natural fairness criterion is the maximin share.
\begin{definition}[Maximin Share Fairness~\cite{Bud2011}]
    For an agent $i \in [n]$, the \emph{maximin share} of agent $i$ is defined as
    \begin{align*}
        \mu_i^n(M) \coloneqq \max_{\B \in \Pi_n(M)} \min_{B \in \B} u_i(B),
    \end{align*}
    where $\Pi_n(M)$ is the set of all possible partitions of $M$ into $n$ bundles.
    Clearly, $\mu_i^n$ satisfies $\mu_i^n(M) \le u_i(M) / n$.
    When valuations are identical, we omit the subscript $i$ and write $\mu^n(M)$.
    
    An allocation $\A = (A_1, A_2, \dots, A_n)$ is said to be \emph{maximin share fair} (MMS) if 
    $u_i(A_i) \ge \mu_i^n(M)$ for all agents $i \in [n]$.
    For $\alpha \in (0, 1]$, an allocation $\A = (A_1, A_2, \dots, A_n)$ is said to satisfy $\alpha$-MMS if 
    $u_i(A_i) \ge \alpha \cdot \mu_i^n(M)$ for all agents $i \in [n]$.
\end{definition}

An exact MMS allocation may not exist.
For existence of $\alpha$-MMS allocations, the best possible approximation ratio is known to lie between $7/9$~\cite{HZ2025} and $39/40$~\cite{FST21}.

For additive valuations, we also use the following refinement of the proportional share.
\begin{definition}[Truncated Proportional Share~\cite{BEF22}]
    For an agent $i \in [n]$ and a value $\tau \ge 0$, let 
    $L_i(\tau) \coloneqq \{g \in M \mid u_i(g) > \tau\}$.
    The \emph{truncated proportional share} of agent $i$ is defined as
    \begin{align*}
        \TPS_i^n(M) \coloneqq
        \max \left\{ \tau \ge 0 \ \middle|\
        |L_i(\tau)| < n
        \text{ and }
        u_i(M \setminus L_i(\tau)) \ge (n - |L_i(\tau)|)\tau
        \right\}.
    \end{align*}
    When valuations are identical, we omit the subscript $i$ and write $\TPS^n(M)$.
    For $\alpha \in (0,1]$, an allocation $\A = (A_1,A_2,\dots,A_n)$ is said to satisfy $\alpha$-TPS if
    $u_i(A_i) \ge \alpha \cdot \TPS_i^n(M)$ for all agents $i \in [n]$.
\end{definition}

It is known that $\mu_i^n(M) \le \TPS_i^n(M) \le u_i(M)/n$ for additive valuations~\cite{BEF22}.
We will use the following property of the maximin share in later proofs.
The truncated proportional share satisfies the analogous property.
While \Cref{lem:TPS-singleton} is stated in the full version of~\cite{BEF22} (within the proof of Proposition~20, without an explicit proof), we include a proof for completeness.

\begin{lemma}[\cite{BL2016,AMNS2017}]
\label{lem:MMS-singleton}
    For any good $g \in M$ and $n \ge 2$, we have $\mu_i^{n-1}(M - g) \ge \mu_i^{n}(M)$ for any agent $i\in [n]$.
\end{lemma}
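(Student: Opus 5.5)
The plan is to argue directly from an optimal maximin partition for $n$ agents. Fix agent $i$ and let $\B = (B_1, \dots, B_n) \in \Pi_n(M)$ attain $\mu_i^n(M)$, so $u_i(B_k) \ge \mu_i^n(M)$ for every $k$. Let $B_n$ be the bundle containing $g$; after relabeling this costs nothing. The goal is a partition of $M - g$ into $n-1$ bundles, each of value at least $\mu_i^n(M)$ under $u_i$. Then the definition of $\mu_i^{n-1}(M-g)$ as a maximum over $\Pi_{n-1}(M-g)$ gives the claim immediately.

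The construction is to delete $g$ from $B_n$ and merge the remainder $B_n - g$ into one of the other bundles, say $B_1$. This gives the partition $(B_1 \cup (B_n - g), B_2, \dots, B_{n-1})$ of $M - g$ into $n-1$ bundles, and it is a valid partition because $n \ge 2$ guarantees $B_1$ exists. Bundles $B_2, \dots, B_{n-1}$ are unchanged, so each still has value at least $\mu_i^n(M)$. The merged bundle contains $B_1$, so by monotonicity of $u_i$ we get $u_i(B_1 \cup (B_n - g)) \ge u_i(B_1) \ge \mu_i^n(M)$. Only monotonicity is used, so the argument holds for monotone valuations in general, and in particular for the additive ones where the lemma is applied.

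Hence
\[
\mu_i^{n-1}(M-g) \ge \min_{k} u_i(\text{$k$-th bundle}) \ge \mu_i^n(M).
\]
I see no real obstacle. The only points needing care are the convention that a partition may contain empty bundles, which keeps $\Pi_{n-1}(M-g)$ nonempty and covers the case $B_n = \{g\}$, and the degenerate case $n = 2$. When $n = 2$ the new partition is the single bundle $M - g \supseteq B_1$, and the same monotonicity step applies.
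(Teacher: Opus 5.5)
Your proof is correct. The paper states this lemma without proof and cites the literature for it, and your argument is exactly the standard one behind the cited result: take a maximin $n$-partition, delete $g$ from its bundle, merge the remainder into another bundle, and invoke only monotonicity.
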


\begin{lemma}[\cite{BEF22}]\label{lem:TPS-singleton}
    For any good $g \in M$ and $n \ge 2$, we have $\TPS_i^{n-1}(M - g) \ge \TPS_i^n(M)$ for any agent $i\in [n]$.
\end{lemma}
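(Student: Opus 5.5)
Let $\tau \coloneqq \TPS_i^n(M)$, which is attained as a maximum, and let $L \coloneqq L_i(\tau)$, so that $|L|<n$ and $u_i(M\setminus L)\ge (n-|L|)\tau$. Set $M' \coloneqq M-g$ and $L' \coloneqq \{h\in M' \mid u_i(h)>\tau\} = L - g$. Since $\TPS_i^{n-1}(M')$ is a maximum over feasible thresholds, it is enough to show that $\tau$ itself is feasible for $(n-1,M')$. That means checking two things: $|L'|<n-1$, and $u_i(M'\setminus L')\ge (n-1-|L'|)\tau$. If $\tau=0$ this is trivial except for the cardinality condition, so that case is handled together with the general argument below. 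I split according to whether $g\in L$.

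\emph{Case $g\in L$.} Here $L'=L-g$, so $|L'|=|L|-1<n-1$. Also $M'\setminus L'=M\setminus L$, hence
\begin{align*}
u_i(M'\setminus L') = u_i(M\setminus L) \ge (n-|L|)\tau = (n-1-|L'|)\tau .
\end{align*}
Both conditions hold, so $\tau$ is feasible and the claim follows.

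\emph{Case $g\notin L$.} Now $L'=L$ and $u_i(g)\le\tau$, so
\begin{align*}
u_i(M'\setminus L) = u_i(M\setminus L)-u_i(g) \ge (n-|L|)\tau-\tau = (n-1-|L|)\tau ,
\end{align*}
using additivity. The value condition therefore holds. The delicate point is the cardinality requirement $|L|<n-1$, which can fail when $|L|=n-1$.

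Suppose $|L|=n-1$. Then $n-1-|L|=0$, and the condition needed is $|L|<n-1$, which is false. I would then show that a slightly larger threshold is feasible for $(n-1,M')$, or else relax the condition directly. Let $\tau' \coloneqq \min_{h\in L} u_i(h) > \tau$. For any $\sigma\in[\tau,\tau')$ we have $L_i(\sigma)=L$, so the cardinality problem persists. Instead take $\sigma=\tau'$, which drops at least one good of $L$ from the large set. With $L''=\{h\in M' \mid u_i(h)>\tau'\}$ we get $|L''|\le n-2$. We then need
\begin{align*}
u_i(M'\setminus L'')\ge (n-1-|L''|)\tau' .
\end{align*}
The left side contains the goods of $L$ with value exactly $\tau'$, each contributing $\tau'$, and their number is $(n-1)-|L''|$. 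So the inequality holds and $\TPS_i^{n-1}(M')\ge\tau'>\tau$. If instead $L=\emptyset$ with $n-1\ge1$, the cardinality is fine.

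The main obstacle is exactly this boundary case $|L|=n-1$, $g\notin L$, where $\tau$ itself is infeasible for $n-1$. The repair is to move the threshold up to the smallest large value, as above, using that those minimal goods alone already supply the required mass. I would also double-check the $\tau=0$ edge case, which is covered by the same case analysis.
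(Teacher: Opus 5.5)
Your proof is correct. The cases $g\in L$, and $g\notin L$ with $|L|<n-1$, match the paper's argument essentially line by line. The difference is the boundary case $g\notin L$, $|L|=n-1$.

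\textbf{The paper's route.} It argues through the maximin share. $M-g$ still contains the $n-1$ goods of $L$, each worth more than $\tau$. Putting one of them in each of $n-1$ bundles gives $\mu_i^{n-1}(M-g)>\tau$. The paper then invokes the known inequality $\TPS_i^{n-1}(M-g)\ge\mu_i^{n-1}(M-g)$ from \cite{BEF22}.

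\textbf{Your route.} You verify the definition of TPS directly at the threshold $\tau'=\min_{h\in L}u_i(h)$. The goods of $L\setminus L''$ each have value exactly $\tau'$, and there are $n-1-|L''|$ of them, so they already supply the required mass.

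\textbf{What each buys.} Your argument is self-contained: it uses only additivity, nonnegativity and the definition, not the cited MMS--TPS comparison. It also gives the explicit bound $\TPS_i^{n-1}(M-g)\ge\min_{h\in L}u_i(h)$. The paper's argument is shorter once $\mu\le\TPS$ is taken as known.

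\textbf{Two small points.}
\begin{enumerate}
\item State explicitly that $L''\subseteq L$ because every good outside $L$ has value at most $\tau<\tau'$. Both $|L''|\le n-2$ and your count of goods of value exactly $\tau'$ rely on this.
\item In this case $|L|=n-1\ge 1$ since $n\ge 2$, so $L\neq\emptyset$ and $\tau'$ is well defined. Your aside about $L=\emptyset$ (and the remark about $\tau=0$) is therefore unnecessary.
\end{enumerate}
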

\begin{proof}
    Let $\tau = \TPS_i^n(M)$ and let $L = L_i(\tau)$.
    We show that $\TPS_i^{n-1}(M-g) \ge \tau$.
    If $g \in L$, then the goods with value greater than $\tau$ in $M-g$ are exactly $L-g$.
    Hence, by the definition of TPS, 
    \begin{align*}
        u_i((M-g)\setminus (L-g))
        =
        u_i(M\setminus L)
        \ge (n-|L|)\tau
        =
        ((n-1)-(|L|-1))\tau .
    \end{align*}
    If $g \notin L$, then the goods with value greater than $\tau$ in $M-g$ are exactly $L$.
    If $|L| = n-1$, then $M-g$ contains the $n-1$ goods in $L$, each having value greater than $\tau$.
    Hence $\mu_i^{n-1}(M-g) > \tau$, and since $\TPS_i^{n-1}(M-g) \ge \mu_i^{n-1}(M-g)$, we obtain $\TPS_i^{n-1}(M-g) > \tau$.
    It remains to consider the case $|L| < n-1$.
    Since $u_i(g) \le \tau$, we have
    \begin{align*}
        u_i((M-g)\setminus L)
        =
        u_i(M\setminus L)-u_i(g)
        \ge (n-|L|)\tau-\tau
        =
        ((n-1)-|L|)\tau .
    \end{align*}
    Thus, in either case, $\TPS_i^{n-1}(M-g) \ge \tau=\TPS_i^n(M)$.
\end{proof}


\section{EF1 Allocation for Identical and Monotone Valuations}

In this section, we present algorithms to compute EF1 allocations when all agents share the same valuation function.
Bu, Li, Liu, Song, and Tao~\cite{BLLST2024} showed that for identical additive valuations, EF1 can be found with $O(n^2\log m)$ comparison queries.

Our main result is an $O(n\log n\log (m/n))$-query algorithm that finds an EF1 allocation under merely monotone identical valuations (Theorem~\ref{thm:sendmaxtomin}), improving the known bounds and removing the additivity assumption.
The key ingredients are a scaling framework that merges goods into meta-goods across levels, and a transfer routine called \textsf{SendMaxToMin} that rebalances bundles efficiently using only comparisons.

We first observe that, because equality is not observable and ties may be broken arbitrarily, we cannot determine deterministically even whether a given allocation is EF1 or not.
\begin{proposition}
\label{prop:ef1-undecidable-ties}
    In the comparison-query model with arbitrary tie-breaking on equal values, no deterministic algorithm can decide whether a given allocation is EF1, even for two agents with identical additive valuations.
\end{proposition}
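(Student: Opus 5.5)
The plan is an adversary (indistinguishability) argument. I will build two instances with identical additive valuations and a single fixed allocation such that every comparison query can be answered consistently with both instances, while the allocation is EF1 in one and not in the other. Any deterministic algorithm receives the same answers on both, takes the same branch, and gives the same output, so it is wrong on one of them.

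Concretely, take two agents and three goods $a,b,c$, and the allocation $A_1=\{a\}$, $A_2=\{b,c\}$.
\begin{itemize}
\item In instance $I_1$, set $u(a)=1$ and $u(b)=u(c)=1$. The allocation is EF1: agent $1$ has $u(A_1)=1\ge u(A_2-b)=1$, and agent $2$ does not envy agent $1$.
\item In instance $I_2$, set $u(a)=1-\delta$ and $u(b)=u(c)=1$ for a small $\delta>0$. Now agent $1$ envies agent $2$ even after removing either good, since $u(A_2-g)=1>1-\delta$. So the allocation is not EF1.
\end{itemize}

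Perturbing $a$ alone is not enough. The query $\Comp(\{a\},\{b\})$ is a tie in $I_1$ but strict in $I_2$, so it has to be answered $\true$ in both, and that is consistent. The real difficulty is a query such as $\Comp(\{a,b\},\{c\})$. In $I_1$ its value is $2$ versus $1$, and in $I_2$ it is $2-\delta$ versus $1$. Both are strict with the same sign, so that one is fine. The general requirement is this: for every pair of bundles, either the comparison has the same strict sign in both instances, or it is a tie in at least one instance and the tie is broken toward the other instance's strict answer.

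To check this, I will compare bundle values. In $I_1$ every bundle's value equals its cardinality. In $I_2$ the value is the cardinality minus $\delta$ times $[a\in X]$. So if $|X|\neq|Y|$, the sign is the same in both instances whenever $\delta<1$. If $|X|=|Y|$, then $I_1$ has a tie, and the adversary answers according to $I_2$. That answer is a legal tie-break in $I_1$, and it is exactly correct in $I_2$.

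The adversary therefore answers every query by $I_2$'s true values. This is consistent with $I_2$ and with $I_1$ under tie-breaking. The transcript is identical in both instances, so the deterministic algorithm outputs the same verdict in both, and the verdict is wrong for one of them. This proves \Cref{prop:ef1-undecidable-ties}.

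The main obstacle is verifying that a single answer strategy is simultaneously valid for both instances. I expect the cardinality argument above to handle it, with the adversary fixed as "answer by $I_2$" and $\delta<1$, for example $\delta=1/2$.
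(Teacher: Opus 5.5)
Your proof is correct and follows essentially the same route as the paper: the same two-agent, three-good allocation $A_1=\{a\}$, $A_2=\{b,c\}$; an integer-valued instance in which it is EF1; and a perturbation of size less than $1$ that destroys EF1 but keeps every strict comparison, with ties answered according to the perturbed instance. The only difference is cosmetic: the paper uses values $(1,1,2)$ and raises $b$ to $1+\eps$, whereas you use $(1,1,1)$ and lower $a$ to $1-\delta$.
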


The proof is found in Appendix~\ref{app:proof-section-3}.
This suggests constructing an allocation together with explicit witness goods, rather than attempting to verify EF1.
This motivates us to focus on witness-certified EF1 allocations.
\begin{definition}
An allocation $\A = (A_1, A_2, \dots, A_n)$ is said to satisfy \emph{1-witness EF1} if for all agents $i, j \in [n]$, either $u_i(A_i) \ge u_i(A_j)$ or $u_i(A_i)\ge u_i(A_j - \last(A_j))$.
\end{definition}

Here we remove the last good of $A_j$ only to fix a canonical witness for later algorithms; the specific position of the witness good is not important.
Clearly, 1-witness EF1 implies EF1, since it specifies a particular good whose removal eliminates envy whenever envy exists.
We remark that this notion is also studied under the name strong EF1~\cite{CFSV19}.

Throughout this section, we assume that the agents' valuations are identical.

\subsection{Warm-up: {\SendMaxToMin}}

We first present a procedure $\SendMaxToMin$ that computes a 1-witness EF1 allocation under identical valuations in \Cref{alg:sendmaxtomin}.
This algorithm maintains bundles in nondecreasing order of value.
In each iteration, it selects the unmarked bundle $A_i$ with maximum value.
If the comparison query oracle returns that $u(A_1) \le u(A_i - \last(A_i))$, then $\A$ may not be EF1, and hence the algorithm transfers $A_i$'s last good to $A_1$, marks the bundle $A_1$, and sorts the  bundle order by binary insertion.
Otherwise, the bundle $A_i$ is marked.
A bundle becomes marked once its last good serves as a valid witness against the current minimum.

\begin{algorithm}[ht]
\caption{Computing a 1-witness EF1 allocation naively under identical valuations}
\label{alg:sendmaxtomin}
\Function {$\SendMaxToMin(u, n, \A = (A_1, A_2, \dots, A_n))$}{
    \Precondition{$u(A_1) \le u(A_2) \le \dots \le u(A_n)$}
    \For {$i \in [n]$}{
        \Comment{Each bundle maintains a state of either marked or unmarked}
        Set $A_i$ as unmarked\;
    }
    \While {$\A$ has unmarked bundles} {
        \Comment{Invariant: $u(A_1) \le u(A_2) \le \dots \le u(A_n)$ at the start/end of each iteration}
        Let $i$ be the maximum index in $[n]$ such that $A_i$ is unmarked\label{line:sendmaxtomin_unmarked}\;
        \If {$|A_i| \ge 1$ \emph{and} $\Comp(A_1, A_i - \last(A_i))$ \label{line:sendmaxtomin_compare}} {
            Remove $\last(A_i)$ from $A_i$ and append it to the end of $A_1$\label{line:sendmaxtomin_transfer}\;
            Set $A_1$ as marked\label{line:sendmaxtomin_mark}\;
            Reorder $\A$ to maintain $u(A_1) \le u(A_2) \le \dots \le u(A_n)$\label{line:sendmaxtomin_reorder}\;
            \Comment{Use binary insertion with $O(\log n)$ comparison queries}
        }
        \Else {
            Set $A_i$ as marked\label{line:sendmaxtomin_mark2}\;
        }
    }
    \Return $\A = (A_1, A_2, \dots, A_n)$\;
}
\end{algorithm}

We now show the correctness of \Cref{alg:sendmaxtomin}, and we establish several additional properties that will be used later in the scaling framework.

\begin{lemma}
\label{lem:non-decreasing-min}
    During the execution of \Cref{alg:sendmaxtomin}, $\min_{A \in \A} u(A)$ is non-decreasing.
\end{lemma}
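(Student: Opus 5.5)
The plan is to show that each iteration of the while loop leaves the minimum bundle value unchanged or larger. The invariant from the algorithm gives $u(A_1)\le \dots \le u(A_n)$ at the start of each iteration, so the minimum is $u(A_1)$. There are two branches to handle.

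\emph{The marking branch.} If the iteration takes the branch at line~\ref{line:sendmaxtomin_mark2}, no bundle is changed, so the minimum stays the same.

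\emph{The transfer branch.} Otherwise, the query $\Comp(A_1, A_i - \last(A_i))$ returned \true. By the definition of the comparison query, \true{} is never returned when $u(A_1) > u(A_i - \last(A_i))$, so we can infer $u(A_1) \le u(A_i - \last(A_i))$. Let $g = \last(A_i)$. After the transfer the modified bundles are $A_1 + g$ and $A_i - g$, and all other bundles are untouched. I would first dispose of the case $i = 1$ (with $|A_1|\ge 1$): the good is removed and re-appended to the same bundle, so nothing changes. For $i \neq 1$ I would then check the following.
\begin{itemize}
\item By monotonicity, $u(A_1 + g) \ge u(A_1)$.
\item By the query answer, $u(A_i - g) \ge u(A_1)$.
\item Every other bundle $A_j$ satisfies $u(A_j) \ge u(A_1)$ by the sorted-order invariant.
\end{itemize}
Hence every bundle after the transfer has value at least the old minimum $u(A_1)$, so the new minimum is at least the old one.

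\emph{Reordering.} The reordering at line~\ref{line:sendmaxtomin_reorder} only permutes bundles and does not change the multiset of values, so it does not affect the minimum. The sorted order itself is restored correctly: comparison queries sort correctly even with ties, as remarked in the preliminaries. Therefore the invariant holds at the start of the next iteration, and by induction over iterations the minimum is non-decreasing.

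The only subtle step is the transfer branch. There we must make sure the \true{} answer is interpreted correctly under arbitrary tie-breaking: it certifies only the non-strict inequality $u(A_1) \le u(A_i - g)$. That non-strict inequality is nevertheless exactly what the argument needs.
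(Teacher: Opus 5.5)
Your proof is correct and follows the same route as the paper. The only content change is the transfer step: the \true{} answer certifies $u(A_1)\le u(A_i-\last(A_i))$, monotonicity gives $u(A_1+\last(A_i))\ge u(A_1)$, and every other bundle is untouched and already at least $u(A_1)$ by the sorted invariant. Your separate handling of the degenerate case $i=1$ is a small piece of care that the paper leaves implicit.
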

\begin{proof}
    The only modification to bundle contents occurs at line~\ref{line:sendmaxtomin_transfer}, where a good is transferred.
    It suffices to show that $\min_{A \in \A} u(A)$ does not decrease at this step.
    By the precondition and the sorting at line~\ref{line:sendmaxtomin_reorder}, we have $u(A_1) \le u(A_2) \le \dots \le u(A_n)$ at line~\ref{line:sendmaxtomin_unmarked}.
    Before executing line~\ref{line:sendmaxtomin_transfer}, the if-condition ensures that $u(A_1) \le u(A_i - \last(A_i))$, and by monotonicity of $u$, we have $u(A_1) \le u(A_1 + \last(A_i))$.
    Since no other bundles are modified, $\min_{A \in \A} u(A)$ does not decrease at line~\ref{line:sendmaxtomin_transfer}.
\end{proof}

\begin{lemma}
\label{lem:sendmaxtomin-is-witness-EF1}
    The output of \Cref{alg:sendmaxtomin} is a 1-witness EF1 allocation.
\end{lemma}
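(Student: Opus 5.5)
The plan is to prove termination first and then establish an invariant that, once every bundle is marked, is exactly 1-witness EF1. Because valuations are identical, it suffices to show that every bundle $A_j$ satisfies $u(A_j)\le \min_{A\in\A}u(A)$ or $u(A_j-\last(A_j))\le \min_{A\in\A}u(A)$. Indeed, for any agent $i$ we have $u(A_i)\ge\min_{A\in\A}u(A)$, so this condition gives agent $i$ the required inequality against $A_j$.

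\emph{Termination.} Marked bundles never become unmarked. Goods leave a bundle only at line~\ref{line:sendmaxtomin_transfer}, and the sender there is the unmarked bundle $A_i$. The receiver $A_1$ is marked immediately afterwards. Hence each transfer moves one good from an unmarked bundle into a marked one, including in the degenerate case $i=1$, where the sender itself becomes marked. Consequently the number of goods held by unmarked bundles strictly decreases at each transfer and never increases otherwise. Every iteration that is not a transfer marks a new bundle, so the number of unmarked bundles strictly decreases in those iterations and is non-increasing in general. Together these two facts bound the number of iterations by $m+n$.

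\emph{Invariant.} I will maintain that every marked bundle $X$ is empty or satisfies $u(X-\last(X))\le \min_{A\in\A}u(A)$ for the current allocation. A bundle becomes marked in one of two ways.

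\begin{itemize}
\item At line~\ref{line:sendmaxtomin_mark2}, either $|A_i|=0$, or $\Comp(A_1,A_i-\last(A_i))$ returned \false. In the latter case the query semantics give $u(A_1)\ge u(A_i-\last(A_i))$, and $A_1$ is the current minimum.
\item At line~\ref{line:sendmaxtomin_mark}, the receiver's new last good is the transferred good. So $A_1-\last(A_1)$ is the old $A_1$, whose value was the old minimum. This is at most the new minimum by \Cref{lem:non-decreasing-min}.
\end{itemize}

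Preservation also rests on three facts. The contents of marked bundles change only by receiving a good, which re-establishes the invariant as just shown. The sender is unmarked, except when $i=1$, and that case is covered by the second bullet. Finally, the minimum never decreases by \Cref{lem:non-decreasing-min}, so the inequality of every untouched marked bundle persists. Reordering at line~\ref{line:sendmaxtomin_reorder} does not affect the invariant, since it only permutes labels.

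At termination all bundles are marked, so the invariant yields the reduction stated at the outset and hence 1-witness EF1. The main delicate point is the bookkeeping with ties. I must use only that a \false{} answer implies the non-strict inequality $u(A_1)\ge u(A_i-\last(A_i))$. I must also check that \Cref{lem:non-decreasing-min} already covers transfers made on a tie answer. The termination potential is the other step needing care: a naive argument based on counting unmarked bundles alone fails, because a transfer may mark an already-marked $A_1$ and so leave that count unchanged.
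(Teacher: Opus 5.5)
Your proof is correct and follows the paper's argument. It uses the same invariant on marked bundles: a newly marked bundle satisfies it either by the query semantics or because the receiver minus its appended good is the old minimum, and it persists because the minimum is non-decreasing and marked bundles only ever receive goods, so all bundles satisfy it once everything is marked. Your explicit termination potential (goods held by unmarked bundles plus the number of unmarked bundles) is a small addition; the paper leaves termination implicit and only bounds the number of iterations later, in its query-complexity lemma.
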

\begin{proof}
    Let $\A = (A_1, A_2, \dots, A_n)$ be the output of \Cref{alg:sendmaxtomin}.
    This allocation satisfies $u(A_1) \le u(A_2) \le \dots \le u(A_n)$.

    We show that at the end of each iteration of the while loop, every  marked bundle $X \in \A$ satisfies either $u(X) \le u(A_1)$ or $u(X - \last(X)) \le u(A_1)$.

    \UnderlineCase{Base case:} In each iteration, when some bundle is first marked (either at line~\ref{line:sendmaxtomin_mark} or line~\ref{line:sendmaxtomin_mark2}), the condition of this lemma clearly holds for the newly marked bundle at the end of that iteration.

    \UnderlineCase{Inductive step:} Assume that the condition holds for each marked bundle $X$ at the beginning of an iteration. 
    We show that this holds at the end of this iteration.
    Let $a$ and $a'$ be the minimum values of the bundles at the beginning and end of the iteration, respectively.
    By \Cref{lem:non-decreasing-min}, we have $a \le a'$.
    
    By the inductive hypothesis, we have $u(X) \le a$ or $u(X - \last(X)) \le a$.
    If bundle $X$ is not modified during the iteration, then we still have $u(X) \le a'$ or $u(X - \last(X)) \le a'$, and hence the condition is maintained.
    If bundle $X$ is modified to $X'$ during the iteration, then since $X$ is marked, $X$ must be added a good.
    Then $u(X'-\last(X')) = u(X) \leq a \leq a'$ holds, and $X'$ is marked at line~\ref{line:sendmaxtomin_mark}.
    Therefore, the condition holds at the end of the iteration.
    
    Furthermore, upon termination of \Cref{alg:sendmaxtomin}, all bundles are marked. 
    Hence, the output of \Cref{alg:sendmaxtomin} is a 1-witness EF1 allocation.
\end{proof}

\begin{lemma}
\label{lem:sendmaxtomin-complexity}
    Let $m \coloneqq \sum_{A \in \A} |A|$. Then \Cref{alg:sendmaxtomin} uses $O(m \log n)$ comparison queries.
\end{lemma}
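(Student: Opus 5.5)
The plan is to bound the number of while-loop iterations that issue any queries, and then charge $O(\log n)$ queries to each. Each iteration issues at most one query at line~\ref{line:sendmaxtomin_compare}. If a transfer happens, it additionally issues $O(\log n)$ queries for the binary-insertion reordering at line~\ref{line:sendmaxtomin_reorder}. Only two bundles change, so reinserting $A_1$ and $A_i$ into the sorted order of the other bundles costs $O(\log n)$ comparisons. Initialization uses no queries. So it suffices to show that the number of iterations in which a query is actually made is $O(m)$.

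The key structural observation is that marks are never removed: a bundle, once marked, stays marked until termination. From this I will derive two invariants.
\begin{itemize}
\item Goods only flow from unmarked bundles into marked bundles. The sender $A_i$ is unmarked, and the receiver $A_1$ is marked immediately at line~\ref{line:sendmaxtomin_mark}. A marked bundle is never selected at line~\ref{line:sendmaxtomin_unmarked}, so it never sends a good.
\item Consequently, once a good lies in a marked bundle it never moves again. Hence every good is transferred at most once, and the number of transfer iterations is at most $m$.
\end{itemize}

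For the else-branch iterations, each one marks a currently unmarked bundle. Since marks are permanent, there are at most $n$ such iterations in total. To get $O(m)$ rather than $O(m+n)$ when $m<n$, I use the fact that the condition at line~\ref{line:sendmaxtomin_compare} short-circuits. If $|A_i|=0$, no query is issued, and those iterations cost nothing. An else iteration that issues a query therefore marks an unmarked bundle that is nonempty at that moment. An unmarked bundle never receives goods, by the first invariant, so it must have been nonempty initially. There are at most $m$ initially nonempty bundles, so at most $\min(n,m)\le m$ else iterations make a query.

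Summing, at most $2m$ iterations issue queries, each using $O(\log n)$ queries. This gives $O(m\log n)$ in total. The only delicate point is the $m<n$ accounting just described: the short-circuit on empty bundles, together with the fact that unmarked bundles can only shrink.
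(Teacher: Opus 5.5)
Your proof is correct and follows essentially the paper's argument. You bound the number of query-issuing iterations by $O(m)$, using that marks are permanent, that marked bundles never send goods, and that iterations with $|A_i|=0$ are free, and then charge $O(\log n)$ queries to each such iteration. The only difference is bookkeeping: the paper uses the single potential ``number of goods in marked bundles,'' which increases by at least one in every iteration with $|A_i|\ge 1$ (so there are at most $m$ such iterations), whereas you count transfers and querying else-iterations separately and get at most $2m$.
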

\begin{proof}
    Comparison queries are invoked at line~\ref{line:sendmaxtomin_compare} when $|A_i| \ge 1$ and at line~\ref{line:sendmaxtomin_reorder}.
    In each iteration where $A_i$ is nonempty, the total number of goods in marked bundles increases by at least one.
    Hence the number of such iterations is at most $m$.
    Line~\ref{line:sendmaxtomin_compare} uses one comparison query, and line~\ref{line:sendmaxtomin_reorder} uses $O(\log n)$ comparison queries, yielding $O(m \log n)$ comparison queries overall.
\end{proof}

\subsection{{\SendMaxToMin} with Scaling}

The procedure $\SendMaxToMin$ is conceptually simple, but a direct implementation may require $\Theta(m)$ successful transfers.
To reduce the overall number of comparison queries, we embed the transfer procedure into a scaling framework that shrinks the instance size geometrically.

Our scaling framework $\ScalingSendMaxToMin$ repeatedly pairs adjacent goods within each bundle and contracts each pair into a single meta-good, thereby producing a sequence of coarser instances of almost half the size.
This is equivalent to doubling the number of goods transferred at each step in $\SendMaxToMin$.
We first compute a 1-witness EF1 allocation on the coarsest instance, and then iteratively expand the allocation back to the original goods while preserving their order within each bundle.
The main benefit is that, after each expansion step, the allocation satisfies a stronger structural precondition: removing the last two goods from any bundle never yields a bundle higher than the current minimum.
This additional condition allows us to apply a refined version of $\SendMaxToMin$, which we call $\SendMaxToMin'$, to terminate with only $O(n\log n)$ comparisons per level, leading to the overall $O(n\log n\log (m/n))$ bound.
The procedure $\SendMaxToMin'$ differs from $\SendMaxToMin$ in that it marks a bundle sending a good as well as a bundle receiving a good.
We present the formal description of $\ScalingSendMaxToMin$ in~\Cref{alg:scaling-sendmaxtomin}.

We identify a bundle of meta-goods with the union of the original goods represented by its elements.
The value of a bundle of meta-goods and the comparisons between two such bundles are defined by the corresponding unions of original goods.
Thus, each comparison between such bundles uses one comparison query to the original oracle.
We use the same notation $u$ and $\Comp$ for these induced operations.

\begin{algorithm}[htb]
\caption{Computing a 1-witness EF1 allocation under identical valuations}
\label{alg:scaling-sendmaxtomin}
\Function {$\SendMaxToMin'(u, n, \A = (A_1, A_2, \dots, A_n))$}{
    \Precondition{$u(A_1) \le u(A_2) \le \dots \le u(A_n)$}
    \color{firebrick}
    \Precondition{$|A_i| < 2$ or $u(A_i - \last_1(A_i) - \last_2(A_i)) \le u(A_1)$ for all $i \in [n]$}
    \color{black}
    \For {$i \in [n]$}{
        Set $A_i$ as unmarked\;
    }
    \While {$\A$ has unmarked bundles} { 
        Let $i$ be the maximum index in $[n]$ such that $A_i$ is unmarked\; 
        \If {$i>1$ \emph{and} $|A_i| \ge 1$ \emph{and} $\Comp(A_1, A_i - \last(A_i))$ \label{line:sendmaxtomin_dash_compare}} {
            Remove $\last(A_i)$ from $A_i$ and append it to the end of $A_1$\label{line:sendmaxtomin_dash_transfer}\; 
            Set $A_1$ \textcolor{firebrick}{and $A_i$} as marked\label{line:sendmaxtomin_dash_mark}\;
            Reorder $\A$ to maintain $u(A_1) \le u(A_2) \le \dots \le u(A_n)$\label{line:sendmaxtomin_dash_reorder}\;
        }
        \Else {
            Set $A_i$ as marked\label{line:sendmaxtomin_dash_mark2}\;
        }
    }
    \Return $\A = (A_1, A_2, \dots, A_n)$\;
}
\Function{$\ScalingSendMaxToMin(u, n, \A = (A_1, A_2, \dots, A_n))$} {
    \Precondition{$u(A_1) \le u(A_2) \le \dots \le u(A_n)$}
    \If {$\sum_{i\in [n]} |A_i| \le 2n$} {
        \Return $\SendMaxToMin(u,n,\A)$\;
    }
    \For {$i \in [n]$}{
        $\mathcal{P}_i \gets$ Partition $A_i$ into adjacent pairs of goods; if $|A_i|$ is odd, leave one good unpaired\;
        $A_i' \gets$ Create a new bundle by replacing each pair in $\mathcal{P}_i$ with a single meta-good representing the pair; if there was an unpaired good, include it as is\;
    }
    $\A' \coloneqq (A_1', A_2', \dots, A_n')$\;
    $\A' \gets \ScalingSendMaxToMin(u, n, \A')$ \label{line:scalingsendmaxtomin_recurse}\;
    $\A \gets$ Expand each meta-good in $\A'$ into its (meta-)goods at the preceding recursion level, preserving their order\;
    $\A \gets \SendMaxToMin'(u, n, \A)$\;
    \Return $\A$\;
}
\end{algorithm}
    
We first show that $\SendMaxToMin'$ outputs a 1-witness EF1 allocation using $O(n \log n)$ comparison queries under the given precondition.

\begin{lemma}
\label{lem:dash-non-decreasing-min}
    During the execution of $\SendMaxToMin'$, $\min_{A \in \A} u(A)$ is non-decreasing.
\end{lemma}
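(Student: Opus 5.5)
The argument is the same as for \Cref{lem:non-decreasing-min}; the only change is that $\SendMaxToMin'$ marks more bundles, which does not affect the values. In $\SendMaxToMin'$, bundle contents change only at line~\ref{line:sendmaxtomin_dash_transfer}. The reordering at line~\ref{line:sendmaxtomin_dash_reorder} merely permutes the bundles, and marking has no effect on values. So it suffices to show that a single transfer does not decrease $\min_{A\in\A} u(A)$.

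Fix an iteration in which the transfer is executed. First I would note that the precondition together with the reordering at line~\ref{line:sendmaxtomin_dash_reorder} gives $u(A_1)\le u(A_2)\le\dots\le u(A_n)$ at the start of every iteration. Hence $A_1$ attains the current minimum $a\coloneqq u(A_1)$. The if-condition guarantees $i>1$, so the sender $A_i$ and the receiver $A_1$ are distinct bundles. It also guarantees that $\Comp(A_1, A_i-\last(A_i))$ returned \true. Under arbitrary tie-breaking this yields only the weak inequality $u(A_1)\le u(A_i-\last(A_i))$.

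After the transfer, the sender becomes $A_i-\last(A_i)$, whose value is at least $a$ by the inequality above. The receiver becomes $A_1+\last(A_i)$, whose value is at least $u(A_1)=a$ by monotonicity of $u$. For meta-good bundles, both values are taken on the corresponding unions of original goods, so monotonicity still applies. Every other bundle is unchanged and has value at least $a$. Therefore the new minimum is at least $a$.

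I do not expect a real obstacle here. The only points needing care are, first, that the comparison answer is used only as a weak inequality, so ties cause no trouble. Second, the condition $i>1$ rules out the degenerate self-transfer from $A_1$ to itself. Third, the precondition in red is not needed for this lemma; it matters only for the query bound and the witness analysis that follow.
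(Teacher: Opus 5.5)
Your proof is correct and follows essentially the same route as the paper, which defers to the argument of \Cref{lem:non-decreasing-min}. That argument is that only the transfer line changes bundle contents, sortedness makes $A_1$ a minimum bundle, the if-condition gives $u(A_1)\le u(A_i-\last(A_i))$ for the sender, and monotonicity gives $u(A_1+\last(A_i))\ge u(A_1)$ for the receiver.
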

\begin{proof}
    The lemma follows by the same argument as \Cref{lem:non-decreasing-min}.
\end{proof}

\begin{lemma}
\label{lem:dash-sendmaxtomin-is-witness-EF1}
    The output of $\SendMaxToMin'$ is a 1-witness EF1 allocation.
\end{lemma}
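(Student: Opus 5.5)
The plan is to follow the invariant argument of \Cref{lem:sendmaxtomin-is-witness-EF1}. I will show that at the end of every iteration of the while loop in $\SendMaxToMin'$, every marked bundle $X$ satisfies $u(X) \le \min_{A\in\A} u(A)$ or $u(X - \last(X)) \le \min_{A\in\A} u(A)$. Write $a_0$ for the value of $u(A_1)$ at the start of the call, and $a, a'$ for the minimum bundle value at the beginning and end of a given iteration. By \Cref{lem:dash-non-decreasing-min}, $a_0 \le a \le a'$.

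First, an auxiliary observation. A bundle's contents change only at line~\ref{line:sendmaxtomin_dash_transfer}, and both the sender and the receiver are marked at line~\ref{line:sendmaxtomin_dash_mark} in that same iteration. Marks are never removed. Hence every bundle that is still unmarked at the start of an iteration is identical to its content at the start of the call. In particular, an unmarked bundle still satisfies the second precondition with respect to $a_0$: either $|A_i|<2$ or $u(A_i - \last_1(A_i) - \last_2(A_i)) \le a_0$. A second consequence is that a bundle which is already marked can only be modified by receiving a good, because only the unmarked maximum bundle $A_i$ ever sends.

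Next, I check the newly marked bundles.
\begin{itemize}
\item In the else branch, the bundle $A_i$ is marked because $i=1$, $|A_i|=0$, or the comparison returned \false. In each of these cases $u(A_i)\le a$ or $u(A_i-\last(A_i)) \le a$ holds; when $i=1$, $A_1$ is itself the minimum. Since $A_i$ is not modified in this iteration, the condition holds with $a'$.
\item For the receiver, write $X'$ for its new content. Then $u(X' - \last(X')) = u(A_1) = a \le a'$. This argument works whether or not $A_1$ was previously marked, so it also covers the inductive step for a marked bundle that receives a good.
\end{itemize}

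The main new case, and the only place the precondition is used, is the sender $A_i$. It was unmarked, so by the observation it equals its original content. After removing $\last_1(A_i)$, its new last good is the old $\last_2(A_i)$. So the new bundle minus its last good equals the original $A_i - \last_1(A_i) - \last_2(A_i)$, whose value is at most $a_0 \le a'$ by the precondition. If $|A_i| < 2$, the new bundle minus its last good is empty, and its value $0$ is at most $a'$ by normalization.

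Marked bundles not touched in the iteration keep their property because $a \le a'$. This completes the induction. Finally, each iteration marks at least one previously unmarked bundle, so the loop terminates after at most $n$ iterations. At termination every bundle is marked, and identical valuations give 1-witness EF1 against the minimum bundle, hence against every bundle.
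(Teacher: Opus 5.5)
Your proof is correct and follows the paper's argument essentially step for step: the same invariant on marked bundles, the same observation (the paper's Claim) that unmarked bundles are untouched and so still satisfy the two-goods precondition against the non-decreasing minimum, and the same three cases for newly marked bundles (else branch, receiver, sender). One small wording fix: when $|A_i|=1$ the sender becomes empty and has no last good, so there you should invoke the first disjunct $u(X')=0\le a'$ directly rather than speak of the bundle minus its last good.
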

\begin{proof}
    Assuming the preconditions are satisfied, we prove that at the end of each iteration of the while loop, every marked bundle $X \in \A$ satisfies either $u(X) \le u(A_1)$ or $u(X - \last(X)) \le u(A_1)$.
    Since all bundles are marked at the termination, this implies that the output is 1-witness EF1.

    We first observe the following claim.
    \begin{claim}
        \label{clm:unmarked-bundles-meets-precondition}
        At the beginning of each iteration of the while loop in $\SendMaxToMin'$, 
        $|X| < 2$ or $u(X - \last_1(X) - \last_2(X)) \le u(A_1)$ holds for all unmarked bundles $X \in \A$.
    \end{claim}
    \begin{proof}
        At the beginning of the algorithm, the precondition ensures that $|X| < 2$ or $u(X - \last_1(X) - \last_2(X)) \le u(A_1)$ is satisfied for any bundle $X$.
        By \Cref{lem:dash-non-decreasing-min}, $u(A_1)$ is non-decreasing across iterations. 
        Thus, if $X$ has not been modified, then that inequality remains to be satisfied.
        Since any bundle that has sent or received a good is marked at line~\ref{line:sendmaxtomin_dash_mark}, we can conclude that at the beginning of each iteration, $|X| < 2$ or $u(X - \last_1(X) - \last_2(X)) \le u(A_1)$ holds for all unmarked bundles $X \in \A$.
    \end{proof}
    
    \UnderlineCase{Base case:} We show that in each iteration, when a bundle is first marked (either at line~\ref{line:sendmaxtomin_dash_mark} or line~\ref{line:sendmaxtomin_dash_mark2}), the condition of this lemma holds for the newly marked bundle at the end of that iteration.
    For a bundle marked after receiving a good at line~\ref{line:sendmaxtomin_dash_mark}, removing the appended good recovers the previous minimum value, which is not more than the current minimum.
    For a bundle $X$ marked at line~\ref{line:sendmaxtomin_dash_mark2}, either $u(X)$ is already the current minimum or $u(X-\last(X))$ is at most the current minimum.
    Thus, the condition holds for these cases.

    To see the remaining case, consider the case where an unmarked bundle $X$ transfers a good at lines~\ref{line:sendmaxtomin_dash_transfer}--\ref{line:sendmaxtomin_dash_reorder} and is marked for the first time.
    Let $a$ denote the value of $u(A_1)$ at the beginning of this iteration, and let $a'$ denote its value at the end.
    By \Cref{lem:dash-non-decreasing-min}, we have $a \le a'$.
    Furthermore, by \Cref{clm:unmarked-bundles-meets-precondition}, at the beginning of the iteration, $|X| < 2$ or $u(X - \last_1(X) - \last_2(X)) \le a$ holds.
    Let $X' = X-\last_1(X)$.
    After removing $\last_1(X)$ from $X$, we obtain $|X'| < 1$ or $u(X' - \last_1(X')) \le a \le a'$.
    Therefore, at the end of the iteration, $u(X') \le a'$ or $u(X' - \last_1(X')) \le a'$ holds.

    \UnderlineCase{Inductive step:} The proof is similar to that of \Cref{lem:sendmaxtomin-is-witness-EF1}.
\end{proof}

We show the query complexity of $\SendMaxToMin'$.

\begin{lemma}
\label{lem:dash-sendmaxtomin-complexity}
    $\SendMaxToMin'$ uses $O(n \log n)$ comparison queries.
\end{lemma}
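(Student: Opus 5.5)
The plan is a potential-function count: bound the number of iterations of the while loop by $O(n)$, and bound the number of comparison queries per iteration by $O(\log n)$.

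First, we argue that a marked bundle never becomes unmarked, since no line of $\SendMaxToMin'$ unmarks a bundle. Every iteration then marks at least one bundle that was unmarked at its start. In the else-branch (line~\ref{line:sendmaxtomin_dash_mark2}), the bundle $A_i$ chosen as the maximum-index unmarked bundle is marked. In the transfer branch (line~\ref{line:sendmaxtomin_dash_mark}), $A_i$ is marked together with $A_1$, and $A_i$ was unmarked by the choice of $i$.

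Two details need checking here. The condition $i>1$ ensures that the sender and the receiver are distinct bundles, so the transfer is meaningful. Also, the reorder at line~\ref{line:sendmaxtomin_dash_reorder} only permutes indices and does not alter marks. The mark is attached to the bundle, not to its index, so we carry marks along with the bundles when reordering.

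Consequently, the number of unmarked bundles strictly decreases in every iteration. It starts at $n$, so the loop runs at most $n$ times. Unlike \Cref{lem:sendmaxtomin-complexity}, this bound does not depend on $m$.

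Next, we count queries per iteration. Line~\ref{line:sendmaxtomin_dash_compare} uses at most one comparison. Line~\ref{line:sendmaxtomin_dash_reorder} must restore sortedness after exactly two bundles change: $A_1$ gains a good and $A_i$ loses one. The other $n-2$ bundles remain in sorted order, so we remove the two modified bundles from the list and reinsert each by binary search, using $O(\log n)$ comparisons. Determining the maximum-index unmarked bundle requires no queries, since marks are bookkeeping.

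Hence the total is $n\cdot(1+O(\log n)) = O(n\log n)$ comparison queries. The only subtle point is the claim that each iteration newly marks an unmarked bundle. The distinctness condition $i>1$ makes it hold even when $A_1$ is already marked, because $A_i$ itself is newly marked in that case.
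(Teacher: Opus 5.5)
Your proposal is correct and matches the paper's proof: each iteration newly marks at least one bundle, so there are at most $n$ iterations, each costing one comparison at the if-test plus $O(\log n)$ comparisons for binary reinsertion. Your extra observations (marks travel with bundles under reordering, the chosen $A_i$ is unmarked by definition, and only the two modified bundles need reinserting) fill in details the paper leaves implicit.
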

\begin{proof}
    In each iteration of the while loop, at least one bundle is newly marked.
    Therefore, the total number of iterations is at most $n$.
    Each iteration uses one comparison query at line~\ref{line:sendmaxtomin_dash_compare} and $O(\log n)$ comparison queries at line~\ref{line:sendmaxtomin_dash_reorder}.
    Hence, the algorithm uses $O(n \log n)$ comparison queries in total.
\end{proof}

We show that $\ScalingSendMaxToMin$ outputs a 1-witness EF1 allocation.
Its query complexity is $O(n\log n\log(m/n))$.

\begin{lemma}
\label{lem:scaling-sendmaxtomin-is-witness-EF1}
    $\ScalingSendMaxToMin$ outputs a 1-witness EF1 allocation.
\end{lemma}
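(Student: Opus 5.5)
We argue by induction on the recursion depth of $\ScalingSendMaxToMin$, i.e., on the number of goods (or meta-goods) $\sum_i |A_i|$. In the base case $\sum_i |A_i| \le 2n$, the function returns $\SendMaxToMin(u,n,\A)$. By \Cref{lem:sendmaxtomin-is-witness-EF1} this is 1-witness EF1, with respect to the induced valuation on meta-goods, which is monotone since it is the valuation of the corresponding union of original goods. Otherwise, the contracted instance $\A'$ has strictly fewer meta-goods whenever some bundle has at least two elements. This holds because $\sum_i|A_i|>2n$, so some $|A_i|\ge 3$. The induction hypothesis then applies: the recursive call returns a 1-witness EF1 allocation $\A'$, sorted so that $u(A'_1)\le\dots\le u(A'_n)$.

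The main step is to check that the expanded allocation $\A$ satisfies both preconditions of $\SendMaxToMin'$; \Cref{lem:dash-sendmaxtomin-is-witness-EF1} then gives the conclusion. Expansion does not change bundles as sets of original goods, so each $u(A_i)=u(A'_i)$ and the sorted-order precondition is inherited. For the second precondition, fix $i$ with $|A_i|\ge 2$. Since $\A'$ is 1-witness EF1 and the valuations are identical, taking agent $1$ (the minimum bundle) against bundle $A'_i$ gives either $u(A'_i)\le u(A'_1)$ or $u(A'_i-\last(A'_i))\le u(A'_1)$.

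Now $\last(A'_i)$ expands into either one good, if it was unpaired or a pair-meta-good at the previous level, or two goods, which are exactly $\last_1(A_i),\last_2(A_i)$ in the preserved order. In both cases $A_i-\last_1(A_i)-\last_2(A_i)\subseteq A'_i-\last(A'_i)\subseteq A'_i$ as sets of original goods. Monotonicity then yields $u(A_i-\last_1(A_i)-\last_2(A_i))\le u(A'_1)=u(A_1)$.

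The main obstacle I expect is bookkeeping rather than mathematics. We must state carefully that "last" in the expanded bundle refers to the order inherited from the meta-good order, so that the expansion of $\last(A'_i)$ is a suffix of $A_i$ of length at most two. We must also handle edge cases where $A'_i$ is empty or a singleton, for which the set inclusion still holds trivially. Once the precondition is established, \Cref{lem:dash-sendmaxtomin-is-witness-EF1} shows the returned allocation is 1-witness EF1 at the current level, which completes the induction and hence the statement for the original instance.
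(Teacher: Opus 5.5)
Your proposal is correct and follows essentially the same argument as the paper. It uses induction on the recursion depth, takes the base case from \Cref{lem:sendmaxtomin-is-witness-EF1}, and in the inductive step uses that $\last(A'_i)$ expands into a suffix of at most two (meta-)goods, so the 1-witness EF1 property of $\A'$ gives the precondition $u(A_i-\last_1(A_i)-\last_2(A_i))\le u(A_1)$ of $\SendMaxToMin'$ and \Cref{lem:dash-sendmaxtomin-is-witness-EF1} applies; your remarks on termination of the contraction and inheritance of the sorted order just make explicit what the paper leaves implicit.
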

\begin{proof}
    We prove the lemma by induction on the recursion depth.

    \UnderlineCase{Base case:} 
    If the recursion stops because the number of (meta-)goods is at most $2n$, then we apply $\SendMaxToMin$.
    By \Cref{lem:sendmaxtomin-is-witness-EF1}, the resulting allocation is 1-witness EF1.

    \UnderlineCase{Inductive step:} Assume that $\A'$ at the end of line~\ref{line:scalingsendmaxtomin_recurse} is a 1-witness EF1 allocation.
    By the definition of 1-witness EF1, we have $|A_i'| < 1$ or $u(A_i' - \last(A_i')) \le u(A_1')$ for all $i \in [n]$.
    When $A_i$ is obtained by expanding each good in $A_i'$ back to the original pair of goods, whether $\last(A_i')$ represents a pair of goods or an unpaired good, we have $|A_i| < 2$ or $u(A_i - \last_1(A_i) - \last_2(A_i)) \le u(A_1') = u(A_1)$.
    Therefore, the precondition of $\SendMaxToMin'$ is satisfied, and by \Cref{lem:dash-sendmaxtomin-is-witness-EF1}, the output is a 1-witness EF1 allocation.
\end{proof}

\begin{lemma}
\label{lem:scaling-sendmaxtomin-complexity}
    Let $m = |M|$.
    Then $\ScalingSendMaxToMin$ uses $O(n \log n \log (m/n))$ comparison queries if $m> 2n$, and $O(n \log n)$ comparison queries otherwise.
\end{lemma}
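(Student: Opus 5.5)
The plan is to bound the work at each recursion level and then count the levels. Let $m_t$ denote the total number of (meta-)goods at recursion depth $t$, with $m_0 = m$. At each non-base level, contracting adjacent pairs inside every bundle gives $m_{t+1} \le \sum_i \lceil |A_i|/2 \rceil \le (m_t + n)/2$, since each bundle contributes at most one unpaired good. The recursion stops at the first depth $T$ with $m_T \le 2n$.

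First I will bound $T$. While $m_t > 2n$, we have $n < m_t/2$, so
\begin{align*}
m_{t+1} \le \frac{m_t + n}{2} < \frac{3}{4} m_t .
\end{align*}
Hence $m_t < (3/4)^t m$, and the recursion must stop once $(3/4)^t m \le 2n$. This gives $T = O(\log(m/n))$ when $m > 2n$, and $T = 0$ otherwise.

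Next I will charge the queries to levels. The contraction and expansion steps use no comparison queries, since bundles are kept as explicit ordered lists. Each non-base level makes one call to $\SendMaxToMin'$. The proof of \Cref{lem:scaling-sendmaxtomin-is-witness-EF1} shows that its precondition holds, so by \Cref{lem:dash-sendmaxtomin-complexity} this call costs $O(n\log n)$ queries. One point to check: the recursive call returns bundles in sorted order, and expansion preserves the set of original goods in each bundle, so the sortedness precondition carries over unchanged. The base level runs $\SendMaxToMin$ on at most $2n$ goods, which by \Cref{lem:sendmaxtomin-complexity} costs $O(2n\log n) = O(n\log n)$ queries.

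Summing, the total is $O(n\log n)(T+1)$. This is $O(n\log n\log(m/n))$ when $m > 2n$, since then $\log(m/n) \ge 1$ absorbs the $+1$. When $m \le 2n$ the algorithm goes directly to the base case, giving $O(n\log n)$.

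The main obstacle is the unpaired goods, which stop the good count from halving exactly. The bound $m_{t+1} < \frac34 m_t$, valid while $m_t > 2n$, handles this and gives a clean geometric decrease.
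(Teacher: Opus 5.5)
Your proposal is correct and follows the same argument as the paper: bound the recursion depth by $O(\log(m/n))$, charge $O(n\log n)$ queries to each non-base level via \Cref{lem:dash-sendmaxtomin-complexity}, and charge $O(n\log n)$ to the base case via \Cref{lem:sendmaxtomin-complexity}. Your explicit estimate $m_{t+1} \le (m_t+n)/2 < \tfrac34 m_t$ while $m_t > 2n$ simply makes rigorous the depth bound that the paper asserts without detail, including the effect of the unpaired goods.
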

\begin{proof}
    Since we stop the recursion immediately after the number of goods becomes at most $2n$, the recursion depth is $O(\log_2 (m/n))$.
    When at most $2n$ goods exist, we apply $\SendMaxToMin$, which uses $O(n \log n)$ comparison queries by \Cref{lem:sendmaxtomin-complexity}.
    At each level with more than $2n$ goods, $\SendMaxToMin'$ costs $O(n \log n)$ comparison queries by \Cref{lem:dash-sendmaxtomin-complexity}.
    This completes the proof.
\end{proof}

From \Cref{lem:scaling-sendmaxtomin-is-witness-EF1,lem:scaling-sendmaxtomin-complexity}, we can find an EF1 allocation by calling $\ScalingSendMaxToMin$ with a sorted initial allocation $\A = (\emptyset, \dots, \emptyset, M)$.

\begin{theorem}
\label{thm:sendmaxtomin}
    For identical and monotone valuations, there exists an algorithm that finds an EF1 allocation using $O(n \log n \log (m/n))$ comparison queries. 
\end{theorem}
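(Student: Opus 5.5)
The theorem follows by running $\ScalingSendMaxToMin$ on the initial allocation $\A = (\emptyset, \dots, \emptyset, M)$ and combining \Cref{lem:scaling-sendmaxtomin-is-witness-EF1} with \Cref{lem:scaling-sendmaxtomin-complexity}. First, I check the precondition. Under monotonicity and normalization, $u(\emptyset)=0\le u(M)$, so $u(A_1)\le\dots\le u(A_n)$ holds without any queries. Ordering the goods of $M$ arbitrarily as a list gives a valid ordered-list input.

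Correctness comes directly from \Cref{lem:scaling-sendmaxtomin-is-witness-EF1}. The output is 1-witness EF1 with respect to the meta-good structure at the top level, where bundles consist of original goods after full expansion. I then argue that 1-witness EF1 implies EF1 under identical valuations. For any $i,j$, either $u(A_i)\ge u(A_j)$ or $u(A_i)\ge u(A_j-\last(A_j))$, and $\last(A_j)\in A_j$ serves as the good $g$ in the EF1 definition. Empty bundles trivially satisfy the first alternative whenever envy would otherwise be undefined. One subtlety I will address: at intermediate levels, meta-good values are induced by unions of original goods, so the induced valuation is still normalized and monotone. This makes the lemmas, which used only monotonicity, applicable at every level.

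For the query bound, \Cref{lem:scaling-sendmaxtomin-complexity} gives $O(n\log n\log(m/n))$ queries when $m>2n$, and $O(n\log n)$ otherwise. The main point to verify is that the recursion depth really is $O(\log(m/n))$. Each contraction maps a bundle of size $s$ to one of size $\lceil s/2\rceil$, so the total size goes from $m$ to at most $m/2+n$. Starting from a total above $2n$, this gives at most $3/4$ of the previous total, so the total falls below $2n$ after $O(\log(m/n))$ levels. In the degenerate case $m\le 2n$, the bound $O(n\log n)$ should be read as $O(n\log n\log(m/n))$ with the convention that $\log(m/n)$ is at least a constant. Alternatively, it can be stated as $O(n\log n\,(1+\log(m/n)))$; I expect this boundary convention to be the only delicate part of the argument.
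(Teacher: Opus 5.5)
Your proposal is correct and matches the paper: the theorem is obtained by running \ScalingSendMaxToMin\ on $(\emptyset,\dots,\emptyset,M)$ and combining \Cref{lem:scaling-sendmaxtomin-is-witness-EF1} with \Cref{lem:scaling-sendmaxtomin-complexity}. One small arithmetic fix in your depth check: contraction actually gives $\sum_i\lceil s_i/2\rceil\le m/2+n/2$, and it is this sharper bound that yields the factor $3/4$ for all $m>2n$; your bound $m/2+n$ exceeds $3m/4$ when $2n<m<4n$.
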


We remark that each recursion level, which repairs a 1-witness EF1 using $O(n\log n)$ queries, contains a sorting-like component: deficient bundles must be matched with compensating goods while preserving the designated witness.
Appendix~\ref{app:witness-repair-lb} shows a lower bound of $\Omega(n\log n)$ on our repairing task.

\section{EF1+\texorpdfstring{$1/2$}{1/2}-TPS Allocation for Identical and Additive Valuations}

From this section onward, we assume additive valuations.
In this section, we strengthen our guarantee beyond EF1 by additionally ensuring a $1/2$-TPS approximation, which implies $1/2$-MMS.
More specifically, we prove that under identical additive valuations, an allocation that is simultaneously EF1, $1/2$-TPS and $1/2$-MMS can be computed using $O(n\log n\log (m/n))$ comparison queries (\Cref{thm:ef1-halfmms}).

To provide the additional $1/2$-TPS guarantee, we need to avoid any allocations in which a high-value good is a witness of EF1 and the minimum value of bundles is far from the TPS value.
A key observation to overcome this situation is that, by computing a 1-witness EF1 allocation, we can enumerate all high-value goods.
For a partition $\A$ of $M$ with $u(A_1)\le u(A_2) \le \dots \le u(A_k)$, we denote $W(\A) = \{\last(A_i) \mid 2 \le i \le k \text{ and } A_i \ne \emptyset\}$.
\begin{lemma}
\label{lem:large-goods-in-witnesses}
    Let $\A = (A_1, A_2, \dots, A_k)$ be a 1-witness EF1 allocation under identical and additive valuations with $u(A_1) \le u(A_2) \le \dots \le u(A_k)$.
    If a good $g \in M$ satisfies $u(g) > \min_{A \in \A} u(A)$, then $g \in W(\A)$.
    In particular, $W(\A)$ contains all the goods $g$ with $u(g) > u(M) / k$.
\end{lemma}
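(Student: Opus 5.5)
Fix a good $g$ with $u(g) > \min_{A\in\A} u(A) = u(A_1)$. The plan is to locate the bundle containing $g$ and use the 1-witness EF1 condition against $A_1$, the bundle holding the minimum value. Let $A_j$ be the bundle containing $g$. Since the valuations are identical, 1-witness EF1 applied to the pair $(1,j)$ gives $u(A_1)\ge u(A_j)$ or $u(A_1)\ge u(A_j-\last(A_j))$.

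First, I rule out $j=1$. If $g\in A_1$, then monotonicity gives $u(A_1)\ge u(g)>u(A_1)$, a contradiction. So $j\ge 2$ and $A_j\ne\emptyset$, and $\last(A_j)$ is one of the goods collected in $W(\A)$.

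Next, I show $g=\last(A_j)$. The first alternative, $u(A_1)\ge u(A_j)$, is impossible, because $u(A_j)\ge u(g)>u(A_1)$. Hence $u(A_j-\last(A_j))\le u(A_1)<u(g)$. If $g\ne\last(A_j)$, then $g\in A_j-\last(A_j)$, and monotonicity would give $u(A_j-\last(A_j))\ge u(g)$, a contradiction. Therefore $g=\last(A_j)\in W(\A)$. This argument uses only monotonicity; additivity is needed only for the second claim.

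For the ``in particular'' part, I use additivity. The bundles partition $M$, so $\sum_i u(A_i)=u(M)$, and therefore $\min_{A\in\A} u(A)\le u(M)/k$. Any $g$ with $u(g)>u(M)/k$ then satisfies $u(g)>\min_{A\in\A} u(A)$, and the first part applies. I expect no real obstacle here. The only point needing care is the case $g\in A_1$, which matters because $W(\A)$ excludes $\last(A_1)$ by definition.
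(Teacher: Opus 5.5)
Your proof is correct and takes essentially the same route as the paper. Both locate the bundle containing $g$, play monotonicity against the 1-witness EF1 condition relative to $A_1$ to force $g=\last(A_j)$, and derive the second claim from $u(A_1)\le u(M)/k$; your explicit exclusion of $g\in A_1$ just spells out a step the paper leaves implicit when it writes $A_i\neq A_1$.
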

\begin{proof}
    Fix a good $g$ satisfying $u(g) > u(A_1)$, and let $A_i \ (\neq A_1)$ be the bundle containing $g$.
    Note that $u(A_1) < u(g) \le u(A_i)$.
    If $\last(A_i) \ne g$, then we would have $u(A_1) < u(g) \le u(A_i - \last(A_i))$ by the monotonicity of $u$, which contradicts the 1-witness EF1 property.
    Therefore, $g \in W(\A)$.
    The last statement follows since $u(A_1) \leq u(M)/k$.
\end{proof}

Leveraging \Cref{lem:large-goods-in-witnesses}, we construct the algorithm as follows.
The formal description is given in~\Cref{alg:ef1-halfmms}.
First, we identify the good set $W_0$ containing goods whose values exceed the proportional share by using $\ScalingSendMaxToMin$ for $2n+1$ agents.
Second, we compute an EFX allocation $\A$ of $W_0$.
We then append all the remaining goods to the end of the bundle with the highest value, and apply the following variant of $\ScalingSendMaxToMin$ (\Cref{alg:scaling-sendmaxtomin}), which we call $\ProtectedScalingSendMaxToMin$.
A (meta-)good is called protected if all the original goods represented by it belong to $W_0$.
Throughout the execution, the protected (meta-)goods precede others in each bundle.
At every recursion level, we pair adjacent protected goods and adjacent other goods respectively, preserving their order.
If the number of protected goods is odd, then we leave the last protected good unpaired.
Whenever $\SendMaxToMin$ or $\SendMaxToMin'$ would transfer a protected good, we cancel the transfer and mark the source bundle.

To obtain an EFX allocation of $W_0$, we apply the Egal-Sequential algorithm~\cite{AR2020}.
This algorithm sorts all goods in non-increasing order of value, and then sequentially assigns each good to the bundle with the currently smallest total value.
In the following lemma, the EFX guarantee follows from \cite[Lemma~1]{AR2020}, and the query bound follows from $O(m\log m)$ comparisons for sorting the goods and $O(m\log n)$ comparisons for the subsequent binary insertions.
\begin{lemma}
\label{lem:egal-sequential-EFX}
    For $n$ agents with an identical additive valuation and $m$ goods, the Egal-Sequential algorithm outputs an EFX allocation using $O(m \log (n+m))$ comparison queries.
\end{lemma}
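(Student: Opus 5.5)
The proof has two parts: the EFX guarantee, which I prove directly from the greedy invariant rather than only citing \cite[Lemma~1]{AR2020}, and a count of the comparison queries.

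\textbf{EFX.} Goods are processed in non-increasing order $g_1,\dots,g_m$ with $u(g_1)\ge\dots\ge u(g_m)$. The invariant is that after each step, every bundle $A_j$ satisfies $u(A_i)\ge u(A_j-g)$ for every $i$ and every $g\in A_j$. Since values are additive, $u(A_j-g)$ is largest when $g$ is the cheapest good of $A_j$. Because goods arrive in non-increasing order, the cheapest good of $A_j$ is the one added last. So it suffices to show $u(A_i)\ge u(A_j-\last(A_j))$ for all $i,j$.

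Consider the moment $\last(A_j)$ was appended. Just before, $A_j-\last(A_j)$ had the minimum value among all bundles. Every other bundle's value at that moment was at least this, and bundle values only grow afterwards. Bundle $A_j$ itself has received no good since then, because $\last(A_j)$ is its last good. Hence $u(A_i)\ge u(A_j-\last(A_j))$ holds at every later time, and the invariant is maintained.

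The EFX argument must be carried out with query outcomes rather than true values. Ties may be broken arbitrarily, so "the bundle with the smallest value" really means a bundle that the comparisons certify satisfies $u(A)\le u(B)$ for all $B$. A correctly maintained sorted list gives exactly this certificate, since each comparison certifies a weak inequality. Likewise, a comparison sort certifies a weakly non-increasing order of goods. The argument above uses only these weak inequalities, so it goes through. This is the only subtle point in the proof.

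\textbf{Query count.} First sort the $m$ singletons with a comparison sort such as merge sort, using $O(m\log m)$ queries; this is valid because each comparison yields a correct weak inequality. Then keep the $n$ bundles in a list sorted by value. At each step, append the next good to the first bundle in the list, remove that bundle, and reinsert it by binary search. This costs $O(\log n)$ queries per good, so $O(m\log n)$ in total. Overall the algorithm uses $O(m\log m+m\log n)=O(m\log(n+m))$ queries.
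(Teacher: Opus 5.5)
Your proposal is correct and follows the paper's argument. The paper cites \cite[Lemma~1]{AR2020} for EFX and counts $O(m\log m)$ queries for sorting plus $O(\log n)$ per binary insertion, exactly as you do. Your direct invariant proof is just the standard argument behind that cited lemma: the last-appended good is a cheapest one, and its bundle was a certified minimum when that good was appended. You also correctly note that ties are harmless, since merge sort and binary insertion only rely on weak inequalities.
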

In \Cref{alg:ef1-halfmms}, lines~\ref{line:ef1-halfmms-es-start}--\ref{line:ef1-halfmms-es-end} correspond to the Egal-Sequential algorithm.
Since $|W_0|\leq 2n$, an EFX allocation of $W_0$ can be found using $O(n \log n)$ comparison queries.

\begin{algorithm}[htb]
    \caption{Computing an EF1+$1/2$-TPS allocation under identical and additive valuation}
    \label{alg:ef1-halfmms}
    \Function {$\EFOneHalfMMS(u, n, M)$}{
        Compute a 1-witness EF1 allocation of length $2n + 1$ by 
        $\B \gets \ScalingSendMaxToMin(u, 2n + 1, (\emptyset, \dots, \emptyset, M))$\; \label{line:ef1-halfmms-compute-B}
        Let $W_0 \gets \{\last(B_i) \mid 2 \le i \le 2n + 1 \text{ and } B_i \ne \emptyset\}$ be the set of witnesses\; \label{line:ef1-halfmms-set-W}
        \Comment{Egal-Sequential allocation of $W_0$ to $n$ agents}
        Initialize allocation $\A = (A_1, A_2, \dots, A_n)$ with $A_i = \emptyset$ for all $i \in [n]$\;\label{line:ef1-halfmms-es-start}
        Sort goods in $W_0 = \{w_1, w_2, \dots, w_{|W_0|}\}$ to get $u(w_1) \ge u(w_2) \ge \dots \ge u(w_{|W_0|})$\; \label{line:ef1-halfmms-sort}
        \For {$g = w_1, w_2, \dots, w_{|W_0|}$ \label{line:ef1-halfmms-main-loop}}{
            Append $g$ to the end of $A_1$\;
            Reorder $\A$ to maintain $u(A_1) \le u(A_2) \le \dots \le u(A_n)$\; \label{line:ef1-halfmms-reorder}
            \Comment{Use binary insertion with $O(\log n)$ comparison queries} \label{line:ef1-halfmms-es-end}
        }
        Append $M \setminus W_0$ to the end of $A_n$\; \label{line:ef1-halfmms-append}
        \Return $\ProtectedScalingSendMaxToMin(u, n, \A, W_0)$\; \label{line:ef1-halfmms-scalingsendmaxtomin}
        \Comment{Ensure that each good of $W_0$ remains in its assigned bundle during \ScalingSendMaxToMin}
    }
\end{algorithm}

We now establish the correctness of \Cref{alg:ef1-halfmms}.

\begin{lemma}
\label{lem:ef1-halfmms-non-decreasing-min}
    During the execution of \Cref{alg:ef1-halfmms}, $\min_{A \in \A} u(A)$ is non-decreasing.
\end{lemma}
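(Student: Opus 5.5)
The plan is to go through every place where \Cref{alg:ef1-halfmms} changes the contents of the bundles in $\A$ and check that none of them can lower the minimum bundle value. We read the lemma as being about the $n$-bundle allocation $\A$, from its initialization at line~\ref{line:ef1-halfmms-es-start} onward. The first call to $\ScalingSendMaxToMin$ at line~\ref{line:ef1-halfmms-compute-B} acts on the separate partition $\B$ and does not touch $\A$. There are three kinds of modification to examine: the appends in the Egal-Sequential loop, the append of $M\setminus W_0$ at line~\ref{line:ef1-halfmms-append}, and the transfers inside $\ProtectedScalingSendMaxToMin$. The relabelings caused by contracting and expanding meta-goods need a separate remark.

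For the Egal-Sequential loop, each step appends a good $g$ to the current minimum bundle $A_1$ and changes nothing else. By monotonicity, $u(A_1+g)\ge u(A_1)$. Every other bundle keeps its value, which was already at least $u(A_1)$. So the new minimum is at least the old one, and the reordering by binary insertion only permutes labels. Line~\ref{line:ef1-halfmms-append} appends goods to $A_n$ and leaves all other bundles unchanged, so the same monotonicity argument applies, whether $A_n$ is the minimum bundle (possible if every bundle is empty) or not.

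For $\ProtectedScalingSendMaxToMin$, contraction of pairs into meta-goods and expansion back do not change the underlying union of original goods in any bundle. Since bundle values and comparisons are defined through these unions, these steps leave every bundle value unchanged. The only real changes are the transfers in $\SendMaxToMin$ and $\SendMaxToMin'$. A cancelled transfer of a protected good only marks a bundle and changes no contents. An executed transfer happens only after $\Comp(A_1, A_i-\last(A_i))$ has returned true. This guarantees $u(A_1)\le u(A_i-\last(A_i))$ even under adversarial tie-breaking, since a \true{} answer excludes $u(A_1)>u(A_i-\last(A_i))$. Meanwhile $A_1$ gains a good, so its value does not drop. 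Hence the argument of \Cref{lem:non-decreasing-min} and \Cref{lem:dash-non-decreasing-min} carries over unchanged.

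The one point that needs care is that the minimum is preserved across recursion levels. The value of a bundle does not depend on the level at which it is viewed, so the minimum at the end of a recursive call equals the minimum of the expanded allocation at the start of the next $\SendMaxToMin'$. Chaining these equalities with the per-step inequalities gives monotonicity throughout the run. I expect the only subtlety to be confirming that the protection rule never forces an unforced modification. It does not: a cancelled transfer only marks a bundle.
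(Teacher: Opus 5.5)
Your proof is correct and follows the paper's route. Like the paper, you check each step that modifies bundle contents: the Egal-Sequential appends and the final append by monotonicity, executed transfers via the argument of \Cref{lem:non-decreasing-min,lem:dash-non-decreasing-min}, and cancelled transfers that change nothing. Your remarks that contraction and expansion preserve bundle values, and that a \true{} answer still certifies $u(A_1)\le u(A_i-\last(A_i))$ under adversarial ties, simply make explicit what the paper leaves implicit.
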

\begin{proof}
    We see that $\min_{A \in \A} u(A)$ is non-decreasing during lines~\ref{line:ef1-halfmms-main-loop}--\ref{line:ef1-halfmms-append}.
    During the execution of $\ProtectedScalingSendMaxToMin$ at line~\ref{line:ef1-halfmms-scalingsendmaxtomin}, every executed transfer of a good does not decrease $\min_{A \in \A} u(A)$ by \Cref{lem:non-decreasing-min,lem:dash-non-decreasing-min}, and a canceled transfer leaves the allocation unchanged.
\end{proof}

\begin{lemma}
\label{lem:ef1-halfmms-fix-W}
    $\ProtectedScalingSendMaxToMin$ outputs a 1-witness EF1 allocation such that every good in $W_0$ remains in the bundle to which it was assigned by the Egal-Sequential algorithm.
\end{lemma}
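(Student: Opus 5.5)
The plan is to split the statement into its two parts: the claim that the $W_0$ goods stay put, which is essentially syntactic, and the 1-witness EF1 claim, which reruns the invariant arguments of \Cref{lem:sendmaxtomin-is-witness-EF1,lem:dash-sendmaxtomin-is-witness-EF1,lem:scaling-sendmaxtomin-is-witness-EF1} with one extra case for a canceled transfer.

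\textbf{Step 1: goods of $W_0$ never move.} Contraction and expansion in $\ProtectedScalingSendMaxToMin$ only regroup goods inside a bundle. The only operation that moves goods between bundles is a transfer in $\SendMaxToMin$ or $\SendMaxToMin'$. A transfer of a protected (meta-)good is canceled by construction. An unprotected meta-good expands only into original goods outside $W_0$, because pairing is done separately for protected and unprotected goods. Hence every good of $W_0$ stays in its Egal-Sequential bundle.

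\textbf{Step 2: a structural invariant.} Every received good is unprotected and is appended at the end. Protected goods never leave their bundle and start at the front, so ``protected goods precede others'' is maintained. Consequently, if $\last(X)$ is protected for some current bundle $X$, then $X$ contains no unprotected goods. It therefore consists exactly of the protected goods of its Egal-Sequential bundle $E$; as a set of original goods, $X=E$.

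\textbf{Step 3: a canceled transfer still certifies the source.} By \Cref{lem:egal-sequential-EFX}, the Egal-Sequential allocation of $W_0$ is EFX. Together with Step 2 this gives
\[
u(X-\last(X)) = u(E-\last(E)) \le \min_{E'} u(E') \le \min_{A\in\A} u(A).
\]
Here $E'$ ranges over the Egal-Sequential bundles, and the last inequality uses \Cref{lem:ef1-halfmms-non-decreasing-min}: the minimum only grows from the end of Egal-Sequential through line~\ref{line:ef1-halfmms-append} and the whole run.

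The same bound holds at a coarse recursion level: there $\last(X)$ is a protected meta-good whose goods are the last protected goods of $E$, so removing it leaves a subset of $E-\last(E)$. This inequality is the correct ``base case'' for a bundle marked because its transfer was canceled. It also only improves as the minimum increases, so the inductive steps go through unchanged.

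\textbf{Step 4: redo the previous proofs with this extra case.} I would rerun \Cref{lem:sendmaxtomin-is-witness-EF1} for the base level, where canceled bundles are handled by Step 3. I would rerun \Cref{lem:dash-sendmaxtomin-is-witness-EF1} for $\SendMaxToMin'$. \Cref{clm:unmarked-bundles-meets-precondition} is unaffected, since a canceled iteration modifies no bundle. Finally I would rerun the expansion argument of \Cref{lem:scaling-sendmaxtomin-is-witness-EF1}: the last meta-good of each bundle, protected or not, expands into at most two goods. So the precondition of $\SendMaxToMin'$, namely $|A_i|<2$ or $u(A_i-\last_1(A_i)-\last_2(A_i))\le u(A_1)$, still holds after each expansion.

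\textbf{Main obstacle.} I expect the delicate point to be Step 2 at coarse levels. I must check that pairing adjacent protected goods separately from adjacent unprotected goods (leaving the last protected good unpaired when their number is odd) keeps protected meta-goods at the front of every bundle. I must also check that a protected last meta-good really means the bundle contains nothing else, so that the EFX bound applies to ``remove the last meta-good.'' I would verify this by induction on the recursion depth, alongside the ordering invariant.
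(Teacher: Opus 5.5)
Your proposal is correct and follows essentially the same route as the paper. In both, the only new case is a canceled transfer. It is handled by observing that protected goods stay at the front, so a bundle whose last (meta-)good is protected coincides with its Egal-Sequential bundle. EFX of that bundle, together with the non-decreasing minimum, then certifies the source as marked, and the invariant and expansion arguments of the earlier lemmas carry over unchanged.
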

\begin{proof}
    Let $\A'$ denote the allocation at the end of lines~\ref{line:ef1-halfmms-main-loop}--\ref{line:ef1-halfmms-reorder} in which the goods in $W_0$ are allocated to $n$ agents using the Egal-Sequential algorithm.
    By \Cref{lem:egal-sequential-EFX}, this allocation is EFX. 
    Let $\beta \coloneqq \min_{i\in[n]}u(A'_i)$.

    By construction, any meta-good represents either only goods in $W_0$ or goods in $M\setminus W_0$.
    Since the algorithm cancels any transfer of protected goods, at any point, each current bundle $A_j \in \A$ contains some bundle $A'_{j'} \subseteq W_0$ obtained by the Egal-Sequential algorithm.
    Thus, $\min_{j\in [n]} u(A_j) \geq \min_{j\in [n]} u(A'_j) = \beta$.

    Suppose that a transfer from a bundle $A_j$ is canceled.
    The last meta-good $\last(A_j)$ is protected.
    Since protected goods precede others in each bundle, $A_j$ has only protected goods.
    Thus, when viewed as a set of original goods, $A_j$ coincides with some bundle $A'_{j'}$ in $\A'$.
    Moreover, $\last(A'_{j'}) \in W_0$ is represented by the meta-good $\last(A_j)$ and $\A'$ is EFX.
    Hence, we have $u(A_j-\last(A_j))\le u(A'_{j'}-\last(A'_{j'}))\le \beta \le \min_{A \in \A} u(A)$ by \Cref{lem:ef1-halfmms-non-decreasing-min}.
    Thus, we can mark $A_j$ using $\last(A_j)$ as its witness, and continue the execution.

    The canceled transfer is the only additional case in the argument of marked bundles. Every executed transfer is handled exactly as in $\ScalingSendMaxToMin$.
    The proof of \Cref{lem:scaling-sendmaxtomin-is-witness-EF1} applies after adding the cancellation case above.
    Therefore, we can compute a 1-witness EF1 allocation while keeping every good in $W_0$ in its initially assigned bundle.
\end{proof}

The following corollary is an immediate consequence of the proof of \Cref{lem:ef1-halfmms-fix-W}.

\begin{corollary}
    \label{cor:W-precedes-non-W}
    In the output of \Cref{alg:ef1-halfmms}, elements of $W_0$ precede elements not in $W_0$ within each bundle, and each element in $W_0$ remains in the bundle to which it was assigned by the Egal-Sequential algorithm, keeping the non-increasing order of value.
\end{corollary}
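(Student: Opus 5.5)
We need to prove \Cref{cor:W-precedes-non-W}. Three things must be established: $W_0$-goods precede non-$W_0$ goods in each bundle of the output, every $W_0$-good stays in its Egal-Sequential bundle, and the $W_0$-goods keep their order within each bundle. The plan is to treat all three as invariants of $\ProtectedScalingSendMaxToMin$. We verify them at initialization and then check that each elementary operation preserves them. The operations are contraction into meta-goods, recursion, expansion, executed transfers, and canceled transfers.

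\textbf{Initialization.} After lines~\ref{line:ef1-halfmms-es-start}--\ref{line:ef1-halfmms-es-end}, each $A_i$ consists only of $W_0$-goods. Since Egal-Sequential processes goods in the sorted order of line~\ref{line:ef1-halfmms-sort} and always appends to the end, the goods in each $A_i$ appear in non-increasing order of value. Line~\ref{line:ef1-halfmms-append} then appends $M\setminus W_0$ only to the end of $A_n$. So at the start, protected goods precede unprotected ones and are ordered non-increasingly.

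\textbf{Contraction and expansion.} Contraction pairs adjacent protected goods with each other and adjacent unprotected goods with each other, leaving at most one protected good unpaired. Hence every meta-good is either fully protected or fully unprotected. Contraction preserves relative order, so in the contracted instance protected meta-goods still precede unprotected ones, and the original order is recoverable. Expansion preserves order by definition, so it also preserves all three invariants, provided the recursive call preserved them on the contracted instance. This last condition is handled by induction on recursion depth, as in \Cref{lem:scaling-sendmaxtomin-is-witness-EF1}.

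\textbf{Transfers (the main step).} Inside $\SendMaxToMin$ or $\SendMaxToMin'$, only $\last(A_i)$ ever moves.

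\begin{itemize}
\item If $\last(A_i)$ is protected, the transfer is canceled, as in the proof of \Cref{lem:ef1-halfmms-fix-W}, so nothing changes.
\item If $\last(A_i)$ is unprotected, it is removed from the end of the sender and appended to the end of the receiver. Removing the last element of a list cannot break "protected before unprotected". Appending an unprotected element at the end cannot break it either. Neither action moves or reorders any protected good.
\end{itemize}

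Thus no protected good ever leaves its bundle or changes its relative position among protected goods. Reordering the bundle list (binary insertion) only permutes bundles, not goods within them. Bundle identity is therefore tracked by content, so "the bundle to which it was assigned" stays well defined.

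Combining these steps by induction over the recursion and over the iterations of each while loop gives the corollary. The only subtle point is the transfer step, specifically checking that the unpaired-protected-good convention never creates a mixed meta-good. That is immediate from pairing the protected and unprotected goods separately.
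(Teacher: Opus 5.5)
Your proof is correct and follows the paper's approach. The paper treats this corollary as immediate from the proof of \Cref{lem:ef1-halfmms-fix-W}, which relies on the same facts you use: meta-goods are homogeneous because protected and unprotected goods are paired separately, protected goods precede the rest, only last goods ever move, and transfers of protected goods are canceled; you simply make these invariants explicit by induction over the recursion levels and the iterations of the while loop.
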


This leads to the following structural properties of high-value goods.

\begin{lemma}
\label{lem:large-goods-must-be-singleton}
    Let $\A$ denote the output of \Cref{alg:ef1-halfmms}. 
    If a good $g \in M$ satisfies $u(g) > u(M) / n$, then there exists a bundle $X \in \A$ such that $X = \{g\}$.
\end{lemma}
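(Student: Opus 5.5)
The plan is to combine three facts: $g$ is protected, $g$ sits first in its bundle, and EF1 forbids anything else in that bundle. Fix $g$ with $u(g) > u(M)/n$.

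\emph{Step 1: $g \in W_0$.} The allocation $\B$ computed at line~\ref{line:ef1-halfmms-compute-B} is 1-witness EF1 with $2n+1$ bundles. Since $u(g) > u(M)/n > u(M)/(2n+1)$, \Cref{lem:large-goods-in-witnesses} gives $g \in W(\B) = W_0$. By \Cref{cor:W-precedes-non-W}, $g$ stays in the bundle $X$ to which Egal-Sequential assigned it, and the protected goods of $X$ come first in non-increasing order of value.

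\emph{Step 2: $g$ is the first good Egal-Sequential placed in its bundle.} At most $n-1$ goods have value exceeding $u(M)/n$, since otherwise the total would exceed $u(M)$. Egal-Sequential processes goods in non-increasing order and always appends to a current minimum bundle. Before $g$ is processed, fewer than $n$ goods have been placed, so some bundle is still empty. Under arbitrary tie-breaking the minimum bundle could be a nonempty bundle of value $0$, but this is impossible here: every good processed before $g$ has value at least $u(g) > 0$. Hence the minimum bundle has value $0$ only if it is empty, and $g$ goes into an empty bundle. Consequently $g = \mathrm{first}(X)$.

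\emph{Step 3: $X = \{g\}$.} Suppose $|X| \ge 2$. Then $\last(X) \neq g$, so $X - \last(X) \supseteq \{g\}$, and the 1-witness EF1 property of the output (\Cref{lem:ef1-halfmms-fix-W}) gives
\[
\min_{A\in\A} u(A) \ge u(X-\last(X)) \ge u(g) > u(M)/n .
\]
Every bundle would then exceed $u(M)/n$, so the total value would exceed $u(M)$, a contradiction. Hence $X = \{g\}$.

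Note that Step 3 applies 1-witness EF1 to $X$ against the minimum bundle. If $X$ is itself the minimum bundle, the inequality $\min_{A} u(A) \ge u(X - \last(X))$ still holds trivially, so the argument goes through. The main obstacle is Step 2, namely certifying from comparison answers alone that $g$ was placed into an empty bundle. I will handle it with the counting argument (fewer than $n$ heavy goods) together with the strict positivity of every earlier good, which rules out a tie between an empty bundle and a nonempty one.
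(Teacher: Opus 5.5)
Your proof is correct, but it runs the paper's argument in mirror image. The first step is shared: $g\in W_0$ via the $(2n+1)$-bundle witness set.

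The paper then never looks at which bundle Egal-Sequential selects. It applies 1-witness EF1 once to force $\last(A_i)=g$, since otherwise $u(A_i-\last(A_i))\ge u(g)>u(M)/n\ge u(A_1)$. It then uses \Cref{cor:W-precedes-non-W} to conclude that every other good of $A_i$ is protected and, by the non-increasing order, has value at least $u(g)$. A second application of 1-witness EF1 then rules out $|A_i|\ge 2$.

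You instead prove that $g$ is the \emph{first} good of its bundle, using the dynamics of Egal-Sequential. At most $n-1$ goods exceed $u(M)/n$, and every good processed before $g$ is one of them and has positive value. So the comparison-maintained minimum when $g$ is processed is an empty bundle, despite adversarial tie-breaking. After that, a single application of 1-witness EF1 against the minimum bundle finishes the proof.

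What each route buys:
\begin{itemize}
\item The paper's route relies only on properties of the final allocation, which are exactly the hypotheses later abstracted in \Cref{lem:no-singleton-implies-halfprop}. It needs no counting bound and no reasoning about ties during Egal-Sequential's bundle selection.
\item Yours uses 1-witness EF1 only once and does not need the non-increasing order of protected goods inside a bundle. It does need that protected goods are never moved, stay ahead of unprotected ones, and keep their list order. It is this literal order preservation (which the algorithm provides), not the value ordering you cite, that turns ``$g$ was placed into an empty bundle'' into $g$ being the first element of the output bundle $X$.
\end{itemize}
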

\begin{proof}
    Let $g$ be a good with $u(g) > u(M) / n$, and let $A_i \in \A$ be the bundle containing $g$.
    
    By applying \Cref{lem:large-goods-in-witnesses} to $\B$ computed at line~\ref{line:ef1-halfmms-compute-B}, the set $W_0=W(\B)$ contains every good whose value is greater than $u(B_1)$.
    Since $u(B_1)\le {u(M)}/(2n+1)<{u(M)}/{n}<u(g)$, we have $g\in W_0$.
    Moreover, $\last(A_i) = g$ because otherwise $u(A_i - \last(A_i)) \ge u(g) > u(M)/n \ge u(A_1)$, which contradicts \Cref{lem:ef1-halfmms-fix-W}.
    By \Cref{cor:W-precedes-non-W}, all remaining elements in $A_i$ also belong to $W_0$.
    Suppose for contradiction that $|A_i| \ge 2$. 
    By the Egal-Sequential procedure, we have $u(\last_2(A_i)) \ge u(\last(A_i)) = u(g) > u(M) / n$.
    Therefore, $u(A_i - \last(A_i)) > u(M) / n \ge u(A_1)$, which contradicts the fact that $\A$ is a 1-witness EF1 allocation (by~\Cref{lem:ef1-halfmms-fix-W}).
    Hence, we must have $A_i = \{g\}$.
\end{proof}

\begin{lemma}
\label{lem:no-singleton-implies-halfprop}
    Let $W \subseteq M$ be any set of goods containing every good $g$ with $u(g) > \frac{1}{2} u(M)/n$.
    Let $\A=(A_1,\ldots,A_n)$ be an allocation of $M$ satisfying
    \begin{itemize}
        \item $u(A_1) \le u(A_2) \le \dots \le u(A_n)$,
        \item $\A$ is 1-witness EF1, and contains no singleton $X$ with $u(X) > u(M)/n$, and 
        \item in every bundle $A_i$, all goods in $W$ precede all goods outside $W$, and the goods in $W\cap A_i$ are ordered in non-increasing order of value.
    \end{itemize}
    Then $u(A_1) \ge u(M) / 2n$.
\end{lemma}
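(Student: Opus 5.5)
We argue by contradiction. Suppose $a \coloneqq u(A_1) < u(M)/(2n)$. The goal is to show that every bundle then has value at most $u(M)/n$, while $A_1$ has value strictly less than $u(M)/n$. This contradicts additivity, since $\sum_i u(A_i) = u(M)$.

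Fix any bundle $A_i$. If $u(A_i)\le a$, we are done for this bundle, because $a<u(M)/n$. Otherwise $A_i$ is nonempty, and 1-witness EF1 gives $u(A_i-\last(A_i))\le a$. By additivity,
\begin{align*}
    u(A_i) \le a + u(\last(A_i)),
\end{align*}
so it remains to bound the witness good $g\coloneqq\last(A_i)$. We split into cases according to whether $g\in W$.

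\emph{Case $g\notin W$.} By the assumption on $W$, $u(g)\le \frac12 u(M)/n$. Hence $u(A_i) < u(M)/(2n)+u(M)/(2n) = u(M)/n$.

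\emph{Case $g\in W$.} By the ordering hypothesis, goods of $W$ precede goods outside $W$ in $A_i$. Since the last good lies in $W$, all of $A_i$ lies in $W$. Moreover, $g$ is a minimum-value good of $A_i$, because $W$-goods are in non-increasing order.
\begin{itemize}
    \item If $|A_i|\ge 2$, then $u(g)\le u(\last_2(A_i))\le u(A_i-g)\le a$. Hence $u(A_i)\le 2a<u(M)/n$.
    \item If $|A_i|=1$, then $A_i=\{g\}$ is a singleton, and the hypothesis excluding singletons of value greater than $u(M)/n$ gives $u(A_i)\le u(M)/n$.
\end{itemize}

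Summing over all $n$ bundles, each has value at most $u(M)/n$ and $u(A_1)=a<u(M)/n$. Therefore $u(M)=\sum_i u(A_i)<u(M)$, a contradiction, so $u(A_1)\ge u(M)/2n$. (If $u(M)=0$ the claim is trivial.)

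The only delicate point is the case where the witness lies in $W$. There the ordering hypothesis is essential: it forces the witness to be the smallest good of its bundle, so its value is dominated by the rest of the bundle, which is at most $a$, unless the bundle is a singleton. The singleton case is exactly what the no-large-singleton hypothesis, supplied by \Cref{lem:large-goods-must-be-singleton} in the application, is designed to handle.
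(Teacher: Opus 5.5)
Your proof is correct and uses essentially the same argument as the paper: 1-witness EF1 against $A_1$, the threshold on $W$ to decide where the witness lies, the non-increasing order of $W$-goods to bound the witness by the rest of its bundle, and the no-large-singleton hypothesis. The only difference is packaging: the paper runs this case analysis once, on the maximum bundle $A_n$ (which must exceed $u(M)/n$), while you bound every bundle by $u(M)/n$ and then sum.
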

\begin{proof}
    Assume for contradiction that $u(A_1) < \frac{1}{2} u(M) / n$.
    
    First, since $u(A_1)<\frac{1}{2} u(M)/n$ and $A_n$ has the maximum value in $\A$, we must have $u(A_n) > u(M)/n$.
    By the assumption of singletons, $|A_n| \ge 2$, and $\last_2(A_n)$ exists.
    Next, since $\A$ is 1-witness EF1, we have $u(A_n - \last(A_n)) \le u(A_1)$.
    Combining this with $u(A_1) < \frac{1}{2} u(M) / n$ and $u(A_n) \ge u(M) / n$, we obtain $u(A_n - \last(A_n)) < \frac{1}{2} u(M) / n < u(\last(A_n))$.
    Hence, $\last(A_n) \in W$.

    Since all goods in $W$ precede all goods outside $W$ in each bundle, $A_n$ contains only goods in $W$, and particularly $\last_2(A_n) \in W$.
    Moreover, as the goods in $A_n \cap W$ are ordered in non-increasing order of value, we have $u(\last_2(A_n)) \ge u(\last(A_n))$.
    However, this contradicts $u(\last_2(A_n)) \leq u(A_n - \last(A_n)) < u(\last(A_n))$.
\end{proof}

Combining the above two lemmas, we see that if the output $\A$ of \Cref{alg:ef1-halfmms} contains a singleton, then we can find the one with a high-value good.
This is useful in the later section.

\begin{corollary}
\label{cor:largest-singleton-is-half-prop}
    Let $\A$ denote the output of \Cref{alg:ef1-halfmms}. 
    If there exists a bundle $X \in \A$ satisfying $|X| = 1$, then there exists a bundle $X \in \A$ satisfying $|X| = 1$ and $u(X) \ge \frac{1}{2} u(M) / n$.
\end{corollary}
\begin{proof}
    Let $X \in \A$ be a singleton bundle with maximum value.
    Assume for contradiction that $u(X) < \frac{1}{2} u(M) / n$.
    We note that there exists no singleton bundle with value greater than $u(M) / n$.
    Thus, $\A$ and $W_0$ at line~\ref{line:ef1-halfmms-set-W} satisfy the condition of \Cref{lem:no-singleton-implies-halfprop} by \Cref{lem:ef1-halfmms-fix-W,lem:large-goods-in-witnesses}.
    We can apply \Cref{lem:no-singleton-implies-halfprop} to obtain $u(X) \ge u(A_1) \ge \frac{1}{2} u(M) / n$, which is a contradiction.
\end{proof}

Now we are ready to show the guarantee of EF1 and $1/2$-TPS.

\begin{lemma}
\label{lem:ef1-halfmms-is-ef1-halfmms}
    \Cref{alg:ef1-halfmms} outputs an allocation that is simultaneously 1-witness EF1, $1/2$-TPS, and $1/2$-MMS.
\end{lemma}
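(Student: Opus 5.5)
The 1-witness EF1 property is exactly \Cref{lem:ef1-halfmms-fix-W}. Since $\mu^n(M)\le\TPS^n(M)$ for additive valuations, $1/2$-MMS follows from $1/2$-TPS. It therefore remains to show that every bundle has value at least $\frac12\TPS^n(M)$. Because valuations are identical and $u(A_1)$ is the minimum, it suffices to show $u(A_1)\ge\frac12\TPS^n(M)$. The argument is an induction on the number of singleton bundles with value above the proportional share, peeling them off with \Cref{lem:TPS-singleton}.

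Let $S\subseteq\A$ be the set of singleton bundles $\{g\}$ with $u(g)>u(M)/n$; there are $s<n$ of them. By \Cref{lem:large-goods-must-be-singleton}, every good of value above $u(M)/n$ forms such a singleton. Remove them one at a time. Let $M'$ be the union of the remaining $n-s$ bundles and $\A''$ the restricted allocation. Applying \Cref{lem:TPS-singleton} $s$ times gives $\TPS^{n-s}(M')\ge\TPS^n(M)$. Since $\TPS^{n-s}(M')\le u(M')/(n-s)$, it is enough to prove $u(A)\ge\frac12\,u(M')/(n-s)$ for each $A\in\A''$. Each bundle in $S$ has value greater than $u(M)/n\ge\TPS^n(M)$, so those bundles are trivially fine.

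To bound the remaining bundles, I would apply \Cref{lem:no-singleton-implies-halfprop} to $\A''$ on $M'$ with $n-s$ agents and $W=W_0\cap M'$. The hypotheses to check are as follows.
\begin{enumerate}
\item $\A''$ is sorted and 1-witness EF1. The minimum bundle of $\A$ lies in $\A''$ because $s<n$ and every removed bundle exceeds $u(M)/n\ge u(A_1)$. Hence the witness inequalities $u(A-\last(A))\le u(A_1)$ carry over unchanged.
\item The ordering condition on $W$ holds by \Cref{cor:W-precedes-non-W}.
\item $W$ contains every good $g\in M'$ with $u(g)>\frac12 u(M')/(n-s)$. This is the delicate point, handled below.
\item $\A''$ has no singleton of value greater than $u(M')/(n-s)$. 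This is also handled below.
\end{enumerate}

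For the third condition, I would first show $u(M')/(n-s)\le u(M)/n$: the $s$ removed goods each exceed $u(M)/n$, so the average over the rest can only go down. Then any $g$ with $u(g)>\frac12u(M')/(n-s)$ might still have value below $u(M)/(2n+1)$, so $g\in W_0$ is not automatic. This is the main obstacle. My first approach is to bypass it by checking where the proof of \Cref{lem:no-singleton-implies-halfprop} actually uses the hypothesis. It is used only for $\last(A_n)$, which has value greater than $u(A_1)$. Because $u(A_1)\le u(M)/(2n+1)$ is not guaranteed here, I would instead invoke \Cref{lem:large-goods-in-witnesses} against $\B$: $u(\last(A_n))>u(A_n-\last(A_n))$, and under the contradiction assumption $u(A_n)>u(M')/(n-s)$ with $u(A_1)<\frac12 u(M')/(n-s)$, this forces $u(\last(A_n))>\frac12u(M')/(n-s)$. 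I would then relate this to $u(B_1)\le u(M)/(2n+1)$, which requires showing $\frac12 u(M')/(n-s)\ge u(M)/(2n+1)$ or otherwise handling the case where it fails directly. If that comparison breaks down, the fallback is to rerun the proof of \Cref{lem:no-singleton-implies-halfprop} directly on $\A''$, using EF1 together with the Egal-Sequential ordering (a second-to-last $W_0$ good is at least as valuable as the last).

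For the fourth condition, a singleton in $\A''$ of value greater than $u(M')/(n-s)$ but at most $u(M)/n$ could exist. I would treat it as a further peeled singleton, iterating the removal with \Cref{lem:TPS-singleton} until no such singleton remains. The conclusion then gives $u(A_1)\ge\frac12\TPS^n(M)$.
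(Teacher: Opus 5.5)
Your reduction is the paper's: peel off high-value singletons with \Cref{lem:TPS-singleton}, iterate until no singleton exceeds the current average, and apply \Cref{lem:no-singleton-implies-halfprop} to what remains. But condition 3 is the step you flag as the main obstacle, and you leave it unproved; neither of your routes closes it.

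The comparison $\frac12 u(M')/(n-s)\ge u(M)/(2n+1)$ is false in general, because the removed singletons can carry almost all of $u(M)$. Take $n=3$, one good of value $100$, two goods of value $1$, and goods of small total value. Then $s=1$ and $\frac12 u(M')/(n-s)\approx \frac12$, while $u(M)/(2n+1)\approx \frac{102}{7}$. So $u(B_1)\le u(M)/(2n+1)$ cannot certify that the two goods of value $1$ lie in $W_0$, yet condition 3 requires it. Your fallback does not help either. The proof of \Cref{lem:no-singleton-implies-halfprop} must first know $\last(A_n)\in W_0$ before the $W_0$-first ordering and the Egal-Sequential ordering can say anything. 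EF1 together with those orderings gives no information about a last good outside $W_0$. In short, any bound on $u(B_1)$ expressed through $u(M)$ is useless here.

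The missing idea is to bound $u(B_1)$ in terms of the residual instance. Each removed good lies in exactly one bundle of $\B$. After $s$ goods are removed, at least $2n+1-s$ bundles of $\B$ are contained in $M'$, and their total value is at most $u(M')$. Since $2n+1-s>2(n-s)$, whenever $u(M')>0$ this gives
\[
u(B_1)\le \frac{u(M')}{2n+1-s}<\frac{u(M')}{2(n-s)}.
\]
Hence every $g\in M'$ with $u(g)>u(M')/(2(n-s))$ exceeds $u(B_1)$ and belongs to $W_0$ by \Cref{lem:large-goods-in-witnesses}. The case $u(M')=0$ is trivial. In the example above, this gives $u(B_1)\lesssim 2/6<1$.

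The same counting works at every stage of your iterated peeling, with $s$ replaced by the number of goods removed so far. That is exactly what your treatment of condition 4 needs, since condition 3 must be re-established after each further removal. With this bound inserted, your outline coincides with the paper's proof.
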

\begin{proof}
    Let $\A = (A_1, A_2, \dots, A_n)$ denote the output of \Cref{alg:ef1-halfmms}, sorted so that $u(A_1)\le \cdots \le u(A_n)$.
    By \Cref{lem:ef1-halfmms-fix-W}, $\A$ satisfies 1-witness EF1.

    Let $\B=(B_1,\ldots,B_{2n+1})$ be the allocation computed at line~\ref{line:ef1-halfmms-compute-B}, and let $W_0=W(\B)$ be the set computed at line~\ref{line:ef1-halfmms-set-W}.
    \Cref{lem:large-goods-in-witnesses} for $\B$ implies that $W_0$ contains every good $g$ with $u(g) > u(M)/(2n+1)$.

    We now iteratively remove singleton bundles containing a high-value good from $\A$.
    Let $(n^0,M^0,\A^0)=(n,M,\A)$.
    At step $t$, if $\A^t$ contains a singleton bundle $X^t=\{x_t\}$ such that
    \begin{align}
        u(X^t)>\frac{u(M^t)}{n^t}, \label{eq:singleton}
    \end{align}
    then set $n^{t+1}=n^t-1$, $M^{t+1}=M^t\setminus\{x_t\}$, and $\A^{t+1}=\A^t\setminus\{X^t\}$.
    If no such singleton exists, or if $n^t=1$ or $u(M^t) = 0$, then the process stops.

    We claim that at each step $t$ with $u(M^t)>0$, the set $W^t \coloneqq W_0\cap M^t$ contains every good $g\in M^t$ with $u(g)>{u(M^t)}/(2n^t)$.
    Consider the bundles of $\B$ that contain no good of $M\setminus M^t$.
    There are at least $2n+1-t$ such bundles, and their total value is at most
    $u(M^t)$. Since $B_1$ has the minimum value, we have
    \begin{align*}
        u(B_1)\le \frac{u(M^t)}{2n+1-t} < \frac{u(M^t)}{2n^t}.
    \end{align*}
    Therefore, every $g\in M^t$ with $u(g)>{u(M^t)}/{(2n^t)}$ satisfies $u(g)>u(B_1)$, and hence belongs to $W_0$ by \Cref{lem:large-goods-in-witnesses}.

    The process stops at some step $\ell < n$ by construction.
    Throughout the process, $A_1$ is never removed and remains a bundle with minimum value at any point.
    If $n^\ell=1$ or $u(M^\ell)=0$, we have $u(A^\ell_1) \ge u(M^\ell)/n^\ell$.
    Suppose otherwise.
    By construction, $\A^\ell$ contains no singleton $X$ with $u(X)>{u(M^\ell)}/n^\ell$.
    In addition, because we only remove a singleton $X^t$ satisfying \eqref{eq:singleton} from $\A$, the following three properties are inherited from the original allocation: (a) $\A^t$ is a 1-witness EF1 allocation of $M^t$, sorted in non-decreasing order of value; (b) $A_1$ remains the minimum-value bundle in $\A^t$; and (c) in every bundle of $\A^t$, all goods in $W^t$ precede all goods outside $W^t$, and the goods in $W^t$ are ordered in non-increasing order of value.
    Thus, $\A^\ell$ and $W^\ell$ satisfy the assumptions of \Cref{lem:no-singleton-implies-halfprop}, and it holds that
    \begin{align*}
        u(A_1) = \min_{A\in \A^\ell}u(A)
        \ge \frac{u(M^\ell)}{2n^\ell}
        \ge \frac{1}{2} \TPS^{n^\ell}(M^\ell).
    \end{align*}
    Therefore, in all cases, $u(A_1) \ge \frac{1}{2} {u(M^\ell)}/{n^\ell} \ge \frac{1}{2} \TPS^{n^\ell}(M^\ell)$ holds, where the last inequality follows from $\TPS^{n^\ell}(M^\ell)\le u(M^\ell)/n^\ell$.
    By repeatedly applying \Cref{lem:TPS-singleton}, we have $u(A_1)\ge \frac{1}{2} \TPS^n(M)$.
    Since $A_1$ has the minimum value among $\A$, every bundle of $\A$ has value at least $u(A_1)$.
    Thus, $\A$ satisfies $1/2$-TPS, and hence $1/2$-MMS.
\end{proof}

Combining \Cref{lem:ef1-halfmms-is-ef1-halfmms} and the analysis of the query complexity, we obtain the following theorem.

\begin{theorem}
\label{thm:ef1-halfmms}
    For identical and additive valuations, 
    \Cref{alg:ef1-halfmms} computes an EF1, $1/2$-TPS, and $1/2$-MMS allocation using $O(n \log n \log (m/n))$ comparison queries.
\end{theorem}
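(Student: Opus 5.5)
The theorem follows once we combine the correctness guarantee of \Cref{lem:ef1-halfmms-is-ef1-halfmms} with a query count for each line of \Cref{alg:ef1-halfmms}. \Cref{lem:ef1-halfmms-is-ef1-halfmms} already shows that the output is 1-witness EF1, hence EF1, and also $1/2$-TPS and $1/2$-MMS. What remains is to bound the number of comparison queries by $O(n\log n\log(m/n))$. I will go through the algorithm line by line, and I will handle the degenerate regime $m\le 2n$ separately, since there $\log(m/n)$ may be $O(1)$ or even nonpositive. In that regime the target bound should be read as $O(n\log n)$, matching how \Cref{lem:scaling-sendmaxtomin-complexity} is stated.

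\textbf{Line~\ref{line:ef1-halfmms-compute-B}.} This calls $\ScalingSendMaxToMin$ with $2n+1$ agents. By \Cref{lem:scaling-sendmaxtomin-complexity}, with $n$ replaced by $2n+1$, it costs $O(n\log n\log(m/n))$ queries when $m>2(2n+1)$ and $O(n\log n)$ otherwise. Here I use $\log(m/(2n+1))\le\log(m/n)$.

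\textbf{Lines~\ref{line:ef1-halfmms-es-start}--\ref{line:ef1-halfmms-es-end}.} The set $W_0$ has at most $2n$ goods, one per bundle $B_2,\dots,B_{2n+1}$. By \Cref{lem:egal-sequential-EFX}, Egal-Sequential on $W_0$ uses $O(|W_0|\log(n+|W_0|))=O(n\log n)$ queries: sorting costs $O(n\log n)$, and each of at most $2n$ binary insertions costs $O(\log n)$. Line~\ref{line:ef1-halfmms-append} makes no queries. Extracting $W_0$ costs nothing, because the bundles of $\B$ are explicit ordered lists.

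\textbf{Line~\ref{line:ef1-halfmms-scalingsendmaxtomin}.} This step needs the most care. I must check that the protected variant keeps the per-level cost and recursion depth of \Cref{lem:scaling-sendmaxtomin-complexity}.

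First, the depth. At each level, pairing is done separately among protected and unprotected goods within each bundle. So each bundle leaves at most two goods unpaired instead of one, and the total count goes from $m'$ to at most $m'/2+n$. Once $m'>4n$, this is at most $3m'/4$, so the count shrinks geometrically and the depth is $O(\log(m/n))$. I would set the base-case threshold at $O(n)$ goods (say $4n$) rather than exactly $2n$, which changes nothing asymptotically. One could also note that protected goods never exceed $2n$, so they contribute at most $n$ extra unpaired elements per level.

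Second, the base case. $\SendMaxToMin$ on $O(n)$ goods costs $O(n\log n)$ by \Cref{lem:sendmaxtomin-complexity}. A canceled transfer still marks a bundle and uses only one query, so the argument of that lemma still goes through.

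Third, each level's $\SendMaxToMin'$. Every iteration marks at least one new bundle, whether by a transfer, a cancellation, or the else branch. So there are at most $n$ iterations, each costing $O(\log n)$ queries, for $O(n\log n)$ per level as in \Cref{lem:dash-sendmaxtomin-complexity}. The precondition of $\SendMaxToMin'$ after expansion is needed only for correctness, and that was already handled in \Cref{lem:ef1-halfmms-fix-W}.

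Multiplying the depth by the per-level cost gives $O(n\log n\log(m/n))$. Summing the three parts yields the theorem.
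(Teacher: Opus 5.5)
Your decomposition is the paper's: correctness from \Cref{lem:ef1-halfmms-is-ef1-halfmms}, \Cref{lem:scaling-sendmaxtomin-complexity} for line~\ref{line:ef1-halfmms-compute-B}, $O(n\log n)$ for Egal-Sequential on $|W_0|\le 2n$ goods, and recursion depth times $O(n\log n)$ for $\ProtectedScalingSendMaxToMin$. Your per-level and base-case costs are fine. The step that does not go through is the recursion depth.

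$\ProtectedScalingSendMaxToMin$ inherits the stopping rule of $\ScalingSendMaxToMin$, which stops once there are at most $2n$ (meta-)goods. You replace this threshold by $4n$. That change is harmless for correctness, but it means you bound the queries of a different algorithm from the one in the theorem. Your estimate also cannot handle the actual rule. The recurrence $m'\mapsto m'/2+n$ has fixed point $2n$, so it only certifies reaching about $4n$ within $O(\log(m/n))$ levels. Writing $d=m'-2n$, integrality gives only $d\mapsto\lfloor d/2\rfloor$. That yields $O(\log m)$ levels and an $O(n\log n\log m)$ bound, which is weaker than claimed when $m/n$ is small.

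The missing observation is that the unprotected goods contribute only $O(1)$ rounding per level, not $n$. There are two reasons:
\begin{enumerate}
\item All contractions happen on the way down the recursion. They act on the initial allocation passed at line~\ref{line:ef1-halfmms-scalingsendmaxtomin}, before any transfer. In that allocation, all of $M\setminus W_0$ lies in the single bundle $A_n$ (line~\ref{line:ef1-halfmms-append}).
\item Repeated pairing of a block of $x$ goods yields $\lceil x/2^t\rceil$ meta-goods after $t$ levels, because $\lceil\lceil x/2^t\rceil/2\rceil=\lceil x/2^{t+1}\rceil$. So rounding does not accumulate across levels.
\end{enumerate}
Hence after $t$ contractions there are at most $|W_0|/2^t+n$ protected and at most $|M\setminus W_0|/2^t+1$ unprotected meta-goods. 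That is at most $m/2^t+n+1$ in total, which is at most $2n$ once $2^t\ge m/(n-1)$ (for $n\ge 2$; the case $n=1$ is similar since $|W_0|\le 2$). This is exactly the count in the paper's proof, and it gives depth $O(\log(m/n))$ for the algorithm as stated. With it, the rest of your argument goes through unchanged.
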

\begin{proof}
    We execute \Cref{alg:ef1-halfmms} for $(u, n, M)$.
    The output is EF1, $1/2$-TPS, and $1/2$-MMS by~\Cref{lem:ef1-halfmms-is-ef1-halfmms}.
    We analyze the query complexity.
    Line~\ref{line:ef1-halfmms-compute-B} uses $O(n \log n \log (m/n))$ comparison queries by \Cref{lem:scaling-sendmaxtomin-complexity}.
    Since $|W_0| \le 2n$ (line~\ref{line:ef1-halfmms-set-W}), lines~\ref{line:ef1-halfmms-sort}--\ref{line:ef1-halfmms-reorder} use $O(n \log n)$ comparison queries.

    It remains to analyze $\ProtectedScalingSendMaxToMin$ at line~\ref{line:ef1-halfmms-scalingsendmaxtomin}.
    After $t$ contraction steps, there are at most $|W_0|/2^t+n$ protected meta-goods and at most $|M\setminus W_0|/2^t+1$ meta-goods from $M\setminus W_0$.
    Hence, the recursion depth is $O(\log_2 (m/n))$.
    Cancellation of a transfer requires no additional comparison query, and we use $O(n \log n \log (m/n))$ comparison queries by the argument in \Cref{lem:scaling-sendmaxtomin-complexity}.
    Therefore, the total number of comparison queries used by \Cref{alg:ef1-halfmms} is $O(n \log n \log (m/n))$.
\end{proof}


\subsection{Restricted Non-Identical EF1 using Witness Detection}

In this subsection, we demonstrate that the witness information produced by the algorithm for identical valuations is useful in a special class of additive valuations which are not necessarily identical.
We may assume that $m>n$ because assigning at most one good to each agent gives a 1-witness EF1 allocation.
The key ingredient is \Cref{lem:large-goods-in-witnesses}.
By this result, if sufficiently high-value goods exist for the agents, then we can extract such goods using $\ScalingSendMaxToMin$.
Such goods help us compress the remaining goods into a small number of bundles, and we can apply the standard round-robin algorithm for the meta-goods to find an EF1 allocation.
The proofs in this subsection are deferred to Appendix~\ref{app:non-identical-EF1-proof}.

\begin{proposition}\label{prop:non-identical-EF1-heavy-anchors}
    Suppose that valuations are additive.
    If there exist pairwise distinct goods $h_1,\ldots,h_n\in M$ such that $u_i(h_i) > u_i(M)/(nP)$ for every $i\in [n]$ for some integer $P \ge 1$, then a 1-witness EF1 allocation can be computed using  $O(\mathrm{poly}(n, P)\log m)$ comparison queries.
\end{proposition}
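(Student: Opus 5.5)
We follow the outline in the technical overview: witness detection gives each agent a short list of candidate heavy goods, and the remaining (light) goods are compressed into few meta-goods on which round-robin can be simulated cheaply. Since $P$ is unknown, we guess $q=1,2,4,\dots$ by doubling and stop at the first $q$ for which the procedure below succeeds. We will arrange for success to be guaranteed once $q\ge nP$, so there are $O(\log(nP))$ rounds. Each round costs $\mathrm{poly}(n,q)\log m$ queries, which gives $O(\mathrm{poly}(n,P)\log m)$ in total.

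In a round with parameter $q$, for every agent $i$ we run $\ScalingSendMaxToMin(u_i,q,(\emptyset,\dots,\emptyset,M))$. This costs $O(q\log q\log m)$ queries per agent. By \Cref{lem:large-goods-in-witnesses}, the witness set $W_i$ of size at most $q$ contains every good $g$ with $u_i(g)>u_i(M)/q$. We also record the minimum bundle $B^i_1$ of agent $i$'s allocation as a threshold bundle $T_i$, which satisfies $u_i(T_i)\le u_i(M)/q$. Let $H=\bigcup_i W_i$, so $|H|\le nq$. We then build a bipartite graph with an edge from agent $i$ to $h\in W_i$ whenever $\Comp_i(T_i,\{h\})$ certifies $u_i(h)\ge u_i(T_i)$, and we look for a perfect matching of agents into $H$. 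When $q\ge nP$, the hypothesised goods $h_i$ lie in $W_i$ and pass the test, so a matching exists. If no matching exists, we double $q$.

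The crux is what the matching certifies. Matched agent $i$ holds an anchor $a_i$ with $u_i(a_i)\ge u_i(T_i)$. We assign the anchors first, one per agent, and then distribute the rest of $M$. The goods in $H\setminus\{a_i\}$ number at most $nq$, and we sort them per agent with $O(nq\log(nq))$ queries. The light goods $M\setminus H$ we cut into consecutive meta-goods $C_1,\dots,C_r$ greedily. For each agent we use binary search on prefixes so that each piece has value at most $u_i(T_i)$ for every agent $i$, i.e.\ each piece is a maximal prefix not exceeding any agent's threshold. Each such cut costs $O(n\log m)$ queries. A single light good has value at most $u_i(M)/q$ for agent $i$, but it may still exceed $u_i(T_i)$. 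This is the main obstacle, and we would handle it by choosing thresholds more carefully: instead of $T_i$, we would use the bundle obtained from the witness-free part of agent $i$'s allocation, which bounds every light good from above by \Cref{lem:large-goods-in-witnesses}. To bound $r$, note that consecutive pieces each exceed some agent's threshold when joined to the next good, and summing the thresholds shows $r=O(n\cdot q)$ once thresholds are at least $u_i(M)/\mathrm{poly}(n,q)$. Obtaining such a lower bound is the step that needs care, and we would take it from a comparison against the anchor itself.

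Finally, we run round-robin in which every agent has already received her anchor as a first pick. The remaining items are the at most $nq$ heavy non-anchor goods together with the $r$ light meta-goods, and this costs $O(n\cdot(nq+r)\log(nq+r))$ queries. Round-robin over items is EF1 with the witness in $A_j$ being $j$'s first pick relative to $i$. To make this fit the last-good convention, we order each bundle so that the anchor or first pick comes last. Envy from $i$ toward $j$ is then removed by deleting $j$'s earliest item, but that item may be a meta-good rather than a single good. Here anchors are essential. Agent $j$'s first item is its anchor, which is an original good, and every later meta-good $C$ that $j$ picks after $i$ has been compensated by $i$'s own pick. The residual envy is at most the value of one item picked by $j$ before $i$'s turn in that round, which we bound by the anchor $a_j$ via the threshold property $u_i(C)\le u_i(T_i)\le u_i(a_i)$. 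Verifying this final inequality chain, so that light meta-goods are never worth more to $i$ than what $i$ already holds, is the delicate point, and it dictates how the thresholds must be chosen.
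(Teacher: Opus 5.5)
Your skeleton matches the paper's: doubling over $q$, witness sets $W_i$, the test $\Comp_i(T_i,\{h\})$, an agent-saturating matching, compression of the rest into meta-goods, round-robin, and the anchor placed last. However, the step you flag as ``delicate'' is genuinely missing.

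With $a_j=\last(A_j)$, round-robin on the non-anchor items gives $u_i(Y_i)\ge u_i(Y_j\setminus Z)$ for one item $Z$. So you need $u_i(Z)\le u_i(a_i)$ for \emph{every} non-anchor item $Z$ and every agent $i$. Note that it is $i$'s own anchor $a_i$ that compensates, not $a_j$. The bound $u_i(\cdot)\le u_i(T_i)\le u_i(a_i)$ covers light pieces, goods in $H\setminus W_i$ (by \Cref{lem:large-goods-in-witnesses}), and goods in $W_i$ that fail $i$'s test.

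With an \emph{arbitrary} perfect matching, though, a good that passed agent $i$'s test and was left unmatched can be worth far more to $i$ than $a_i$. For example, let $u_1(x)=10$ and $u_1(y)=1$, with both passing agent~1's test. The matching $1\mapsto y$, $2\mapsto z$ leaves $x$ free. If agent~2 precedes agent~1 and takes $x$, then $u_1(A_2-a_2)\ge 10$, while $u_1(A_1)$ may be barely above $1$.

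The paper closes this in two places. First, among agent-saturating matchings it chooses one of maximum total weight, with weights consistent with each agent's sorted order on $C_i$; this gives $u_i(a_i)\ge u_i(g)$ for all $g\in C_i\setminus F$. Second, inside \Cref{lem:anchor-core-compression}, anchor-improving swaps maintain $u_i(h')\le u_i(a_i)$ for every unmatched core good. Your per-agent sort of $H$ supplies exactly the information these steps need, but you never use it.

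Your count of meta-goods is also not correct as planned. First, your ``main obstacle'' is not one: \Cref{lem:large-goods-in-witnesses} is stated relative to the minimum bundle, so every $g\notin W_i$ already satisfies $u_i(g)\le u_i(T_i)$. Second, the lower bound $u_i(T_i)\ge u_i(M)/\mathrm{poly}(n,q)$ that you want is false in general: a few heavy goods can carry almost all of $u_i(M)$, making $u_i(T_i)$ arbitrarily small. What you actually need is the upper bound $u_i(M\setminus W_i)\le q\,u_i(T_i)$. This is immediate from 1-witness EF1, since each bundle minus its witness is worth at most $u_i(T_i)$. Each charged cut satisfies $u_i(C_\ell+y_\ell)\ge u_i(T_i)$, and these sets overlap at most twice, so the cuts charged to $i$ number at most $2q$ whenever $u_i(T_i)>0$.

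The case $u_i(T_i)=0$ remains, and it does occur: an agent may value fewer than $q$ goods positively. Then adversarial tie-breaking in $\Comp_i(\cdot,T_i)$ can force a cut after every good, so $r=\Theta(m)$ and the query bound collapses. The paper avoids this with per-agent charge budgets. An agent is removed from the safety test once her budget is exceeded, which is harmless because exceeding it forces her total value for the compressed goods to be $0$.
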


The assumption is existential, as neither the high-value goods $h_1,\dots, h_n$ nor the integer $P$ is given to the algorithm.

Furthermore, a similar trick works when valuations are additive and binary (i.e., $u_i(g)\in\{0,1\}$ for every agent $i$ and good $g$), and the agents' supports (the set of positive-value goods) overlap arbitrarily, unlike \Cref{prop:non-identical-EF1-heavy-anchors}.
\begin{proposition}\label{prop:non-identical-EF1-binary}
    Suppose that valuations are binary additive.
    Let $S_i\coloneqq \{g\in M\mid u_i(g)=1\}$ for each $i\in [n]$, and let $K\coloneqq \max_{i\in [n]} |S_i|$.
    If $\min_{i\in [n]} |S_i| \geq 1$, then a 1-witness EF1 allocation can be computed using $O(\mathrm{poly}(n,K)\log m)$ comparison queries.
    In particular, if $K=\mathrm{poly}(n, \log m)$, the query complexity is $O(\mathrm{poly}(n,\log m))$.
\end{proposition}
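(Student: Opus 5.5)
The plan is to reduce to a small instance by \emph{witness detection} and then solve that instance directly with round-robin. The reduction rests on one observation. Fix an agent $i$ and run $\ScalingSendMaxToMin(u_i, q, (\emptyset,\dots,\emptyset,M))$ with $q \ge K+1$ bundles. Since $|S_i| \le K < q$ and valuations are binary, some bundle of the output $\B^i$ contains no good of $S_i$, so $\min_{B\in\B^i} u_i(B) = 0$. Every $g \in S_i$ has $u_i(g)=1>0$, so \Cref{lem:large-goods-in-witnesses} gives $S_i \subseteq W_i \coloneqq W(\B^i)$, and $|W_i| \le q-1$. Each such run costs $O(q\log q\log m)$ queries by \Cref{lem:scaling-sendmaxtomin-complexity}. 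Doing this for every agent yields $W \coloneqq \bigcup_i W_i$ with $|W| \le n(q-1)$ and $u_i(M\setminus W)=0$ for all $i$.

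Given such a $W$, build the allocation as follows. Sort each agent's preferences over the at most $nq$ goods of $W$ with $O(nq\log(nq))$ comparisons per agent. Run round-robin on $W$ alone, using each agent's sorted list to pick her favorite remaining good, and let each bundle list its goods in pick order. Round-robin is EF1 with the canonical witness being the envied agent's first pick, so to obtain 1-witness EF1 with respect to $\last$, I would store each bundle in reverse pick order. Finally, prepend the whole set $M\setminus W$ to the front of an arbitrary bundle. This set has value $0$ for every agent, so neither envy nor the designated last goods change. The hypothesis $\min_i |S_i| \ge 1$ is not needed for this step. I expect it to matter only for the stopping rule below, because an agent with $S_i=\emptyset$ would leave every comparison a tie.

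Since $K$ is unknown, I would use doubling: try $q = 2,4,8,\dots$, spending $O(\mathrm{poly}(n,q)\log m)$ queries per phase. The total is then $O(\mathrm{poly}(n,K)\log m)$ as long as we stop by the first phase with $q>K$.

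The main obstacle is the stopping certificate, since ties are adversarial. My first attempt is the query $\Comp_i(\emptyset, M\setminus W_i)$. If it answers \false, then $u_i(M\setminus W_i) \le 0$, which certifies $S_i \subseteq W_i$; once every agent certifies, we stop. The difficulty is that on the tie $u_i(M\setminus W_i)=0$ the oracle may still answer \true. If that happens, I would switch to a strictly certifiable test that uses $|S_i|\ge 1$. Inside $W_i$, sort the goods by $u_i$. Then compare the bundle $W_i$ against $M$ via $\Comp_i(M, W_i)$, and compare $M\setminus W_i$ against a single good of $W_i$ that is maximal in the sort. That good has value $1$ because $S_i\neq\emptyset$, so the answer strictly separates $u_i(M\setminus W_i)\ge 1$ from $u_i(M\setminus W_i)=0$ unless $u_i(M\setminus W_i)=1$. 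In that remaining case I would grow $q$ instead. If none of these comparisons turns out to give a sound certificate, the fallback is to make the round-robin construction itself robust to a leftover set of small value, for instance by compressing $M\setminus W$ into meta-goods with thresholds given by goods of $W$, as in \Cref{prop:non-identical-EF1-heavy-anchors}. With that robustness, correctness would hold for every $q$ beyond some $\mathrm{poly}(K)$ bound, and the phase could be chosen by counting detected positive goods.
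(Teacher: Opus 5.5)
Your reduction and your endgame are sound. With $q>K$ bundles the minimum bundle has $u_i$-value $0$, so \Cref{lem:large-goods-in-witnesses} gives $S_i\subseteq W_i$. Round-robin on $W$, stored in reverse pick order with $M\setminus W$ prepended, is then 1-witness EF1, which is simpler than the paper's route through \Cref{lem:anchor-core-compression}.

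The genuine gap is the stopping rule, which you leave unresolved.
\begin{itemize}
\item As you note, a \false\ answer to $\Comp_i(\emptyset,M\setminus W_i)$, or a \true\ answer to $\Comp_i(M,W_i)$, can only arise on a tie. The adversary can withhold it indefinitely, and the bound in terms of $K$ is lost.
\item Your second test does not separate the cases you claim. If $u_i(M\setminus W_i)=0$, the answers can be $\Comp_i(M,W_i)=$ \false\ (on a tie) and $\Comp_i(M\setminus W_i,\{g^*\})=$ \true\ (strictly). If $u_i(M\setminus W_i)=1=u_i(g^*)$, the answers can be exactly the same (\false\ strictly, \true\ on a tie). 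The algorithm therefore cannot recognize the ``remaining case'' in which it should grow $q$. If it stops there, $M\setminus W$ may contain a good of $S_i$, and EF1 can fail.
\item For $q\le K$, the maximal good of $W_i$ need not have value $1$.
\item The fallback of ``counting detected positive goods'' is not implementable, because no comparison ever certifies $u_i(g)>0$. This is exactly the obstruction the paper remarks on right after the proposition.
\end{itemize}

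The missing idea is to certify an \emph{upper bound} on $K$ instead of the containment $S_i\subseteq W_i$. You then run one extra, larger phase, where containment follows from \Cref{lem:large-goods-in-witnesses} with no certificate at all.

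The paper does this with $C_i=\{g\in W_i\mid \Comp_i(T_i,\{g\})=\true\}$, where $T_i$ is a minimum bundle.
\begin{itemize}
\item When $q_t>K$, $C_i\supseteq S_i\neq\emptyset$ for every $i$, so the test fires.
\item Conversely, $C_i\neq\emptyset$ forces $u_i(T_i)\le 1$. By 1-witness EF1 every bundle then has value at most $2$, so $|S_i|\le 2q_\tau$.
\item Rerunning with $q_{\tau+2}>2q_\tau\ge K$ gives $S_i\subseteq W_i$ for all $i$.
\end{itemize}

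Your own second query suffices if you read it this way. A \true\ answer to $\Comp_i(M\setminus W_i,\{g^*\})$ gives $u_i(M\setminus W_i)\le u_i(g^*)\le 1$, hence $|S_i|=u_i(W_i)+u_i(M\setminus W_i)\le |W_i|+1\le q$. Once $q>K$, the answer is forced to be \true, since $u_i(M\setminus W_i)=0<1=u_i(g^*)$. So stop doubling as soon as every agent answers \true, rerun with $q+1>K$ bundles, and then apply your round-robin step.
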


We remark that the binary assumption in \Cref{prop:non-identical-EF1-binary} is used to discover the support scale without being given $K$.
Since the comparison query cannot determine whether $u_i(X)=0$, the algorithm cannot detect that the number of bundles has exceeded the support sizes.


\section{PROP1+\texorpdfstring{$1/2$}{1/2}-TPS Allocation for Non-Identical and Additive Valuations}

In this section, we turn to the general setting, where each agent $i$ has her own additive valuation $u_i$ and valuations may \emph{not} be identical.
In the comparison-based query model, this setting is substantially more challenging, since comparisons are agent-specific and information cannot be freely shared across agents.
The main result of this section is to provide an $O(n^3 \log m)$-query algorithm whose output is simultaneously PROP1 and $1/2$-TPS (hence $1/2$-MMS).
The construction is based on two consequences from the identical case.
First, our algorithms for the identical case provide a compact ``certificate'' to decide quickly whether a bundle is PROP1 and non-PROP for each agent.
By utilizing this observation, we can accelerate the matching-based framework of Bu, Li, Liu, Song, and Tao~\cite{BLLST2024}.
Second, \Cref{alg:ef1-halfmms} exposes every good whose value exceeds the proportional share as a singleton bundle.
We can ensure that such goods are allocated without harming the TPS values of the remaining agents.
In the following, we first present a simple PROP1 algorithm (\Cref{alg:prop1-non-identical}), and then further strengthen this to guarantee $1/2$-TPS (\Cref{alg:prop1-half-mms-non-identical}) within the same query complexity.

We begin by introducing the notion of 1-witness PROP1 allocations. 
When valuations are additive, any 1-witness EF1 allocation is also 1-witness PROP1.

\begin{definition}
    An allocation $\A = (A_1, A_2, \dots, A_n)$ is said to satisfy \emph{1-witness PROP1} if for all agents $i \in [n]$, either $u_i(A_i) \ge u_i(M) / n$ or there exists a nonempty bundle $A_j \in \A \setminus \{A_i\}$ such that $u_i(A_i + \last(A_j)) \ge u_i(M) / n$.
\end{definition}

\begin{lemma}
\label{lem:1-witness-ef1-implies-1-witness-prop1}
    When valuations are additive, any 1-witness EF1 allocation is also a 1-witness PROP1 allocation.
\end{lemma}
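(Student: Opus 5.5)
The plan is to mirror the argument given just after the definition of PROP1, which showed that EF1 implies PROP1, but to keep track of the specific witness good $\last(A_j)$ rather than an arbitrary one.

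Fix an agent $i$ and suppose $u_i(A_i) < u_i(M)/n$; otherwise there is nothing to show. By additivity, $\sum_{j\in[n]} u_i(A_j) = u_i(M)$. Since the average of $u_i(A_j)$ over $j\in[n]$ equals $u_i(M)/n$ and $u_i(A_i)$ is strictly below it, some bundle $A_j$ with $j\neq i$ satisfies $u_i(A_j) > u_i(M)/n > u_i(A_i)$. In particular $u_i(A_j)>0$, so $A_j$ is nonempty and $\last(A_j)$ is well defined.

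Now apply the 1-witness EF1 condition to the pair $(i,j)$. We have $u_i(A_i) < u_i(A_j)$, so the first alternative fails, and hence $u_i(A_i) \ge u_i(A_j - \last(A_j))$. Additivity then gives
\[
u_i(A_i + \last(A_j)) = u_i(A_i) + u_i(\last(A_j)) \ge u_i(A_j - \last(A_j)) + u_i(\last(A_j)) = u_i(A_j) > u_i(M)/n .
\]
The first equality uses $\last(A_j)\notin A_i$, which holds because the bundles are disjoint. This is exactly the witness required by the definition of 1-witness PROP1, with the nonempty bundle $A_j \in \A\setminus\{A_i\}$.

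There is no real obstacle here. The only points needing care are that the chosen $A_j$ is distinct from $A_i$ and nonempty, so that $\last(A_j)$ exists and is not already in $A_i$, and that additivity is used both for the averaging step and for splitting $u_i(A_i+\last(A_j))$.
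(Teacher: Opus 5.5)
Your proof is correct and follows essentially the same route as the paper: choose a bundle $A_j \neq A_i$ that agent $i$ values at least at the proportional share (the paper takes a maximum-value bundle, you take one strictly above average, which also rules out the first EF1 alternative and gives nonemptiness). Then combine the 1-witness EF1 condition with additivity to conclude $u_i(A_i+\last(A_j))\ge u_i(A_j)\ge u_i(M)/n$.
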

\begin{proof}
    Let $\A = (A_1, A_2, \dots, A_n)$ be a 1-witness EF1 allocation.
    Fix an arbitrary agent $i \in [n]$.
    If $u_i(A_i)\ge u_i(M)/n$, then we are done.
    Otherwise, a bundle $A_j$ maximizing $u_i(A_j)$ is nonempty and satisfies $j\ne i$.
    In this case, by the definition of 1-witness EF1, we have $u_i(A_i + \last(A_j)) \ge u_i(A_j) \ge u_i(M) / n$.
    Therefore, $\A$ is a 1-witness PROP1 allocation.
\end{proof}

\subsection{Matching-Based Framework}
\subsubsection{Subroutine \PROPOneNonPROP}

We introduce the $\PROPOneNonPROP$ subroutine proposed in~\cite{BLLST2024}.
Given a bundle $A_1$ and a valuation function $u$, this subroutine returns one of the following based on the evaluation of bundle $A_1$:
\begin{itemize}
    \item If $u(A_1) > \frac{1}{n} u(M)$, it returns $\true$.
    \item If  $u(A_1) < \frac{1}{n} u(M)$ and $u(A_1 + g) < \frac{1}{n} u(M)$ for all goods $g \in M \setminus A_1$, it returns $\false$.
    \item Otherwise, it may return either $\true$ or $\false$.
\end{itemize}
In other words, if $\PROPOneNonPROP$ returns $\true$, it is guaranteed that $A_1$ satisfies PROP1, and if it returns $\false$, it is guaranteed that $u(A_1) \le u(M) / n$.

Roughly speaking, $\PROPOneNonPROP$ arranges the goods on a path, and repeatedly finds a minimal contiguous bundle along the path whose value exceeds that of the given bundle.
See~\Cref{alg:propone-nonprop} for the formal description.

\begin{algorithm}[htb]
\caption{Determining whether bundle $A_1$ is PROP1 or non-PROP under additive valuation}
\label{alg:propone-nonprop}
\Function {$\PROPOneNonPROP(u, n, M, A_1)$}{
    Let $R \gets M \setminus A_1$ be the remaining goods\;
    \For {$i = 2$ \emph{to} $n$} {
        \lIf {$\Comp(R, A_1) = \true$ or $R = \emptyset$} {let $(D_i, I_i) \gets (R, \emptyset)$}
        \lElse{
            Use binary search to find a contiguous bundle $D_i\cup I_i \subseteq R$ such that $(\Comp(D_i, A_1) = \true \ \text{or} \ D_i=\emptyset)$ and $\Comp(D_i \cup I_i, A_1) = \false$, where $I_i$ is a singleton
        }
        $R \gets R \setminus (D_i \cup I_i)$\;
    }
    \lIf {$R$ is empty} {\Return $\true$}
    \lElse {\Return $\false$}
}
\end{algorithm}

It is shown that $\PROPOneNonPROP$ has the following property.

\begin{lemma}[\cite{BLLST2024}]
\label{lem:prop1-nonprop}
    If \Cref{alg:propone-nonprop} returns $\true$ then $A_1$ satisfies PROP1, and otherwise, $A_1$ satisfies $u(A_1) \le u(M) / n$.
    Moreover, \Cref{alg:propone-nonprop} uses $O(n \log m)$ comparison queries.
\end{lemma}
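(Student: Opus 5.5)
We prove the two claims of \Cref{lem:prop1-nonprop} separately, starting with the query bound. In each of the $n-1$ iterations, we use one comparison to test whether $u(R)<u(A_1)$. If not, we binary search over prefixes of $R$ arranged along a fixed path. Let $P_t$ denote the first $t$ goods of $R$. We find an index $t$ such that $\Comp(P_{t-1},A_1)=\true$ (or $P_{t-1}=\emptyset$) and $\Comp(P_t,A_1)=\false$, and set $D_i=P_{t-1}$ and $I_i$ the $t$-th good. Such a $t$ exists because the predicate is \false{} at $t=|R|$, and we treat it as \true{} at $t=0$. Standard binary search on this boundary takes $O(\log m)$ queries, even with inconsistent tie answers, since we only need some adjacent pair at which the answers flip. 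The total is $O(n\log m)$.

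\textbf{If it returns \false.} Then $R\neq\emptyset$ after the loop. Every iteration $i$ must then have gone to the else-branch; otherwise $R$ would have become empty. So each $D_i\cup I_i$ satisfies $\Comp(D_i\cup I_i,A_1)=\false$, meaning $u(D_i\cup I_i)\ge u(A_1)$. The sets $A_1, D_2\cup I_2,\dots,D_n\cup I_n$ are disjoint, and a nonempty remainder $R$ is left over. By additivity,
\[ u(M)\ge u(A_1)+\sum_{i=2}^n u(D_i\cup I_i)\ge n\,u(A_1), \]
so $u(A_1)\le u(M)/n$.

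\textbf{If it returns \true.} Then $R$ is empty at the end. We split $M$ into $A_1$ and the sets $D_i\cup I_i$ for $i=2,\dots,n$ (empty $I_i$ in the if-branch). Each $D_i$ satisfies $u(D_i)\le u(A_1)$, either from a \true{} answer or because it is empty, and this holds in both branches. Let $g^*$ be the $I_i$-good of maximum value, if any $I_i$ is nonempty; $g^*\notin A_1$. By additivity,
\[ u(M)=u(A_1)+\sum_i u(D_i)+\sum_i u(I_i)\le n\,u(A_1)+(n-1)\,u(g^*)\le n\,u(A_1+g^*). \]
Hence $u(A_1+g^*)\ge u(M)/n$. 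If no $I_i$ is nonempty, the same computation gives $u(A_1)\ge u(M)/n$ directly. In either case $A_1$ is PROP1. This is the step that needs care: the witness is not a single removed good but the best among the $I_i$, and the slack $(n-1)u(g^*)\le n\,u(g^*)$ absorbs the sum. Ties in comparisons never hurt, because we only use the weak inequalities that each answer guarantees.
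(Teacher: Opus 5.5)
Your proof is correct. The false case follows from the $n-1$ disjoint else-branch pieces, each worth at least $u(A_1)$. The true case follows because each piece is worth at most $u(A_1)$ plus one good outside $A_1$, and binary search on a true/false boundary costs $O(\log m)$ queries per round even when tie answers are inconsistent. The paper does not reprove this lemma but quotes it from \cite{BLLST2024}, and your argument is the standard one behind that result.
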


\subsubsection{Want-This Graph and Hall Matching}

The PROP1 algorithm of Bu, Li, Liu, Song, and Tao~\cite{BLLST2024} is a matching-based framework: it repeatedly (1) let some agent $k$ divide unallocated goods into bundles satisfying PROP1 for agent $k$, (2) constructs a bipartite graph between unallocated agents and candidate bundles and selects a non-empty matching via the Dulmage--Mendelsohn decomposition, and (3) assign bundles according to the matching.
We refer to the bipartite graph as the \emph{want-this graph}.
More specifically, the want-this graph is a bipartite graph having agents and bundles as vertices, satisfying the following conditions:
\begin{itemize}
    \item If there is an edge between agent $i$ and bundle $X$, then assigning bundle $X$ to agent $i$ ensures that agent $i$ satisfies PROP1.
    \item If there is no edge between agent $i$ and bundle $X$, then even if bundle $X$ is assigned to another agent, there exists an alternative way to satisfy PROP1 for agent $i$. 
\end{itemize}
In~\cite{BLLST2024}, the output of $\PROPOneNonPROP$ directly determined the presence or absence of edges for unallocated goods every time a bundle is assigned to some agent.
We will describe the idea to accelerate the construction of the want-this graph through preprocessing and binary search in the next subsection.

The following lemma from~\cite{BLLST2024} forms the theoretical foundation for the PROP1 allocation algorithm.

\begin{lemma}[\cite{BLLST2024}]\label{lem:hall-matching}
    Let $(L \cup R, E)$ be a bipartite graph with $n + n$ vertices, and suppose there exists a left vertex $k \in L$ such that $(k, r) \in E$ for all $r \in R$.
    Then there exists a non-empty matching $H \subseteq E$ satisfying the following property:
    \begin{quote}
        For every edge $(l, r) \in E$, if $l$ is unmatched in $H$, then $r$ is also unmatched in $H$.
    \end{quote}
\end{lemma}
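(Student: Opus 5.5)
We will prove the statement by an augmenting-path argument, i.e., a Hall/Dulmage--Mendelsohn-type decomposition. Take a maximum matching $H^*$ of the bipartite graph $(L\cup R,E)$. It is non-empty, since $k$ is adjacent to every $r\in R$ and $R\neq\emptyset$ (there are $n\ge 1$ right vertices). Let $U\subseteq L$ be the set of left vertices unmatched in $H^*$.

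Next, define the set $Z$ of vertices reachable from $U$ by alternating paths: leave a left vertex along any edge of $E$, and leave a right vertex only along its $H^*$-edge. Let $Z_L=Z\cap L$ and $Z_R=Z\cap R$. Maximality of $H^*$ (Berge's lemma) implies that every vertex in $Z_R$ is matched, and that its partner lies in $Z_L$. Hence $|Z_R|=|Z_L|-|U|$ and $N(Z_L)=Z_R$.

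We then set $H$ to be $H^*$ with all edges incident to $Z$ removed; that is, $H$ consists of the $H^*$-edges between $L\setminus Z_L$ and $R\setminus Z_R$. Removing exactly these edges keeps the matched pairs intact: an $H^*$-edge with one endpoint in $Z$ has its other endpoint in $Z$ as well, since each vertex of $Z_R$ is matched into $Z_L$, and each matched vertex of $Z_L$ other than those in $U$ is reached via its partner.

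To check the required property, take an edge $(l,r)\in E$ with $l$ unmatched in $H$. Then either $l\in Z_L$, or $l\notin Z_L$ and $l$ is unmatched in $H^*$. The second case is impossible, because unmatched vertices lie in $U\subseteq Z_L$. So $l\in Z_L$, which gives $r\in N(Z_L)=Z_R$, and every vertex of $Z_R$ is unmatched in $H$.

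The main obstacle is non-emptiness of $H$, and here the vertex $k$ is used. If $k\in Z_L$, then $N(Z_L)\supseteq R$, so $|Z_R|=n$. This forces $|Z_L|=n+|U|$, which exceeds $n$ unless $U=\emptyset$. When $U=\emptyset$ we have $Z=\emptyset$ and $H=H^*$, which is non-empty. Otherwise $k\notin Z_L$, and then $k$ is matched in $H^*$ (since $U\subseteq Z_L$) to some $r\notin Z_R$. That edge survives in $H$, so $H\neq\emptyset$.
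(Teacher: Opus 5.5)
Your proposal is correct and is essentially the construction behind \Cref{alg:hall-matching}: take a maximum matching, collect the vertices reachable from unmatched left vertices by alternating paths, and delete the matched edges at those left vertices. Your count $|Z_L|=|U|+|Z_R|$, combined with $N(k)=R$, gives non-emptiness, a step the paper leaves to \cite{BLLST2024}; the only cosmetic point is that the case $k\in Z_L$ cannot occur at all, since $k\in Z_L$ already forces $U\neq\emptyset$ and hence $|Z_L|>n$.
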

Following~\cite{BLLST2024}, we call a matching satisfying this condition a \emph{Hall matching}.

A Hall matching is crucial for the algorithm of~\cite{BLLST2024}.
By the construction of a want-this graph, in a Hall matching, agents receive bundles satisfying PROP1, or have nothing but do not value the matched bundles more than their own PROP values.
Thus, we may assign some bundles according to the Hall matching without harming the possibility of receiving a PROP1 bundle for the remaining agents.

We present the algorithm $\HallMatching$ to find a Hall matching in~\Cref{alg:hall-matching}.
This computes a maximum-cardinality matching, and then constructs a Hall matching based on the Dulmage--Mendelsohn decomposition.

\begin{algorithm}[ht]
    \caption{Computing a Hall matching in a bipartite graph}
    \label{alg:hall-matching}
    \Function {$\HallMatching(L \cup R, E, k)$}{
        \Precondition{$|L| = |R|$ and $(k, r) \in E$ for all $r \in R$}
        Let $H_m$ be a maximum-cardinality matching in $(L \cup R, E)$\;
        \If {$H_m$ is a perfect matching} {
            \Return $H_m$\;
        }
        \Comment{Find the Dulmage--Mendelsohn decomposition of the graph}
        Let $U_L \subseteq L$ be the set of unmatched left vertices in $H_m$\;
        Let $Z_L \subseteq L$ be the set of left vertices reachable from $U_L$ via alternating paths
        (i.e., paths that alternate between edges in $E \setminus H_m$ and edges in $H_m$)\;
        Let $Z_R \subseteq R$ be the set of right vertices reachable from $U_L$ via alternating paths\;
        Let $H' \gets \{ (l, r) \in H_m \mid l \in Z_L \}$\;
        \Return $H_m \setminus H'$\;
    }
\end{algorithm}

Since \Cref{alg:hall-matching} can be implemented using one maximum-cardinality matching computation and one Breadth-First-Search, the following complexity result holds.

\begin{lemma}
\label{lem:hall-matching-complexity}
    \Cref{alg:hall-matching} runs in $O(T_{\text{BM}}(|L| + |R|, |E|))$ time, where $T_{\text{BM}}(n, m)$ denotes the time complexity of computing a maximum-cardinality matching in a bipartite graph with $n$ vertices and $m$ edges.
\end{lemma}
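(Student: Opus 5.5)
The bound follows by charging the cost of \Cref{alg:hall-matching} to its two phases: one maximum-cardinality matching computation and one alternating-path search. The bookkeeping afterwards (forming $H'$ and returning $H_m\setminus H'$) is linear.

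First, computing $H_m$ costs $T_{\text{BM}}(|L|+|R|,|E|)$ by definition. Testing whether $H_m$ is perfect only requires counting its edges, which takes $O(|L|)$ time.

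Second, the sets $U_L$, $Z_L$ and $Z_R$ come from a single search in an auxiliary directed graph. Orient each non-matching edge from $L$ to $R$ and each matching edge from $R$ to $L$. Alternating paths starting from $U_L$ (first step along a non-matching edge, then a matching edge, and so on) are then exactly the directed paths from $U_L$. Hence one multi-source BFS from all of $U_L$ at once finds every reachable vertex. Each vertex and edge is processed at most once, so this takes $O(|L|+|R|+|E|)$ time. Building $H'$ means scanning $H_m$ and testing membership in $Z_L$ with a marker array, which takes $O(|H_m|)\subseteq O(|L|)$ time. Removing $H'$ from $H_m$ is also linear.

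Finally, the linear terms are absorbed into $T_{\text{BM}}$. This works because any maximum-matching algorithm must at least read its input, so $T_{\text{BM}}(N,K)=\Omega(N+K)$. The total is therefore $O(T_{\text{BM}}(|L|+|R|,|E|))$. The only step needing care is checking that the orientation captures alternating paths exactly; this is the standard Hopcroft–Karp / Dulmage–Mendelsohn construction. Correctness of the output as a Hall matching is not part of this statement and is taken from \Cref{lem:hall-matching} and \cite{BLLST2024}.
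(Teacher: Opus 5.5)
Your proof is correct and follows the paper's own justification, which is the one-line remark that \Cref{alg:hall-matching} is one maximum-cardinality matching computation plus one breadth-first search. You add two details the paper leaves implicit: the standard orientation of matching and non-matching edges, which turns alternating-path reachability from $U_L$ into ordinary directed reachability, and the convention $T_{\text{BM}}(N,K)=\Omega(N+K)$, which lets the linear-time search and bookkeeping be absorbed into $T_{\text{BM}}$.
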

For dense graphs $G=(V, E)$, the state-of-the-art result~\cite{CK2024} implies $T_{\text{BM}}(|V|,|E|) = O(|V|^{2 + o(1)})$.

\subsection{Warm-up: Algorithm for PROP1 Allocation}

We now present our PROP1 algorithm as a warm-up.
We summarize our algorithm in~\Cref{alg:prop1-non-identical}.
Since the cases $n=1$ and $M=\emptyset$ are trivial, we assume that $n\ge 2$ and $M\ne\emptyset$.
Similarly to~\cite{BLLST2024}, we repeatedly do the following: for the remaining good set $P$ and agent set $Q$, compute a 1-witness EF1 allocation $\B$ under some agent $k$'s valuation, construct a Hall matching $H$ on a want-this graph, and assign bundles to agents according to $H$, keeping every unallocated agent's proportional share for the remaining goods at least the original one.
A want-this graph is constructed using $\PROPOneNonPROP$ for the currently unallocated goods in every iteration.
The key observation behind our speedup is that it suffices to know the answer of $\PROPOneNonPROP$ with respect to the original instance.
We accelerate the algorithm by determining the answer of $\PROPOneNonPROP$ without executing it in most cases as follows.
We precompute a 1-witness PROP1 allocation $\mathcal{C}$ under each agent's valuation, and learn $T_i \coloneqq C_1$ and the witness good $g_i$ as the certificate of PROP1 for the agent $i$.
Instead of executing $\PROPOneNonPROP$, we just compare the target bundle with $C_1$.
The exception is when the target bundle contains the PROP1 witness for $C_1$, and in this case, we call $\PROPOneNonPROP$.
Here, it can happen that no bundle is ``better than $T_i$.''
However, we will show that in this case, every bundle is PROP.

\begin{algorithm}[htb]
\caption{Computing a PROP1 allocation for non-identical and additive valuations}
\label{alg:prop1-non-identical}
\Function {${\sf PROP1}([n], M)$}{
    \Comment{Preprocessing: Find thresholds and witnesses}
    \For {$i \in [n]$}{\label{line:prop1-preprocess-start}
        Compute a 1-witness PROP1 allocation for agent $i$ by
        $\mathcal{C} \gets \ScalingSendMaxToMin(u_i, n, (\emptyset, \dots, \emptyset, M))$\;\label{line:prop1-define-Ti-gi-start}
        \lIf {$\mathcal{C}$ contains an empty bundle}{Relabel the bundles so that $C_1$ is empty}
        \Comment{$T_i$ serves as the threshold for non-PROP bundles}
        Let $T_i \gets C_1$\;
        \Comment{$g_i$ serves as a witness that any bundle with value at least $u(T_i)$ satisfies PROP1}
        Let $g_i \in \{\last(C_j) \mid j \in [n]\setminus \{1\} \text{ and } C_j \ne \emptyset\}$ that maximizes $u_i(g_i)$\;\label{line:prop1-define-Ti-gi-end}
    }
    \Comment{Main loop}
    Let $P \gets M$ be the set of unallocated goods\;
    Let $Q \gets [n]$ be the set of unallocated agents\;
    Initialize allocation $\A = (A_1, A_2, \dots, A_n)$ with $A_i \gets \emptyset$ for all $i \in [n]$\;
    \While{$Q$ is not empty} {
        Choose an arbitrary agent $k \in Q$\;
        Compute a PROP1 allocation for agent $k$ by
        $\B \gets \ScalingSendMaxToMin(u_k, |Q|, (\emptyset, \dots, \emptyset, P))$\;\label{line:prop1-compute-B}
        \Comment{Construct the want-this graph $(Q \cup \B, E)$}
        Let $E \gets \{ (k, B_j) \mid B_j \in \B \}$\;\label{line:prop1-graph-start}
        \ForEach {$i\in Q \setminus \{k\}$} {
            Let $F_i$ be a bundle maximizing $u_i(B)$ over $B \in \B$\;\label{line:prop1-set-Bmax}
            Add edge $(i, F_i)$ to $E$\;
            \If {$\Comp_i(F_i, T_i)$} {
                Add all edges $\{(i, B_j) \mid B_j \in \B\}$ to $E$\;
                \textbf{continue}\;\label{line:prop1-continue}
            }
            \ForEach {$B_j \in \B \setminus \{F_i\}$} {
                \lIf {$\Comp_i(B_j, T_i)$} {\textbf{continue}}\label{line:prop1-comp}
                \If {$g_i \notin B_j$ \emph{or} $\PROPOneNonPROP(u_i, |Q|, P, B_j)$} {\label{line:prop1-nonprop}
                    Add edge $(i, B_j)$ to $E$\;
                }
            }
        }\label{line:prop1-graph-end}
        \ForEach {$(i, B_j) \in \HallMatching(Q \cup \B, E, k)$} {
            \Comment{Assign bundle $B_j$ to agent $i$}
            Update $A_i \gets B_j$, $Q \gets Q \setminus \{i\}$, and $P \gets P \setminus B_j$\;\label{line:prop1-assign}
        }
    }
    \Return $\A = (A_1, A_2, \dots, A_n)$\;
}
\end{algorithm}
\clearpage

We now establish the correctness of \Cref{alg:prop1-non-identical}.

\begin{lemma}
\label{lem:fast-prop1-nonprop}
    The values $T_i$ and $g_i$ computed in lines~\ref{line:prop1-define-Ti-gi-start}--\ref{line:prop1-define-Ti-gi-end} of \Cref{alg:prop1-non-identical} satisfy the following properties:
    \begin{itemize}
        \item For any bundle $X$, if $\Comp_i(X,T_i)=\true$, then $u_i(X)\le u_i(M)/n$.
        \item For any bundle $X$, if $g_i \notin X$ and $\Comp_i(X, T_i) = \false$, then $X$ satisfies PROP1 for agent $i$, that is, either $u_i(X) \ge u_i(M) / n$ or $u_i(X + g) \ge u_i(M) / n$ for some good $g \in M \setminus X$.
    \end{itemize}
\end{lemma}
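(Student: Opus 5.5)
The plan is to work directly with the 1-witness EF1 allocation $\mathcal{C}=(C_1,\dots,C_n)$ computed for agent $i$ in line~\ref{line:prop1-define-Ti-gi-start}. By \Cref{lem:scaling-sendmaxtomin-is-witness-EF1} it is 1-witness EF1 for the identical valuation $u_i$, and its output is sorted so that $u_i(C_1)\le\cdots\le u_i(C_n)$. If some bundle is empty, it has value $0$, which is the minimum, so relabeling it as $C_1$ preserves the property that $C_1$ is a minimum-value bundle. Hence $T_i=C_1$ satisfies $u_i(T_i)=\min_j u_i(C_j)\le u_i(M)/n$, since the $n$ bundles partition $M$.

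For the first item, suppose $\Comp_i(X,T_i)=\true$. The oracle semantics give $u_i(X)\le u_i(T_i)\le u_i(M)/n$, and we are done.

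For the second item, suppose $g_i\notin X$ and $\Comp_i(X,T_i)=\false$, so $u_i(X)\ge u_i(T_i)$. If $u_i(X)\ge u_i(M)/n$ we are done, so assume $u_i(X)<u_i(M)/n$.
\begin{itemize}
\item First, some bundle $C_j$ has $u_i(C_j)\ge u_i(M)/n$. We may take $j\ne 1$: otherwise $u_i(C_1)\ge u_i(M)/n$ forces every bundle to equal $u_i(M)/n$, and then $C_n$ also works. This $C_j$ is nonempty when $u_i(M)>0$; if $u_i(M)=0$, then $X$ is trivially PROP.
\item By 1-witness EF1 applied to $C_1$ versus $C_j$, we get $u_i(C_j-\last(C_j))\le u_i(C_1)$. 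Additivity then gives $u_i(M)/n\le u_i(C_1)+u_i(\last(C_j))$.
\item By the choice of $g_i$ as the witness of maximum $u_i$-value among $\{\last(C_j)\mid j\ne1,\ C_j\neq\emptyset\}$, we have $u_i(\last(C_j))\le u_i(g_i)$.
\end{itemize}
Combining these, $u_i(X+g_i)=u_i(X)+u_i(g_i)\ge u_i(T_i)+u_i(g_i)\ge u_i(M)/n$. Since $g_i\notin X$, the good $g=g_i\in M\setminus X$ witnesses PROP1.

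The main obstacle is bookkeeping rather than depth. One must make sure the relabeling of an empty bundle to $C_1$ keeps $C_1$ a minimum-value bundle, so that the 1-witness EF1 inequality is stated against $C_1$. One must also make sure the witness set used to define $g_i$ is nonempty and contains $\last(C_j)$ for the high-value bundle $j\ne1$. The degenerate cases ($u_i(M)=0$, or all bundles having equal value) are handled separately as above. Beyond these checks the argument uses only additivity and the definitions.
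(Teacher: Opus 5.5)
Your proposal is correct and follows essentially the same route as the paper. The first item is the same chain $u_i(X)\le u_i(T_i)=u_i(C_1)\le u_i(M)/n$. For the second item, you inline the argument of \Cref{lem:1-witness-ef1-implies-1-witness-prop1}, applying the 1-witness EF1 inequality against a maximum-value bundle $C_j$ with $j\ne 1$, whereas the paper cites that lemma to get $u_i(C_1+g_i)\ge u_i(M)/n$ and then uses $u_i(X+g_i)\ge u_i(T_i+g_i)$; your extra bookkeeping on the relabeling and the degenerate cases is sound.
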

\begin{proof}
    \underline{\emph{For the first property:}} Since $\mathcal{C}$ satisfies $u_i(C_1) \le u_i(C_2) \le \dots \le u_i(C_n)$, we have $u_i(T_i) = u_i(C_1) \le u_i(M) / n$.
    Therefore, if $\Comp_i(X, T_i) = \true$, then $u_i(X) \le u_i(T_i) \le u_i(M) / n$.

    \UnderlineCase{For the second property:} Since $\mathcal{C}$ is a 1-witness PROP1 allocation by \Cref{lem:1-witness-ef1-implies-1-witness-prop1}, if $g_i \notin X$ and $\Comp_i(X, T_i) = \false$, then $u_i(X + g_i) \ge u_i(T_i + g_i) = u_i(C_1 + g_i) \ge u_i(M) / n$, which implies that bundle $X$ satisfies PROP1 for agent $i$.
\end{proof}

\begin{lemma}
\label{lem:increasing-prop}
    At each iteration of the while loop in \Cref{alg:prop1-non-identical}, we have $u_i(P) / |Q| \ge u_i(M) / n$ for any $i \in Q$.
    Furthermore, the want-this graph $(Q \cup \B, E)$ constructed in lines~\ref{line:prop1-graph-start}--\ref{line:prop1-graph-end} satisfies 
    $u_i(B_j) \le u_i(P) / |Q|$ for any non-edge $(i, B_j) \not\in E$.
\end{lemma}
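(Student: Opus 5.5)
The plan is to prove both claims together by induction on the iterations of the while loop. The first claim is the invariant; the second is a property of the graph built in the current iteration. Given the invariant at the start of an iteration, I will use the second claim to show the invariant holds at the start of the next one.

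\emph{Base case.} Initially $P=M$ and $Q=[n]$, so the invariant $u_i(P)/|Q|\ge u_i(M)/n$ holds with equality.

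\emph{Non-edges.} Fix an iteration and an agent $i\in Q\setminus\{k\}$ with $(i,B_j)\notin E$. Agent $k$ is adjacent to every bundle, and $F_i$ is always adjacent to $i$. Moreover, if $\Comp_i(F_i,T_i)$ returned \true, all edges were added. So a non-edge arises in exactly two ways.

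\begin{itemize}
\item \textbf{Case 1:} $\Comp_i(B_j,T_i)=\true$. By the first property of \Cref{lem:fast-prop1-nonprop}, $u_i(B_j)\le u_i(M)/n$. The invariant then gives $u_i(B_j)\le u_i(P)/|Q|$.
\item \textbf{Case 2:} $g_i\in B_j$ and $\PROPOneNonPROP(u_i,|Q|,P,B_j)$ returned \false. By \Cref{lem:prop1-nonprop}, applied to the instance with goods $P$ and $|Q|$ agents, $u_i(B_j)\le u_i(P)/|Q|$.
\end{itemize}

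So in both cases we get the desired bound, without needing to know anything about $F_i$ beyond its being an edge.

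\emph{Maintaining the invariant.} Let $H$ be the Hall matching. Let $Q'$ be the set of matched agents and $\B'$ the set of matched bundles, with $|Q'|=|\B'|=s\ge1$. Fix an agent $i\in Q\setminus Q'$. By the Hall matching property (\Cref{lem:hall-matching}), every edge from an unmatched agent goes to an unmatched bundle. Hence every matched bundle $B\in\B'$ is a non-edge for $i$, and by the previous step $u_i(B)\le u_i(P)/|Q|$. Summing over $\B'$ and using additivity,
\[
u_i\Bigl(P\setminus\bigcup\B'\Bigr)\ \ge\ u_i(P)-s\,\frac{u_i(P)}{|Q|}\ =\ (|Q|-s)\frac{u_i(P)}{|Q|}.
\]
Dividing by $|Q|-s$ gives
\[
\frac{u_i(P\setminus\bigcup\B')}{|Q|-s}\ \ge\ \frac{u_i(P)}{|Q|}\ \ge\ \frac{u_i(M)}{n}.
\]
Agent $k$ is adjacent to every bundle, so $k$ is always matched. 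Therefore any agent remaining in $Q$ is some $i\ne k$ covered by this argument.

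The main obstacle I expect is the Case~2 application of \Cref{lem:prop1-nonprop}. It must be read as a guarantee relative to the current remaining instance, namely $u_i(B_j)\le u_i(P)/|Q|$, rather than relative to $M$ and $n$. I also need to check that $\B$ partitions $P$ into $|Q|$ bundles, so that the call's parameters match the lemma's setting.
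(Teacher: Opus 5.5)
Your proposal is correct and takes essentially the same approach as the paper. Both run an induction in which the invariant at the current iteration yields the non-edge bound, using \Cref{lem:fast-prop1-nonprop} in the threshold case and \Cref{lem:prop1-nonprop} in the witness case, and the Hall matching property then shows that only bundles worth at most $u_i(P)/|Q|$ to each remaining agent are removed. Your worry in the last paragraph is also unnecessary: \Cref{lem:prop1-nonprop}, invoked with parameters $(|Q|,P)$, needs only $B_j\subseteq P$ to give $u_i(B_j)\le u_i(P)/|Q|$, so no partition property of $\B$ is required.
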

\begin{proof}
    We prove by induction on the number of iterations.
    
    \UnderlineCase{Base case:} 
    In the first iteration, we have $(Q, P) = ([n], M)$, so clearly $u_i(P) / |Q| = u_i(M) / n$ for any $i \in Q$.
    For the want-this graph, if $(i, B_j) \not\in E$, then either $\Comp_i(B_j, T_i) = \true$ or $\PROPOneNonPROP(u_i, |Q|, P, B_j) = \false$.
    If $\Comp_i(B_j, T_i) = \true$, then by \Cref{lem:fast-prop1-nonprop}, we have $u_i(B_j) \le u_i(M) / n = u_i(P) / |Q|$.
    If $\PROPOneNonPROP(u_i, |Q|, P, B_j) = \false$, then by \Cref{lem:prop1-nonprop}, we have $u_i(B_j) \le u_i(P) / |Q|$.
    Thus, in either case, $(i, B_j) \not\in E$ implies $u_i(B_j) \le u_i(P) / |Q|$.

    \UnderlineCase{Inductive step:}
    Let $P', Q', E', \B'$ denote the values of $P, Q, E, \B$ in the previous iteration, respectively, and let $H'$ denote the Hall matching computed in the previous iteration.
    By the inductive hypothesis, for any $i \in Q'$, we had $u_i(P') / |Q'| \ge u_i(M) / n$, and for any non-edge $(i, B_j') \not\in E'$, we had $u_i(B_j') \le u_i(P') / |Q'|$.

    Fix an arbitrary agent $i \in Q$. For any $(i', B_j') \in H'$, by \Cref{lem:hall-matching}, we have $(i, B_j') \not\in E'$, which implies $u_i(B_j') \le u_i(P') / |Q'|$.
    Since only bundles with value at most the average are removed, we have $u_i(P) / |Q| \ge u_i(P') / |Q'| \ge u_i(M) / n$.
    The want-this graph property follows by the same argument as in the base case.
\end{proof}

\begin{lemma}
\label{lem:want-this-graph-true}
    The want-this graph $(Q \cup \B, E)$ constructed in lines~\ref{line:prop1-graph-start}--\ref{line:prop1-graph-end} of \Cref{alg:prop1-non-identical} satisfies the following:
    \begin{quote}
        For any edge $(i, B_j) \in E$, bundle $B_j$ satisfies PROP1 for agent $i$.
        That is, either $u_i(B_j) \ge u_i(M) / n$ or there exists a good $g \in M \setminus B_j$ such that $u_i(B_j + g) \ge u_i(M) / n$.
    \end{quote}
\end{lemma}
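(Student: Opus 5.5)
The plan is a case analysis over the ways an edge $(i,B_j)$ can enter $E$ in lines~\ref{line:prop1-graph-start}--\ref{line:prop1-graph-end}. In each case I will first establish a PROP1-type inequality relative to the current instance $(P,|Q|)$. I will then lift it to the original instance $(M,n)$ using \Cref{lem:increasing-prop}, which gives $u_i(P)/|Q|\ge u_i(M)/n$ for every $i\in Q$. The lifting is harmless because any good $g\in P\setminus B_j$ also lies in $M\setminus B_j$.

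\emph{Edges of agent $k$.} $\B$ is the output of $\ScalingSendMaxToMin(u_k,|Q|,(\emptyset,\dots,\emptyset,P))$. By \Cref{lem:scaling-sendmaxtomin-is-witness-EF1} it is 1-witness EF1 for $u_k$ on the goods $P$ with $|Q|$ bundles. By \Cref{lem:1-witness-ef1-implies-1-witness-prop1}, applied to this instance, it is 1-witness PROP1 there. Hence each $B_j$ satisfies either $u_k(B_j)\ge u_k(P)/|Q|$ or $u_k(B_j+g)\ge u_k(P)/|Q|$ for some $g\in P\setminus B_j$. Lifting gives PROP1 for $k$.

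\emph{Edge $(i,F_i)$.} $F_i$ maximizes $u_i$ over the $|Q|$ bundles partitioning $P$, so $u_i(F_i)\ge u_i(P)/|Q|\ge u_i(M)/n$. Thus $F_i$ is even PROP for $i$.

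\emph{Edges added when $\Comp_i(F_i,T_i)=\true$.} By the first property of \Cref{lem:fast-prop1-nonprop}, $u_i(F_i)\le u_i(M)/n$. Combined with $u_i(P)/|Q|\ge u_i(M)/n$, every bundle has $u_i(B_j)\le u_i(F_i)\le u_i(P)/|Q|$. The $|Q|$ bundles sum to $u_i(P)$, so equality holds throughout. Therefore $u_i(B_j)=u_i(P)/|Q|\ge u_i(M)/n$ for all $j$, and all these edges are PROP.

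\emph{Remaining edges, from line~\ref{line:prop1-nonprop}.} Here $\Comp_i(B_j,T_i)=\false$. If $g_i\notin B_j$, the second property of \Cref{lem:fast-prop1-nonprop} gives PROP1 for $i$ directly with respect to $M$ and $n$. Otherwise $\PROPOneNonPROP(u_i,|Q|,P,B_j)=\true$. By \Cref{lem:prop1-nonprop}, $B_j$ is then PROP1 relative to $(P,|Q|)$, and the lifting argument finishes the case.

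The main point needing care is the third case. There an edge to every bundle is added without any per-bundle test, so the averaging argument must show that all bundles actually have value exactly at the average. Everything else is a direct citation of the earlier lemmas.
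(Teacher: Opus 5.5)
Your proposal is correct and follows the paper's proof essentially verbatim: the same case split on how an edge enters $E$, the same lifting from $(P,|Q|)$ to $(M,n)$ via \Cref{lem:increasing-prop}, and the same averaging argument showing every bundle has value exactly $u_i(P)/|Q|$ when $\Comp_i(F_i,T_i)=\true$. The only cosmetic difference is that in that case you cite the first property of \Cref{lem:fast-prop1-nonprop} directly, whereas the paper re-derives $u_i(T_i)\le u_i(M)/n$ from the sorted order of $\mathcal{C}$.
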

\begin{proof}
    An edge $(i, B_j) \in E$ is added in one of the following cases:
    \begin{itemize}
        \item $i = k$.
        \item $B_j = F_i$.
        \item $\Comp_i(F_i, T_i) = \true$.
        \item $\Comp_i(B_j, T_i) = \false$ and $g_i \notin B_j$.
        \item $\PROPOneNonPROP(u_i, |Q|, P, B_j) = \true$.
    \end{itemize}

    \UnderlineCase{When $i = k$:} The allocation $\B = (B_1, B_2, \dots, B_{|Q|})$ is a PROP1 allocation with respect to $(u_i, |Q|, P)$.
    Therefore, each bundle $B_j \in \B$ satisfies either $u_i(B_j) \ge u_i(P) / |Q|$ or there exists a good $g \in P \setminus B_j$ such that $u_i(B_j + g) \ge u_i(P) / |Q|$.
    By \Cref{lem:increasing-prop}, we have $u_i(P) / |Q| \ge u_i(M) / n$, and thus $B_j$ satisfies PROP1 with respect to $(u_i, n, M)$ as well.

    \UnderlineCase{When $B_j = F_i$:} Since $F_i$ has the maximum value among the $|Q|$ bundles that partition $P$, we have $u_i(F_i) \ge u_i(P)/|Q| \ge u_i(M)/n$, where the second inequality follows from \Cref{lem:increasing-prop}.
    Hence $F_i$ is PROP for agent $i$.

    \UnderlineCase{When $\Comp_i(F_i,T_i) = \true$:} This condition implies that $u_i(F_i) \le u_i(T_i)$.
    Since $\mathcal{C}$ satisfies $u_i(C_1) \le u_i(C_2) \le \dots \le u_i(C_n)$, we have $u_i(T_i) \le u_i(M)/n$.
    Moreover, by the previous case, we also have $u_i(F_i) \ge u_i(P)/|Q| \ge u_i(M)/n$.
    Thus, it follows that $u_i(F_i) = u_i(T_i) = u_i(M)/n$ and $u_i(P)/|Q|=\,u_i(M)/n$.
    Since every bundle in $\B$ has value at most $u_i(F_i)$ and the bundles in $\B$ partition $P$, every bundle in $\B$ has value exactly $u_i(M)/n$.
    Therefore, $B_j$ is PROP.
    
    \UnderlineCase{When $\Comp_i(B_j, T_i) = \false$ and $g_i \notin B_j$:} By \Cref{lem:fast-prop1-nonprop}, bundle $B_j$ satisfies the PROP1 condition.

    \UnderlineCase{When $\PROPOneNonPROP(u_i, |Q|, P, B_j) = \true$:} By \Cref{lem:prop1-nonprop}, bundle $B_j$ satisfies PROP1 with respect to $(u_i, |Q|, P)$.
    By \Cref{lem:increasing-prop}, we have $u_i(P) / |Q| \ge u_i(M) / n$, and thus $B_j$ satisfies PROP1 with respect to $(u_i, n, M)$ as well.
\end{proof}

Summing up the above lemmas, we obtain the following theorem.
\begin{theorem}
\label{thm:prop1-non-identical}
    For non-identical and additive valuations, 
    \Cref{alg:prop1-non-identical} computes a PROP1 allocation using $O(n^3 \log m)$ comparison queries.
\end{theorem}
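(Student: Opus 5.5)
The proof has two parts: correctness (the output is PROP1) and the query bound $O(n^3\log m)$.

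For correctness, I would first check that the while loop terminates with every agent assigned. In each iteration the want-this graph contains all edges $(k,B_j)$, so \Cref{lem:hall-matching} yields a non-empty Hall matching. Hence $|Q|$ strictly decreases, and there are at most $n$ iterations. Every assigned pair $(i,B_j)$ is an edge of $E$, so by \Cref{lem:want-this-graph-true} the bundle $B_j$ is PROP1 for agent $i$ with respect to $(u_i,n,M)$. The witness good lies in $M\setminus B_j$, which suffices because PROP1 allows any good outside the agent's bundle. Since $P$ shrinks exactly by the assigned bundles, and the last iteration allocates all of $P$, the bundles partition $M$.

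\textbf{Preprocessing.} I would count queries phase by phase. Each of the $n$ calls to $\ScalingSendMaxToMin$ costs $O(n\log n\log(m/n))$ queries by \Cref{lem:scaling-sendmaxtomin-complexity}. Choosing $g_i$ among at most $n-1$ witnesses costs $O(n)$ queries. Altogether preprocessing costs $O(n^2\log n\log m)$, which is $O(n^3\log m)$.

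\textbf{Each main-loop iteration.}
\begin{itemize}
\item Computing $\B$ costs $O(n\log n\log m)$.
\item For each $i\in Q\setminus\{k\}$, finding $F_i$ costs $O(n)$ queries, and the comparisons $\Comp_i(F_i,T_i)$ and $\Comp_i(B_j,T_i)$ cost $O(n)$ more.
\item $\PROPOneNonPROP$ is invoked only when $g_i\in B_j$. Since the bundles of $\B$ are disjoint, this happens for at most one $B_j$ per agent per iteration, at a cost of $O(n\log m)$ by \Cref{lem:prop1-nonprop}.
\end{itemize}
Summed over at most $n$ agents, an iteration costs $O(n^2\log m + n\log n\log m)=O(n^2\log m)$. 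Over at most $n$ iterations this gives $O(n^3\log m)$, which dominates preprocessing.

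The main point requiring care is the "at most one $\PROPOneNonPROP$ call per agent per iteration" argument, which relies on disjointness of the bundles in $\B$. This is exactly where the saving of a factor $n$ over the $O(n^4\log m)$ algorithm of~\cite{BLLST2024} comes from. I would also note that $\PROPOneNonPROP$ is run on the current instance $(u_i,|Q|,P)$, and its $O(n\log m)$ bound holds since $|P|\le m$.
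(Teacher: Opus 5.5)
Your proposal is correct and follows essentially the same route as the paper. PROP1 of each assigned bundle follows from its being an edge of the want-this graph via \Cref{lem:want-this-graph-true}, and the query bound is dominated by at most one \PROPOneNonPROP\ call per agent per iteration (disjointness of the bundles in $\B$), over at most $n$ iterations, giving $O(n^3\log m)$; your extra check that the last iteration is a perfect matching, so the assigned bundles partition $M$, is a detail the paper leaves implicit.
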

\begin{proof}
    Execute \Cref{alg:prop1-non-identical} and let $\A$ be the output.
    
    Bundles are assigned to agents at line~\ref{line:prop1-assign}, where $(i, B_j) \in E$ holds.
    By \Cref{lem:want-this-graph-true}, the assigned bundle $B_j$ satisfies PROP1 for agent $i$.
    Therefore, the final output allocation $\A$ is a PROP1 allocation.

    The preprocessing phase (lines~\ref{line:prop1-preprocess-start}--\ref{line:prop1-define-Ti-gi-end}) uses $O(n^2 \log n \log m)$ comparison queries.
    We now analyze the main loop.
    Since the Hall matching is non-empty by~\Cref{lem:hall-matching}, at least one agent is assigned a bundle in each iteration.
    Therefore, the while loop executes at most $n$ iterations.
    In each iteration, line~\ref{line:prop1-compute-B} uses $O(n\log n\log m)$ comparison queries to compute $\B$, lines~\ref{line:prop1-set-Bmax}--\ref{line:prop1-continue} use $O(n)$ comparison queries per agent, line~\ref{line:prop1-comp} performs one comparison query for each of $O(n^2)$ pairs, and line~\ref{line:prop1-nonprop} invokes $\PROPOneNonPROP$ for at most $n$ pairs.
    Recall that $\PROPOneNonPROP$ uses $O(n \log m)$ comparison queries by \Cref{lem:prop1-nonprop}.
    Overall, the bottleneck is $O(n^2)$ calls to $\PROPOneNonPROP$, yielding $O(n^3 \log m)$ comparison queries in total.
\end{proof}

\subsection{Additional \texorpdfstring{$1/2$}{1/2}-TPS Guarantee}

In this subsection, we add the TPS guarantee into the approach of \Cref{alg:prop1-non-identical}.
We present an algorithm that outputs a PROP1 and $1/2$-TPS allocation in $O(n^3\log m)$ comparison queries 
(\Cref{thm:prop1-half-mms-non-identical}).
When $|M|<n$, an allocation in which each good is assigned to a distinct agent is 1-witness EF1, and hence PROP1, while both TPS and MMS values are zero.
Thus, we may assume that $|M|\ge n$.

To guarantee $1/2$-TPS, we need to be careful about high-value goods, particularly those whose value exceeds the proportional share for some agent.
Our algorithm, which is formally described in \Cref{alg:prop1-half-mms-non-identical}, adopts $\EFOneHalfMMS$ (\Cref{alg:ef1-halfmms}) instead of $\ScalingSendMaxToMin$.
This allows us to learn each agent's PROP1 certificate and identify such high-value goods.
If such a good is exposed, then we immediately assign the good as a singleton bundle to the agent.
This assignment does not harm any other agent as long as we focus on PROP1 and $1/2$-TPS.
These modifications preserve the query complexity while ensuring the $1/2$-TPS (hence $1/2$-MMS) guarantee.

\begin{algorithm}[htb]
    \caption{Computing a PROP1+$1/2$-TPS allocation for non-identical and additive valuations}
    \label{alg:prop1-half-mms-non-identical}
    \Function {$\text{\sf PROP1+$1/2$-TPS}([n], M)$} {
        \Precondition{$|M| \ge n$}
        Let $P \gets M$ be the set of unallocated goods\;
        Let $Q \gets [n]$ be the set of unallocated agents\;
        Initialize allocation $\A = (A_1, A_2, \dots, A_n)$ with $A_i \gets \emptyset$ for all $i \in [n]$\;
        \Comment{Preprocessing: Find thresholds and witnesses}
        \For {$i \in [n]$}{\label{line:prop1-half-mms-preprocessing}
            Compute a 1-witness EF1+$1/2$-TPS allocation for agent $i$ by
            $\mathcal{C} \gets \EFOneHalfMMS(u_i, |Q|, P)$\;\label{line:prop1-half-mms-compute-C}
            \If {$\mathcal{C}$ contains a singleton} {\label{line:prop1-half-mms-check-singleton-prep}
                \Comment{Assign $C_j$ to agent $i$; Otherwise, no good has value exceeding $u_i(P) / |Q|$}
                Let $C_j$ be a singleton with maximum value\;
                Update $A_i \gets C_j$, $Q \gets Q \setminus \{i\}$, and $P \gets P \setminus C_j$\;\label{line:prop1-half-mms-assign-singleton-prep}
                \textbf{continue}\;
            }
            \lIf {$\mathcal{C}$ contains an empty bundle}{Relabel the bundles so that $C_1$ is empty}
            \Comment{$T_i$ serves as the threshold for non-$1/2$-TPS bundles}
            Let $T_i \gets C_1$\; \label{line:prop1-half-mms-define-Ti}
            \Comment{$g_i$ serves as a witness that any bundle with value at least $u(T_i)$ satisfies PROP1}
            \If {$|Q|>1$}{
                Let $g_i \in \{\last(C_j) \mid j \in [|Q|]\setminus \{1\} \text{ and } C_j \ne \emptyset\}$ that maximizes $u_i(g_i)$\; \label{line:prop1-half-mms-define-gi}
            }
        }\label{line:prop1-half-mms-preprocessing-end}
        \Comment{Main loop}
        \While{$Q$ is not empty\label{line:prop1-half-mms-main-loop}} {
            Choose an arbitrary agent $k \in Q$\;
            Compute a 1-witness EF1+$1/2$-TPS allocation for agent $k$ by
            $\B \gets \EFOneHalfMMS(u_k, |Q|, P)$\;\label{line:prop1-half-mms-compute-B}
            \If {$\B$ contains a singleton (i.e., bundle with size 1)} {
                \Comment{Assign $B_j$ to agent $k$; otherwise, all bundles have value at least $u_k(P) / 2|Q|$}
                Let $B_j$ be a singleton with maximum value\;\label{line:prop1-half-mms-find-singleton}
                Update $A_k \gets B_j$, $Q \gets Q \setminus \{k\}$, and $P \gets P \setminus B_j$\;\label{line:prop1-half-mms-assign-singleton-main}
                \textbf{continue}\;
            }
            \Comment{Construct the want-this graph $(Q \cup \B, E)$}
            Let $E \gets \{ (k, B_j) \mid B_j \in \B \}$\;
            \ForEach {$i \in Q \setminus \{k\}$} {
                Let $F_i$ be a bundle maximizing $u_i(B)$ over $B \in \B$\;\label{line:prop1-half-mms-find-max}
                Add edge $(i, F_i)$ to $E$\;
                \If {$\Comp_i(F_i, T_i)$} {
                    Add all edges $\{(i, B_j) \mid B_j \in \B\}$ to $E$\;
                    \textbf{continue}\;\label{line:prop1-half-mms-add-all-edges}
                }
                \ForEach {$B_j \in \B \setminus \{F_i\}$} {
                    \lIf {$\Comp_i(B_j, T_i)$} {\textbf{continue}}\label{line:prop1-half-mms-compare}
                    \If {$g_i \notin B_j$ \emph{or} $\PROPOneNonPROP(u_i, |Q|, P, B_j)$} {\label{line:prop1-half-mms-check-witness}
                        Add edge $(i, B_j)$ to $E$\;
                    }
                }
            }
            \ForEach {$(i, B_j) \in \HallMatching(Q \cup \B, E, k)$} {
                \Comment{Assign bundle $B_j$ to agent $i$}
                Update $A_i \gets B_j$, $Q \gets Q \setminus \{i\}$, and $P \gets P \setminus B_j$\;\label{line:prop1-half-mms-assign-bundle}
            }
        }
        \Return $\A = (A_1, A_2, \dots, A_n)$\;
    }
\end{algorithm}

We now establish the correctness of \Cref{alg:prop1-half-mms-non-identical}.
\newcommand{\prop}{\mathit{prop}}
For each agent $i$, let $P_i$ and $Q_i$ denote the sets $P$ and $Q$ at the beginning of the preprocess (line~\ref{line:prop1-half-mms-preprocessing}) for agent $i$, and we denote 
$$\prop_i \coloneqq \frac{u_i(P_i)}{|Q_i|}.$$
Agents assigned singleton bundles in the preprocess are excluded from $Q$ in the main loop (line~\ref{line:prop1-half-mms-main-loop}).
For every other agent $i$, let $T_i$ be the bundle chosen at line~\ref{line:prop1-half-mms-define-Ti}, and if $|Q_i|>1$ then let $g_i$ be the good chosen at line~\ref{line:prop1-half-mms-define-gi}.

\begin{lemma}
\label{lem:average-is-more-than-prop}
    Fix agent $i \in [n]$.
    At each iteration of the main loop from line~\ref{line:prop1-half-mms-main-loop}, if $i \in Q$, then $u_i(P) / |Q| \ge \prop_i$.
\end{lemma}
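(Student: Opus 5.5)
We will show that the quantity $u_i(P)/|Q|$ never decreases for agent $i$ from the start of its preprocessing step onward, as long as $i\in Q$. At that moment it equals $\prop_i$ by definition. Every change to $(P,Q)$ removes some bundle $X$ from $P$ and one agent from $Q$. So it suffices to check, for each type of removal, that $u_i(X)\le u_i(P)/|Q|$ with respect to the current $(P,Q)$. Indeed, if $u_i(X)\le u_i(P)/|Q|$ and $|Q|\ge 2$ (needed since $i$ stays in $Q$), then
\[
\frac{u_i(P\setminus X)}{|Q|-1}\ \ge\ \frac{u_i(P)-u_i(P)/|Q|}{|Q|-1}=\frac{u_i(P)}{|Q|}.
\]
When several bundles are removed at once (a Hall matching), we apply this inequality one bundle at a time. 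That is valid because each removed bundle has value at most the original average, and the average only increases along the way.

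There are three types of removal to handle.

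\textbf{(a) Singletons assigned during preprocessing of another agent $j\ne i$.} This includes agents processed before $i$, whose removals occur before $\prop_i$ is fixed and so are irrelevant, and agents processed after $i$. For the latter, $X=C_j$ is a maximum-value singleton of an $\EFOneHalfMMS(u_j,|Q|,P)$ allocation, computed with respect to $u_j$, not $u_i$. Its value to $i$ is therefore not controlled by $i$'s own averages. This is the main obstacle.

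\textbf{(b) Singletons assigned to the agent $k$ in the main loop.} These have the same issue as (a).

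\textbf{(c) Hall-matching removals.} Here the argument of \Cref{lem:increasing-prop} applies verbatim. Agent $i\in Q$ is unmatched, so by \Cref{lem:hall-matching} every matched bundle $B$ is a non-edge for $i$. A non-edge arises either from $\Comp_i(B,T_i)=\true$ or from $\PROPOneNonPROP(u_i,|Q|,P,B)=\false$. In the second case \Cref{lem:prop1-nonprop} gives $u_i(B)\le u_i(P)/|Q|$ directly. In the first case, $u_i(B)\le u_i(T_i)\le u_i(P_i)/|Q_i|=\prop_i\le u_i(P)/|Q|$, using induction on the current average.

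For the singleton cases (a) and (b), I would use the structural facts. If $u_i(g)\le u_i(P)/|Q|$ for the removed good $g$, we are done. Otherwise $g$ is a single good with $u_i(g)>u_i(P)/|Q|$. Then the removal is exactly a single-good removal as in \Cref{lem:prop1-singleton}, but in the unfavorable direction, so the average can actually drop. My plan here is to argue that the statement only needs $u_i(P)/|Q|\ge\prop_i$. By applying \Cref{lem:TPS-singleton}-style reasoning with proportional share replaced by the average, a drop can happen only if $g$ is large for $i$. Such a $g$ would already have been exposed as a singleton (by \Cref{lem:large-goods-must-be-singleton}) in $i$'s own preprocessing run $\EFOneHalfMMS(u_i,|Q_i|,P_i)$. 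In that case $i$ would have been assigned a singleton and removed from $Q$, contradicting $i\in Q$.

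The delicate point is that this gives only $u_i(g)\le\prop_i$ rather than the current average. So I would instead strengthen the induction invariant to: $u_i(P)/|Q|\ge\prop_i$, and every good of $P$ has $u_i$-value at most $\prop_i$. Both parts are preserved. The second is preserved trivially, since $P$ only shrinks. The first follows because any removed singleton $X$ then satisfies $u_i(X)\le\prop_i\le u_i(P)/|Q|$.

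The base case of the second invariant is the fact that $i$'s own preprocessing found no singleton. By \Cref{lem:large-goods-must-be-singleton}, this means no good exceeds $u_i(P_i)/|Q_i|=\prop_i$.
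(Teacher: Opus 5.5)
Your proposal is correct and follows essentially the same route as the paper. Since $i$'s own preprocessing exposed no singleton, \Cref{lem:large-goods-must-be-singleton} bounds every good of $P_i$ by $\mathit{prop}_i$, which covers all singleton assignments to other agents; Hall-matching removals are non-edges for $i$, so each is bounded either by $u_i(T_i)\le\mathit{prop}_i$ or, via $\PROPOneNonPROP$, by the current average. Your explicit two-part invariant and the bundle-by-bundle handling of simultaneous Hall-matching removals simply spell out what the paper's shorter proof leaves implicit.
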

\begin{proof}
    Suppose $i\in Q$, which implies that the condition at line~\ref{line:prop1-half-mms-check-singleton-prep} is not satisfied for agent $i$.
    By \Cref{lem:large-goods-must-be-singleton}, 
    $u_i(T_i)\le \prop_i$, and no good has value exceeding $\prop_i$.
    Therefore, whenever a singleton is assigned to another agent, the inequality $u_i(P) / |Q| \ge \prop_i$ is preserved.
    
    Thus, it suffices to show that the assignments at line~\ref{line:prop1-half-mms-assign-bundle} also preserve $u_i(P) / |Q| \ge \prop_i$.
    When a bundle $B_j$ is assigned to some agent $i' \ne i$, the Hall matching property gives $(i, B_j) \notin E$, which implies that either $\Comp_i(B_j, T_i) = \true$ or $\PROPOneNonPROP(u_i, |Q|, P, B_j) = \false$.
    If $\PROPOneNonPROP(u_i, |Q|, P, B_j) = \false$, then $u_i(B_j) \le u_i(P) / |Q|$, and thus $u_i(P) / |Q| \ge \prop_i$ is maintained after the assignment.
    If $\Comp_i(B_j, T_i) = \true$, then $u_i(B_j) \le u_i(T_i) \le \prop_i$, and hence $u_i(P) / |Q| \ge \prop_i$ is again preserved.
\end{proof}

\begin{lemma}
\label{lem:prop1-and-half-mms-for-current-PQ}
    If bundle $A_i$ is assigned to agent $i$, then $A_i$ satisfies the following properties:
    \begin{itemize}
        \item $u_i(A_i) \ge \prop_i / 2$,
        \item $u_i(A_i) \ge \prop_i$ or $u_i(A_i + g) \ge \prop_i$ for some good $g \in P_i \setminus A_i$.
    \end{itemize}
\end{lemma}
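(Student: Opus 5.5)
We split into cases according to where in \Cref{alg:prop1-half-mms-non-identical} agent $i$ receives $A_i$: (a) as a singleton in the preprocessing (line~\ref{line:prop1-half-mms-assign-singleton-prep}), (b) as a singleton in the main loop as the chosen agent $k$ (line~\ref{line:prop1-half-mms-assign-singleton-main}), or (c) via the Hall matching (line~\ref{line:prop1-half-mms-assign-bundle}). Throughout, the comparison point is $\prop_i=u_i(P_i)/|Q_i|$. \Cref{lem:average-is-more-than-prop} gives $u_i(P)/|Q|\ge \prop_i$ at every main-loop iteration in which $i\in Q$.

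\emph{Case (a).} Here $\mathcal{C}$ is the output of \Cref{alg:ef1-halfmms} for $(u_i,|Q_i|,P_i)$ and contains a singleton. By \Cref{cor:largest-singleton-is-half-prop}, the maximum-value singleton $C_j$ satisfies $u_i(C_j)\ge \prop_i/2$. For PROP1, we use that $\mathcal{C}$ is 1-witness EF1, hence 1-witness PROP1 with respect to $(u_i,|Q_i|,P_i)$ by \Cref{lem:1-witness-ef1-implies-1-witness-prop1}. So either $u_i(C_j)\ge\prop_i$ or $u_i(C_j+g)\ge \prop_i$ for some witness $g\in P_i\setminus C_j$.

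\emph{Case (b).} The argument is identical with $k=i$ and the current $(P,Q)$. Let $\prop'=u_i(P)/|Q|$. Then $u_i(B_j)\ge \prop'/2\ge\prop_i/2$ and $B_j$ is PROP1 with respect to $\prop'\ge \prop_i$. The witness lies in $P\subseteq P_i$, so both properties follow.

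\emph{Case (c).} The edge $(i,B_j)$ is in $E$, and no bundle of $\B$ is a singleton, since otherwise we would have continued. For the $1/2$-TPS part, $\B$ is an output of \Cref{alg:ef1-halfmms} with no singleton, so \Cref{lem:no-singleton-implies-halfprop} applies, as in the proof of \Cref{cor:largest-singleton-is-half-prop} (its hypotheses hold via \Cref{lem:ef1-halfmms-fix-W,lem:large-goods-in-witnesses}). This gives $u_k(B)\ge u_k(P)/(2|Q|)$ for every $B\in\B$. That bound is for $u_k$, so it settles $i=k$; for $i\neq k$ we need the edge structure.

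For $i\ne k$, we repeat the case analysis of \Cref{lem:want-this-graph-true} with $\prop_i$ in place of $u_i(M)/n$:
\begin{itemize}
\item If $B_j=F_i$, then $u_i(F_i)\ge u_i(P)/|Q|\ge\prop_i$.
\item If $\Comp_i(F_i,T_i)$ holds, then since $u_i(T_i)\le\prop_i$, all bundles have value exactly $\prop_i$.
\item If $\Comp_i(B_j,T_i)=\false$ and $g_i\notin B_j$, then $u_i(B_j+g_i)\ge u_i(T_i+g_i)\ge\prop_i$ by the 1-witness PROP1 property of $\mathcal{C}$. Also $u_i(B_j)\ge u_i(T_i)=u_i(C_1)\ge \prop_i/2$, because $\mathcal{C}$ had no singleton and is therefore $1/2$-prop by \Cref{lem:no-singleton-implies-halfprop}. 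The good $g_i$ is still in $P$, since $g_i$ lies in $B_j$'s complement, and in $P_i$.
\item If $\PROPOneNonPROP$ returned $\true$, then $B_j$ is PROP1 for $(u_i,|Q|,P)$ with $u_i(P)/|Q|\ge\prop_i$ (\Cref{lem:prop1-nonprop}). For the half bound, $\Comp_i(B_j,T_i)=\false$ was answered first, so again $u_i(B_j)\ge u_i(T_i)\ge\prop_i/2$.
\end{itemize}
The $|Q_i|=1$ situation is degenerate: then $P$ is all assigned to $i$ and is PROP.

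The main obstacle is justifying $u_i(T_i)\ge \prop_i/2$. This requires invoking \Cref{lem:no-singleton-implies-halfprop} on $\mathcal{C}$ (non-singleton case), including the relabeling where $C_1$ is empty. An empty bundle would contradict the bound unless $u_i(P_i)=0$, which is trivial. We also need to check that the witness goods in the PROP1 conditions lie in $P_i\setminus A_i$.
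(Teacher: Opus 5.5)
Your proof is correct and follows the paper's argument essentially step for step. You use the same split by assignment line: \Cref{cor:largest-singleton-is-half-prop} plus the 1-witness PROP1 property for the two singleton cases, \Cref{lem:no-singleton-implies-halfprop} for $i=k$, and for $i\neq k$ the edge-by-edge analysis with $u_i(T_i)\ge\prop_i/2$ obtained from the singleton-free $\mathcal{C}$. One small slip: $g_i$ need not still be in $P$, since it may have been allocated earlier. This is harmless, because the second property only requires $g_i\in P_i\setminus A_i$, which holds since $\mathcal{C}$ partitions $P_i$ and $g_i\notin B_j$.
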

\begin{proof}
    Note that by \Cref{lem:average-is-more-than-prop}, we have $u_i(P) / |Q| \ge \prop_i$ also throughout the main loop.
    Bundles are assigned to agents at lines~\ref{line:prop1-half-mms-assign-singleton-prep},~\ref{line:prop1-half-mms-assign-singleton-main},~and~\ref{line:prop1-half-mms-assign-bundle}.

    \UnderlineCase{Bundle $C_j$ at line~\ref{line:prop1-half-mms-assign-singleton-prep}:}
    The first property follows from \Cref{cor:largest-singleton-is-half-prop}.
    Since $\mathcal{C}$ is a PROP1 allocation by \Cref{lem:1-witness-ef1-implies-1-witness-prop1}, $C_j$ satisfies the second property.

    \UnderlineCase{Bundle $B_j$ at line~\ref{line:prop1-half-mms-assign-singleton-main}:} The argument is analogous to the case of line~\ref{line:prop1-half-mms-assign-singleton-prep}.

    \UnderlineCase{Bundle $B_j$ at line~\ref{line:prop1-half-mms-assign-bundle} when $i = k$:} 
    The first property follows from \Cref{lem:no-singleton-implies-halfprop}.
    Since $\B$ is a PROP1 allocation, $B_j$ satisfies the second property.

    \UnderlineCase{Bundle $B_j$ at line~\ref{line:prop1-half-mms-assign-bundle} when $i \ne k$:} 
    Since $i\in Q\setminus\{k\} \subseteq Q_i \setminus \{k\}$, we have $|Q_i| \ge |Q|>1$, and hence $g_i$ is well-defined.
    By construction, the allocation $\mathcal{C}$ computed at line~\ref{line:prop1-half-mms-compute-C} for agent $i$ does not contain singletons.
    By \Cref{lem:no-singleton-implies-halfprop}, $T_i$ satisfies $u_i(T_i) \ge \prop_i / 2$.
    The edge $(i,B_j)$ is added in one of the following cases.
    If $B_j=F_i$, then we have $u_i(B_j)\ge u_i(P)/|Q|\ge \prop_i$, and hence both properties hold for $B_j$.
    Suppose that the edge is added at line~\ref{line:prop1-half-mms-add-all-edges}.
    We have $u_i(F_i)\le u_i(T_i)$.
    Since $u_i(T_i)\le \prop_i$ and $u_i(F_i)\ge u_i(P)/|Q|\ge \prop_i$, we also have $u_i(F_i)= u_i(T_i) = \prop_i$.
    As $F_i$ has the maximum value, every bundle in $\B$ has value exactly $\prop_i$ for agent $i$.
    Hence, both properties hold for $B_j$.
    Finally, suppose that the edge $(i,B_j)$ is added at line~\ref{line:prop1-half-mms-check-witness}.
    From the comparison at line~\ref{line:prop1-half-mms-compare}, we have $u_i(B_j) \ge u_i(T_i) \ge \prop_i / 2$, and thus the first property holds.
    For the second property, if $\Comp_i(B_j, T_i) = \false$ and $g_i \notin B_j$, then the second property follows from the 1-witness PROP1 property of $\mathcal{C}$ computed at line~\ref{line:prop1-half-mms-compute-C}.
    If $\PROPOneNonPROP(u_i, |Q|, P, B_j) = \true$, then $B_j$ satisfies the PROP1 condition with respect to $(u_i, |Q|, P)$ by \Cref{lem:prop1-nonprop}, and since $u_i(P) / |Q| \ge \prop_i$, the second property holds.
\end{proof}

\begin{lemma}
\label{lem:prop1-half-mms-non-identical-correctness}
    \Cref{alg:prop1-half-mms-non-identical} outputs a PROP1, $1/2$-TPS, and $1/2$-MMS allocation.
\end{lemma}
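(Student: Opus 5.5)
The proof reduces both guarantees for the final allocation to the per-agent statements of \Cref{lem:prop1-and-half-mms-for-current-PQ}, by relating $\prop_i$ to the original instance $(n,M)$. It proceeds in three steps.

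\textbf{Step 1: completeness.} Every agent receives a bundle. In the preprocessing loop an agent is either removed together with a singleton or kept. In the main loop, each iteration either removes agent $k$ with a singleton, or applies a Hall matching, which is non-empty by \Cref{lem:hall-matching} because agent $k$ is adjacent to all bundles. Hence the loop terminates after at most $n$ iterations with $Q=\emptyset$. Goods never assigned to anyone can be given to an arbitrary agent; by additivity this only increases her value.

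\textbf{Step 2: comparing $\prop_i$ with the original instance.} For each agent $i$, consider $(Q_i,P_i)$, the agent and good sets at the moment her preprocessing step starts. The instance $(|Q_i|,P_i)$ arises from $(n,M)$ by removing $n-|Q_i|$ agents, each together with a single good, since only singletons are removed during preprocessing. Two monotonicity facts are needed.
\begin{itemize}
\item For $1/2$-TPS: repeated application of \Cref{lem:TPS-singleton} gives $\TPS_i^{|Q_i|}(P_i)\ge \TPS_i^n(M)$. Combined with $\TPS\le$ proportional share, this yields $\prop_i\ge \TPS_i^{|Q_i|}(P_i)\ge \TPS_i^n(M)$.
\item For PROP1: apply \Cref{lem:prop1-singleton} repeatedly, peeling off one removed singleton at a time. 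At each step the hypothesis of that lemma (PROP1 with respect to $P\setminus A_n$ and $n-1$ agents) is exactly the conclusion obtained at the previous step. This lifts PROP1 with respect to $(|Q_i|,P_i)$ to PROP1 with respect to $(n,M)$. The witness good $g\in P_i\setminus A_i$ lies in $M\setminus A_i$, so the conditions match.
\end{itemize}
One detail: \Cref{lem:prop1-singleton} requires the removed bundle to be a singleton in the final allocation. This holds because removed bundles are never modified later.

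\textbf{Step 3: combining.} By \Cref{lem:prop1-and-half-mms-for-current-PQ}, we have $u_i(A_i)\ge \prop_i/2\ge \frac12\TPS_i^n(M)\ge\frac12\mu_i^n(M)$. This gives $1/2$-TPS and $1/2$-MMS. The second bullet of that lemma, together with the lifting in Step 2, gives PROP1.

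\textbf{Main obstacle.} The delicate step is Step 2, specifically the ordering. Preprocessing removes singletons in sequence, so $P_i$ depends on $i$. For agent $i$, however, only the removals that happened before her own preprocessing step matter, and all of those are singletons. This is exactly the setting in which \Cref{lem:TPS-singleton,lem:prop1-singleton} apply inductively. The main-loop removals, which are not singletons, are already absorbed into \Cref{lem:average-is-more-than-prop} and \Cref{lem:prop1-and-half-mms-for-current-PQ}.
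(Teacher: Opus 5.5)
Your proposal is correct and takes essentially the same approach as the paper's proof. You lift $\mathit{prop}_i$ above $\TPS_i^n(M)$ by applying \Cref{lem:TPS-singleton} repeatedly to the preprocessing singletons and then using that TPS is at most the proportional share, you transfer PROP1 back to $(n,M)$ by applying \Cref{lem:prop1-singleton} repeatedly, and you let \Cref{lem:prop1-and-half-mms-for-current-PQ} supply both per-agent bounds. Your Step~1 goes beyond the paper, which leaves termination to the complexity analysis; the fallback of handing leftover goods to an arbitrary agent is never actually needed, because the final iteration with $|Q|=1$ assigns all of $P$.
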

\begin{proof}
    Let $\A$ be the output of~\Cref{alg:prop1-half-mms-non-identical} for $(M, [n])$.
    Fix an arbitrary agent $i \in [n]$.
    
    Note that any bundle assigned to other agents before the preprocess of agent $i$ is a singleton.
    Repeatedly applying \Cref{lem:TPS-singleton} to these singletons gives $\TPS_i^{|Q_i|}(P_i) \ge \TPS_i^n(M)$.
    Since $\TPS_i^{|Q_i|}(P_i) \le u_i(P_i)/|Q_i| = \prop_i$, we have $\prop_i \ge \TPS_i^n(M)$.
    By \Cref{lem:prop1-and-half-mms-for-current-PQ}, bundle $A_i \in \A$ has value at least $\prop_i/2$.
    Thus, $A_i$ is $1/2$-TPS with respect to $(M, [n])$ for agent $i$.
    
    It remains to check PROP1.
    The second property of \Cref{lem:prop1-and-half-mms-for-current-PQ} shows that $A_i$ satisfies the PROP1 condition with respect to $(P,Q)$ at the moment that $\prop_i$ is defined.
    Repeatedly applying \Cref{lem:prop1-singleton} to the singletons assigned to other agents before the preprocess for agent $i$, we conclude that $A_i$ also satisfies PROP1 with respect to the original instance $(M,[n])$.
    
    Therefore, $\A$ satisfies PROP1, $1/2$-TPS, and $1/2$-MMS with respect to $(M, [n])$.
\end{proof}

\begin{lemma}
\label{lem:prop1-half-mms-non-identical-complexity}
    \Cref{alg:prop1-half-mms-non-identical} uses $O(n^3 \log m)$ comparison queries.
\end{lemma}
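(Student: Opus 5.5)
We count queries phase by phase, as in the proof of \Cref{thm:prop1-non-identical}, using \Cref{thm:ef1-halfmms} as the cost of each call to $\EFOneHalfMMS$.

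\textbf{Preprocessing.} This phase consists of $n$ calls to $\EFOneHalfMMS(u_i,|Q|,P)$ with $|Q|\le n$ and $|P|\le m$. By \Cref{thm:ef1-halfmms} each call costs $O(n\log n\log m)$ queries. This bound also holds when $|P|$ is small relative to $|Q|$, since the recursion then stops immediately and \Cref{lem:scaling-sendmaxtomin-complexity} gives $O(n\log n)$ for the base case. Choosing a maximum-value singleton, and choosing $g_i$ as the maximizer of $u_i$ over at most $n$ witnesses, each cost $O(n)$ further comparisons. The preprocessing total is therefore $O(n^2\log n\log m)=O(n^3\log m)$.

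\textbf{Main loop.} Every iteration assigns at least one agent. In the singleton branch exactly one agent is removed. Otherwise the Hall matching is nonempty by \Cref{lem:hall-matching}, whose precondition holds because agent $k$ is adjacent to every bundle. Hence there are at most $n$ iterations. In one iteration, the call to $\EFOneHalfMMS$ and the singleton selection cost $O(n\log n\log m)$ queries. Computing $F_i$ and making the comparison $\Comp_i(F_i,T_i)$ costs $O(n)$ per agent, so $O(n^2)$ in total. The comparisons $\Comp_i(B_j,T_i)$ at line~\ref{line:prop1-half-mms-compare} cost one query per pair, again $O(n^2)$.

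\textbf{The expensive step.} The main cost is the calls to $\PROPOneNonPROP$ at line~\ref{line:prop1-half-mms-check-witness}. Such a call is made only when $g_i\in B_j$. The bundles of $\B$ are disjoint, so for each agent $i$ at most one $B_j$ contains $g_i$. There are therefore at most $|Q|\le n$ calls per iteration, each costing $O(n\log m)$ by \Cref{lem:prop1-nonprop}. This gives $O(n^2\log m)$ per iteration and $O(n^3\log m)$ over all iterations.

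One detail needs checking: $g_i$ is chosen inside the original $P_i$, while $B_j$ is a subset of the current $P$, and $g_i$ may already have been allocated. In that case no call is made, so the bound is unaffected.

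Adding the phases gives $O(n^2\log n\log m + n\cdot n\log n\log m + n^3 + n^3\log m)=O(n^3\log m)$. The Hall matching computation uses no queries.
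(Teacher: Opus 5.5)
Your proposal is correct and follows essentially the same accounting as the paper: $O(n^2\log n\log m)$ for preprocessing via \Cref{thm:ef1-halfmms}, at most $n$ main-loop iterations because each iteration assigns at least one agent, and the bottleneck of at most $n$ calls to $\PROPOneNonPROP$ per iteration at $O(n\log m)$ each. Your additional remarks spell out details the paper leaves implicit: disjointness of the bundles of $\B$ bounds the number of calls, the singleton branch also removes an agent, the small-$|P|$ case is covered, and no call is made when $g_i$ has already been allocated.
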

\begin{proof}
    The preprocessing phase (lines~\ref{line:prop1-half-mms-preprocessing}--\ref{line:prop1-half-mms-preprocessing-end}) uses $O(n^2 \log n \log m)$ comparison queries by~\Cref{thm:ef1-halfmms}.
    For the main loop, since the Hall matching is non-empty by~\Cref{lem:hall-matching}, at least one agent is assigned a bundle in each iteration.
    Therefore, the while loop executes at most $n$ iterations.
    In each iteration, line~\ref{line:prop1-half-mms-compute-B} performs $O(n \log n \log m)$ comparison queries to compute $\B$, lines~\ref{line:prop1-half-mms-find-singleton}, \ref{line:prop1-half-mms-find-max}--\ref{line:prop1-half-mms-add-all-edges}, and~\ref{line:prop1-half-mms-compare} use $O(n^2)$ comparison queries in total, and line~\ref{line:prop1-half-mms-check-witness} invokes $\PROPOneNonPROP$ for at most $n$ pairs.
    Overall, the bottleneck is $O(n^2)$ calls to $\PROPOneNonPROP$, yielding $O(n^3 \log m)$ comparison queries in total.
\end{proof}

The following theorem follows from \Cref{lem:prop1-half-mms-non-identical-correctness,lem:prop1-half-mms-non-identical-complexity}.

\begin{theorem}
\label{thm:prop1-half-mms-non-identical}
    For non-identical and additive valuations, 
    \Cref{alg:prop1-half-mms-non-identical} computes a PROP1, $1/2$-TPS, and $1/2$-MMS allocation using $O(n^3 \log m)$ comparison queries.
\end{theorem}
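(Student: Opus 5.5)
The theorem combines \Cref{lem:prop1-half-mms-non-identical-correctness} and \Cref{lem:prop1-half-mms-non-identical-complexity}, so the plan is to invoke both and check that each one applies. Correctness: for a fixed agent $i$, every bundle assigned before $i$'s preprocessing step is a singleton. Applying \Cref{lem:TPS-singleton} repeatedly to these singletons gives $\prop_i \ge \TPS_i^{|Q_i|}(P_i) \ge \TPS_i^n(M)$. From \Cref{lem:prop1-and-half-mms-for-current-PQ} we get $u_i(A_i)\ge \prop_i/2$, and hence $1/2$-TPS. Since $\mu_i^n(M)\le \TPS_i^n(M)$, this also gives $1/2$-MMS. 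For PROP1, the second property in \Cref{lem:prop1-and-half-mms-for-current-PQ} gives PROP1 with respect to $(P_i,|Q_i|)$. Applying \Cref{lem:prop1-singleton} once for each earlier singleton then lifts this to PROP1 with respect to $(M,n)$.

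The step I expect to be the main obstacle is \Cref{lem:average-is-more-than-prop}, namely that $u_i(P)/|Q|\ge\prop_i$ holds throughout the main loop. It needs two facts. First, if agent $i$'s preprocessing allocation $\mathcal{C}$ has no singleton, then no good in $P_i$ has $u_i$-value above $\prop_i$; this follows from \Cref{lem:large-goods-must-be-singleton}. So removing a singleton, whether it is assigned to another agent during preprocessing or in the main loop, never lowers the average. Second, the bundles removed through the Hall matching are non-edges for $i$ by \Cref{lem:hall-matching}. A non-edge arises either because $\Comp_i(B_j,T_i)$ returned true, giving $u_i(B_j)\le u_i(T_i)\le\prop_i$, or because $\PROPOneNonPROP$ returned false, giving $u_i(B_j)\le u_i(P)/|Q|$ by \Cref{lem:prop1-nonprop}. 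In both cases the removed value is at most the current average, so the average does not fall below $\prop_i$.

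Query complexity: preprocessing makes $n$ calls to \Cref{alg:ef1-halfmms}, each costing $O(n\log n\log m)$ by \Cref{thm:ef1-halfmms}. Each main-loop iteration assigns at least one agent, since a singleton is assigned directly or else the Hall matching is nonempty, so there are at most $n$ iterations. Each iteration costs $O(n\log n\log m)$ to compute $\B$, plus $O(n^2)$ single comparisons, plus at most one $\PROPOneNonPROP$ call per agent. That last bound holds because the bundles of $\B$ are disjoint, so only one of them contains $g_i$. Each such call costs $O(n\log m)$. Summing over iterations gives $O(n^3\log m)$ overall.
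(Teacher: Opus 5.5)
Your proposal is correct and follows the paper's route. The paper obtains the theorem by combining \Cref{lem:prop1-half-mms-non-identical-correctness} and \Cref{lem:prop1-half-mms-non-identical-complexity}, and your sketch reproduces the arguments behind those lemmas: the lifting via \Cref{lem:TPS-singleton} and \Cref{lem:prop1-singleton}, the invariant of \Cref{lem:average-is-more-than-prop} obtained from Hall-matching non-edges, and the count of at most $n$ iterations with at most one $\PROPOneNonPROP$ call per agent per iteration.
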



\section{\texorpdfstring{$k$}{k}-witness EF1 and its relation to TPS and MMS}

In this section we introduce a stronger variant of 1-witness EF1, called \emph{$k$-witness EF1}, and relate it to $\alpha$-TPS or $\alpha$-MMS guarantees under additive valuations.
The output of $\EFOneHalfMMS$ (\Cref{alg:ef1-halfmms}) for identical valuations satisfies a slightly stronger fact than just being 1-witness EF1: for each bundle $A_j$ other than $A_1$, if it contains only goods in $W_0$ set at line~\ref{line:ef1-halfmms-set-W}, then any good in $A_j$ is a witness for EF1.
Thus, we ask what the notion of $k$-witness EF1 would imply if it were available.
More specifically, using a clean structural connection to TPS and MMS of $k$-witness EF1, we prove that every 2-witness EF1 allocation guarantees $1/2$-TPS, and every 3-witness EF1 allocation guarantees $4/7$-MMS (\Cref{thm:2-witness-implies-half-mms,thm:3-witness-implies-47-mms}) in the general additive setting.
Here, the TPS guarantee cannot be improved for $3$-witness EF1 or even EFX allocations.

\subsection{\texorpdfstring{$k$}{k}-witness EF1}

A $k$-witness EF1 allocation is one in which, for each bundle, the last $k$ goods (if they exist) all serve as witnesses for the EF1 property.
\begin{definition}
For an integer $k \ge 1$, an allocation $\A = (A_1, A_2, \dots, A_n)$ is said to satisfy \emph{$k$-witness EF1} if for all agents $i, j \in [n]$ and all integers $1 \le k' \le k$, either $|A_j| < k'$ or $u_i(A_i) \ge u_i(A_j - \last_{k'}(A_j))$.
\end{definition}
Since an $m$-witness EF1 allocation is equivalent to an EFX allocation, $k$-witness EF1 can be viewed as an intermediate notion between EF1 and EFX.
For additive valuations, whether EFX allocations always exist is a central open question in fair division.
At this point, the existence is known for three agents~\cite{CGM24} and for instances with at most two~\cite{Mah23} or three~\cite{HGNV25} types of valuations, while relaxations such as EFX with bounded charity~\cite{CKMS21,AACGMM25}, almost-full EFX for four agents~\cite{BCFF22}, and graphical valuations~\cite{CFKS23} have also been studied.
Recently, an EFX allocation was shown \emph{not} to exist in general for monotone~\cite{AMMSW26} and with a compact construction, for submodular valuations~\cite{MS26}.
Hence, the additive case is now the frontier.
The $k$-witness hierarchy offers a relaxation orthogonal to these papers, as it parameterizes the number of goods per bundle that are certified to witness EF1.

We mention that for identical additive valuations, the \EgalSequential\ algorithm~\cite{AR2020} is guaranteed to output an EFX allocation, which is equivalent to an $m$-witness EF1 allocation.
Thus, even when $m>n$, for every $k\ge 1$, we can find a $k$-witness EF1 allocation with $O(m \log m)$ comparison queries, by sorting the goods, appending a good to a minimum-value bundle, and reordering bundles via binary insertion.
In contrast, whether a $2$-witness EF1 allocation can be computed with $o(m)$ comparison queries is open.

\subsection{Relation Between \texorpdfstring{$k$}{k}-witness EF1, TPS, and MMS}

We now investigate the structural implications of $k$-witness EF1 under general additive valuations.
While $k$-witness EF1 was introduced as a strengthened, certificate-based variant of EF1, it also yields nontrivial maximin share guarantees.
In this subsection we prove two implications of 2-witness and 3-witness EF1, and also provide upper bounds on the approximation ratios.

\begin{theorem}
\label{thm:2-witness-implies-half-mms}
    For additive valuations, any $2$-witness EF1 allocation $\A = (A_1, A_2, \dots, A_n)$ also satisfies $1/2$-TPS and $1/2$-MMS.
\end{theorem}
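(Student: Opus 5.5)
The plan is to fix an agent $i$ and mimic the peeling argument from the proof of \Cref{lem:ef1-halfmms-is-ef1-halfmms}, except that the peeling is done with respect to $u_i$ alone. Because the 2-witness condition is pairwise between $A_i$ and each $A_j$, removing other agents' bundles keeps it intact. Start from $(n^0,M^0)=(n,M)$ with all bundles present. While some bundle $A_j$ with $j\ne i$ is a singleton $\{x\}$ satisfying $u_i(x)>u_i(M^t)/n^t$, delete it, setting $n^{t+1}=n^t-1$ and $M^{t+1}=M^t-x$. Stop when no such singleton remains, or when $n^t=1$, or when $u_i(M^t)=0$.

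By repeated application of \Cref{lem:TPS-singleton}, every stage satisfies $\TPS_i^{n^t}(M^t)\ge \TPS_i^n(M)$. Also $\TPS_i^{n^\ell}(M^\ell)\le u_i(M^\ell)/n^\ell$ by the known bound from \cite{BEF22}. So it suffices to show $u_i(A_i)\ge u_i(M^\ell)/(2n^\ell)$ at the final stage $\ell$. The $1/2$-MMS claim then follows from $\mu_i^n(M)\le \TPS_i^n(M)$.

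The degenerate endings are immediate. If $n^\ell=1$, only $A_i$ remains, so $A_i=M^\ell$. If $u_i(M^\ell)=0$, the bound is trivial. In the main case, suppose for contradiction that $u_i(A_i)<u_i(M^\ell)/(2n^\ell)$. Since the $n^\ell$ remaining bundles partition $M^\ell$ and $A_i$ is below average, some remaining $A_j$ with $j\ne i$ satisfies $u_i(A_j)>u_i(M^\ell)/n^\ell$. This $A_j$ is nonempty, and it is not a singleton because the peeling has stopped, so $|A_j|\ge 2$ and both $\last_1(A_j)$ and $\last_2(A_j)$ exist.

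The key step applies the 2-witness EF1 condition for $k'=1$ and $k'=2$ and adds the two inequalities. Using additivity, this gives
\[
2u_i(A_j)-u_i(\last_1(A_j))-u_i(\last_2(A_j))\le 2u_i(A_i).
\]
Since $u_i(\last_1(A_j))+u_i(\last_2(A_j))\le u_i(A_j)$, we get $u_i(A_j)\le 2u_i(A_i)<u_i(M^\ell)/n^\ell$, contradicting the choice of $A_j$.

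The only delicate point I expect is the bookkeeping in the peeling. We must check that only bundles of agents other than $i$ are removed. We must also check that the final stage with $n^\ell\ge 2$ really yields a non-singleton above-average bundle. Both follow directly from the stopping rule, so I do not anticipate a genuine obstacle.
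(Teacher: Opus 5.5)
Your proof is correct and is essentially the paper's argument: singleton bundles are discarded via \Cref{lem:TPS-singleton}, and any bundle $A_j$ with $|A_j|\ge 2$ is handled by summing the two witness inequalities to get $u_i(A_j)\le 2u_i(A_i)$. The only cosmetic difference is the bookkeeping: the paper runs an induction on $n$ that always removes the $u_i$-maximal bundle and stops as soon as that bundle is a non-singleton, since its value is already at least $u_i(M)/n$, whereas you first peel above-average singletons and then find an above-average non-singleton by averaging.
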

\begin{proof}
    Fix an agent $i$ arbitrarily.
    We show that $u_i(A_i) \ge \TPS^n_i(M)/2$.
    We proceed by induction on $n$.

    \UnderlineCase{Base case:}
    When $n = 1$, the allocation trivially satisfies $1/2$-TPS.

    \UnderlineCase{Inductive step:}
    If $u_i(A_i) =\max_{j'\in [n]} u_i(A_{j'})$, then $u_i(A_i) \ge u_i(M)/n \ge \TPS_i^n(M)$ and we are done.
    Otherwise, let $A_j$ be a bundle such that $u_i(A_j) = \max_{j'\in [n]}u_i(A_{j'}) > u_i(A_i) (\ge 0)$.
    We note that $A_j \ne A_i$, $|A_j| > 0$, and $u_i(A_j) \ge u_i(M)/n$.

    Let $\A' = (A_1, A_2, \dots, A_{j-1}, A_{j+1}, \dots, A_{n})$.
    This remains $2$-witness EF1.
    Assume that $\A'$ satisfies $1/2$-TPS, that is, $u_i(A_i) \ge \frac{1}{2}\TPS^{n-1}_i(M \setminus A_j)$.

    If $|A_j| = 1$, then by \Cref{lem:TPS-singleton}, we have $u_i(A_i) \ge \frac{1}{2}\TPS^{n-1}_i(M \setminus A_j) \ge \frac{1}{2}\TPS^{n}_i(M)$,  so the allocation satisfies $1/2$-TPS.

    Now consider the case $|A_j| \ge 2$. By the definition of $2$-witness EF1, the last two goods $x, y$ of $A_j$ satisfy $u_i(x) \le u_i(A_j - y) \le u_i(A_i)$ and $u_i(A_j - x) \le u_i(A_i)$.
    Combining $u_i(x) \le u_i(A_i)$ and $u_i(A_j - x) \le u_i(A_i)$, we obtain $u_i(A_j) \le 2 \cdot u_i(A_i)$.
    Therefore,
    \begin{align*}
        u_i(A_i) \ge \frac{1}{2} u_i(A_j) \ge \frac{1}{2n} u_i(M) \ge \frac{1}{2} \TPS^n_i(M),
    \end{align*}
    which implies that $\A$ satisfies $1/2$-TPS for agent $i$.
    The $1/2$-MMS guarantee also holds since $\TPS^n_i(M) \ge \mu^n_i(M)$.
\end{proof}

We note that \cite{BEF22} gives an instance with identical valuations and $2n-1$ goods of value $1$ such that, for every allocation, some agent receives at most an $\frac{n}{2n-1}$ fraction of the TPS value.
Thus, even EFX cannot guarantee more than $\frac{n}{2n-1}$-TPS, which tends to $1/2$ as $n \to \infty$.
By this observation, we focus on the relationship between $3$-witness EF1 and MMS.

\begin{theorem}
\label{thm:3-witness-implies-47-mms}
    For additive valuations, any $3$-witness EF1 allocation $\A = (A_1, A_2, \dots, A_n)$ also satisfies $4/7$-MMS.
\end{theorem}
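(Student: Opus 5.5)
We plan to follow the reduction strategy that \cite{ABM2018} use for EFX, but with a different treatment of two-good bundles, exactly as announced in the technical overview. Fix an agent $i$, write $v \coloneqq u_i(A_i)$ and $\mu \coloneqq \mu_i^n(M)$, and scale so that $\mu = 1$. We argue by induction on $n$, with the trivial base case $n=1$. Suppose for contradiction that $v < 4/7$.

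\textbf{Step 1: remove singleton bundles.} Suppose some other bundle $A_j$ is a singleton. Deleting it leaves a $3$-witness EF1 allocation of $M\setminus A_j$ among $n-1$ agents, and \Cref{lem:MMS-singleton} gives $\mu_i^{n-1}(M\setminus A_j)\ge \mu$. So induction handles this case, and we may assume every $A_j$ with $j\ne i$ has $|A_j|\ge 2$. In that situation, for $|A_j|\ge 2$, every good of $A_j$ has value at most $v$. Indeed, $\last(A_j)\le u_i(A_j-\last_2(A_j))\le v$, and every other good lies in $A_j-\last(A_j)$, whose value is at most $v$. Hence every good of value greater than $v$ sits in a singleton (or in $A_i$ itself), so after this step all goods in the other bundles have value at most $v<4/7$.

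\textbf{Step 2: bound large bundles and reduce cheap pairs.}

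\emph{Bundles of size at least $3$.} Let $x,y,z$ be the last three goods of $A_j$ and let $R \coloneqq A_j\setminus\{x,y,z\}$. The three witness inequalities $u_i(A_j-x),u_i(A_j-y),u_i(A_j-z)\le v$ sum to $3u_i(R)+2(x+y+z)\le 3v$. This gives $u_i(A_j)\le 3v/2<6/7<1$.

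\emph{Bundles of size exactly $2$.} For $A_j=\{x,y\}$ we only get $u_i(A_j)\le 2v$. This is where the argument must differ from the generic one. If $x+y\le 1$, we delete $A_j$ together with one agent and claim $\mu_i^{n-1}(M-x-y)\ge 1$. To see this, take an MMS partition. If $x$ and $y$ lie in the same part, drop that part and spread its leftover anywhere. Otherwise merge $(B-x)\cup(B'-y)$ into a single part of value at least $2-(x+y)\ge 1$. Induction then applies. After this reduction, every remaining pair bundle has $x+y>1$, and since each good is at most $v<4/7$, both goods exceed $1-v>3/7$.

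\textbf{Step 3: final counting (the main obstacle).} The remaining allocation consists of $A_i$ (value $v$), bundles of size at least $3$ (value at most $3v/2$), and $t$ ``heavy pairs'' whose $2t$ goods all lie in $(3/7,4/7)$. A naive bound $u_i(M)\le v+(n-1-t)\tfrac32 v+2tv$ compared with $u_i(M)\ge n$ fails when $t$ is large. So the $2t$ heavy goods must instead be charged against the MMS partition. Every good is below $4/7<1$, so each MMS part needs at least two goods. An MMS part containing two heavy goods has at most $1/7$-type slack beyond what those goods provide. An MMS part containing at most one heavy good must draw at least $3/7$ of its value from light goods. These light goods all sit in $A_i$ or in bundles of size at least $3$, and each such bundle has value below $6/7$.

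We would set up a weighting: the MMS parts yield a lower bound on the light mass of roughly $(n-t)\cdot$const. We then compare it with the upper bound $v+(n-1-t)\cdot 3v/2$ on the light mass available. The balance point of these two linear constraints is where we expect the constant $4/7$ to emerge, since $v=4/7$ makes $3v/2=6/7$ and $1-v=3/7$ line up. Getting this charging exactly right, in particular handling MMS parts that mix one heavy good with light goods, is the step we expect to be hardest.
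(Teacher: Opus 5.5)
\textbf{There is a genuine gap: Step~3 is never carried out.} Your Steps~1 and~2 are correct. Deleting singleton bundles via \Cref{lem:MMS-singleton} is valid, the bound $u_i(A_j)\le \frac{3}{2}v$ for $|A_j|\ge 3$ is right, and deleting a pair $\{x,y\}$ with $x+y\le\mu$ by merging $(B-x)\cup(B'-y)$ is a valid reduction. You also correctly see that the plain proportional count only gives about $1/2$ when many heavy pairs survive.

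Step~3 is the only place where $4/7$ is actually derived, and it stays at ``we would set up a weighting'' and ``we expect $4/7$ to emerge''. The facts you list there are also not enough. A part with no heavy good must draw light mass at least $1$, not $3/7$, and that is what drives the count.

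What is missing is a per-part weight. Normalize $\mu=1$. Let $H$ be the $2t$ goods of surviving pairs, and let $L=M\setminus H$ be the goods of $A_i$ and of the bundles of size at least $3$. An MMS part $B$ with $k=|B\cap H|$ satisfies $u_i(B\cap L)\ge\max\{0,1-kv\}$. For $v\ge 1/2$ this is at least $(2-k)(1-v)$ for every $k\ge 0$. Summing over the $n$ parts gives $u_i(L)\ge 2(1-v)(n-t)$. On the other hand, $u_i(L)\le v+(n-1-t)\frac{3}{2}v=\frac{3}{2}v(n-t)-\frac{v}{2}$. Combining, $(n-t)(2-\frac{7}{2}v)\le-\frac{v}{2}<0$, which forces $v>4/7$.

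The case $v<1/2$ needs separate treatment, because the $k=0$ weight fails there, but it is easy. Every surviving pair good lies in $(1-v,v]$, which is then empty, so $t=0$. Then $n\le u_i(M)\le v+\frac{3}{2}v(n-1)$ gives $v>2/3$. Your balance-point intuition, $2(1-v)=\frac{3}{2}v$ exactly at $v=4/7$, is therefore correct, but the proof is incomplete until this inequality is written down.

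\textbf{How the paper avoids this step.} It never charges against an MMS partition. Instead of deleting pairs by a value threshold, it raises both goods of each pair bundle to value $u_i(A_i)$. This cannot decrease $\mu_i^n(M)$ and keeps the witness inequalities. While more than $n/2$ such pairs remain, there are more than $n$ goods of this common value, so some MMS part contains two of them. After relabeling equal-valued goods, the paper drops that part together with one pair and one agent. This caps the number of pairs at $n/2$. The proportional bound $u_i(M)\le v+\frac{n}{2}\cdot 2v+(\frac{n}{2}-1)\frac{3}{2}v<\frac{7}{4}nv$ then finishes directly. Your value-threshold deletion is equally valid, but it can leave up to $n-1$ heavy pairs, which is why your route needs the MMS partition in the final step.
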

\begin{proof}
    Fix an agent $i$ arbitrarily.
    Suppose that $u_i(A_i)=0$. 
    Since $\A$ is 3-witness EF1, for any bundle $A_j$ with $|A_j| \ge 2$, since $0\le u_i(\last_2(A_j)) \le u_i(A_j-\last_1(A_j)) \le u_i(A_i)$ and $0\le u_i(A_j-\last_2(A_j)) \le u_i(A_i)$, we have $u_i(A_j)=0$. 
    Thus, only singleton bundles can have positive value, and there are fewer than $n$ goods with positive value.
    This implies $\mu_i^n(M) = 0$ and the claim holds.
    
    In the remainder, without loss of generality, assume $u_i(A_i)=1$.
    The following three operations on $\A$ do not decrease $\mu^n_i(M)$, while preserving the $3$-witness EF1 property between $A_i$ and other bundles of size at least three:

    \begin{enumerate}
        \item If $|A_j| \le 1$ for some $j \ne i$, remove bundle $A_j$ and replace 
        $(n, M, \A) \gets (n - 1, M \setminus A_j, \A \setminus \{A_j\})$.
        \item If $|A_j| = 2$ for some $j \ne i$, let $x, y$ denote its two elements.
            If $\min\{u_i(x), u_i(y)\} < 1$, increase the values of these goods so that $u_i(x) = u_i(y) = 1$.
        \item If the number of bundles satisfying $j \ne i$ and $|A_j| = u_i(A_j) = 2$ exceeds $n/2$, select one such bundle $A$ and remove it by replacing $(n, M, \A) \gets (n - 1, M \setminus A, \A \setminus \{A\})$.
    \end{enumerate}

    We verify that each operation does not decrease $\mu^n_i(M)$.
    The first operation does not decrease $\mu^n_i(M)$ by \Cref{lem:MMS-singleton}, where the case $A_j = \emptyset$ is immediate.
    The second operation does not decrease $\mu^n_i(M)$.
    We note that, since values of goods in size-two bundles are changed only by the second operation, the 3-witness EF1 property holds for $A_i$ and the current bundle $A_j$.
    Hence, $\max\{u_i(x),u_i(y)\} \le u_i(A_i)=1$ holds before the operation.
    For the third operation, let $\B$ denote a partition of $M$ that achieves $\mu^n_i(M)$ before the removal in the current instance.
    Since there are more than $n$ goods $g \in M$ with value $1$ before the removal, there exists a bundle $B \in \B$ containing at least two such goods.
    Since we can exchange labels of goods with value $1$, we can view $\B \setminus \{B\}$ as a partial allocation of $M \setminus A$ to $n - 1$ agents, and we have
    \begin{align*}
        \mu^{n-1}_i(M \setminus A) \ge \min_{B' \in \B \setminus \{B\}} u_i(B') \ge \min_{B' \in \B} u_i(B') = \mu^n_i(M).
    \end{align*}
    Thus, the new MMS value is at least the old MMS value.

    Repeatedly applying whichever operation is applicable, we eventually reach a state where $\A$ admits none of the three operations.
    At this point, $\A$ satisfies the following properties:
    \begin{itemize}
        \item There is no bundle $A_j$ with $j \ne i$ and $|A_j| \le 1$.
        \item There are at most $n/2$ bundles $A_j$ with $j \ne i$ and $|A_j| = 2$.
        \item If $j \ne i$ and $|A_j| = 2$, then $u_i(A_j) = 2$.
    \end{itemize}
    Moreover, for any bundle $A_j$ with $|A_j| \ge 3$, let $x, y, z$ denote the last three goods.
    By the definition of $3$-witness EF1, we have $u_i(A_j - x) \le 1$, $u_i(A_j - y) \le 1$, and $u_i(A_j - z) \le 1$.
    Summing these inequalities yields
    \begin{align*}
        u_i(A_j) \le \frac{3}{2}u_i(A_j - x - y - z) + u_i(x) + u_i(y) + u_i(z) \le \frac{3}{2}.
    \end{align*}
    Since each bundle of size two contributes more to $u_i(M)$ than each bundle of size at least three, we obtain
    \begin{align*}
        u_i(M) \le 1 + \frac{n}{2} \cdot 2 + \left(\frac{n}{2} - 1\right) \cdot \frac{3}{2} < \frac{7n}{4},
    \end{align*}
    which implies
    \begin{align*}
        u_i(A_i) > \frac{4}{7} \cdot \frac{u_i(M)}{n} \ge \frac{4}{7} \cdot \mu^n_i(M).
    \end{align*}

    Since all three operations preserve the property that $\mu^n_i(M)$ does not decrease, and leave $A_i$ unchanged, the original allocation satisfies $4/7$-MMS.
\end{proof}

In addition to TPS, the $1/2$-MMS guarantee for $2$-witness EF1 is tight, as demonstrated by the following example.
While the $4/7$-MMS guarantee for $3$-witness EF1 may not be tight, it matches the currently known guarantee for EFX allocations~\cite{ABM2018}.
Therefore, to obtain a better guarantee, one would first need to improve the guarantee for EFX.
We remark that \cite{ABM2018} adopts a definition of EFX that only requires the removal of goods with positive values, which neither implies nor is implied by 3-witness EF1.
Nevertheless, the hard instance establishing the $0.5914$-MMS upper bound satisfies EFX in our definition, and thus the bound applies to 3-witness EF1 as well.
\Cref{ex:3-witness-example} below improves the bound to $10/17$ for 3-witness EF1 allocations.

\begin{example}[Upper bound of $1/2$-MMS on $2$-witness EF1]
    Consider an instance with $n$ agents and $3n - 2$ goods with identical and additive valuation $u$.
    Define $\eps \coloneqq 1 / (n + 1)$. The allocation and the value of each good are defined as follows:
    \begin{itemize}
        \item bundle $A_1$ contains a good with value $1$,
        \item for $2 \le i \le n$, bundle $A_i$ contains three goods with values $\eps, 1 - \eps, 1 - \eps$ respectively.
    \end{itemize}
    
    Since $A_1$ has the minimum value, and for each $2 \le i \le n$, removing either of the last two goods from $A_i$ results in a value at most $u(A_1)$, this allocation is $2$-witness EF1.
    On the other hand, we have $u(M) / n = 2 - 2 \eps$. 
    Since transferring a good of value $\eps$ from each $A_i$ to $A_1$ makes all bundles have value $u(M) / n$, we have $\mu^n(M) = 2 - 2 \eps$.
    Therefore, $u(A_1) = \frac{1}{2 - 2 \eps} \mu^n(M) = \frac{n + 1}{2n} \mu^n(M)$, and taking $n \to \infty$ demonstrates that the $1/2$-MMS bound in \Cref{thm:2-witness-implies-half-mms} is tight.
\end{example}

\begin{example}[Upper bound of $10/17$-MMS on $3$-witness EF1]
\label{ex:3-witness-example}    
    Consider an instance with $6n + 1$ agents with identical and additive valuation $u$.
    Define $\eps \coloneqq \frac{7}{30n}$. The allocation and the value of each good are defined as follows:
    \begin{itemize}
        \item bundle $A_1$ contains a good with value $1 + \eps$,
        \item for $2 \le i \le n + 1$, bundle $A_i$ contains seven goods with values $\eps, 0.2, 0.2, 0.2, 0.2, 0.2, 0.2$ respectively,
        \item for $n + 2 \le i \le 3n + 1$, bundle $A_i$ contains four goods with values $\eps, 0.5, 0.5, 0.5$ respectively,
        \item for $3n + 2 \le i \le 6n + 1$, bundle $A_i$ contains two goods with values $1 + \eps, 1 + \eps$ respectively.
    \end{itemize}
    
    Since $A_1$ has the minimum value, and for each $2 \le i \le 3n + 1$, removing any of the last three goods from $A_i$ results in a value at most $u(A_1)$, and for each $3n + 2 \le i \le 6n + 1$, removing any goods from $A_i$ results in a value at most $u(A_1)$, this allocation is $3$-witness EF1.
    On the other hand, we have $u(M) / (6n + 1) = 1.7 + \eps$.
    By assigning goods as follows, we can make all bundles have value $1.7 + \eps$, which implies $\mu^{6n + 1}(M) = 1.7 + \eps$:
    \begin{itemize}
        \item bundle $B_1$ contains a good with value $1 + \eps$ and $3n$ goods with value $\eps$ each,
        \item for $2 \le i \le 6n + 1$, bundle $B_i$ contains three goods with values $1 + \eps, 0.5, 0.2$ respectively.
    \end{itemize}
    Therefore, $u(A_1) = \frac{1 + \eps}{1.7 + \eps} \mu^{6n + 1}(M) = \frac{30n + 7}{51n + 7} \mu^{6n + 1}(M)$, and taking $n \to \infty$ demonstrates that $\frac{10}{17}$-MMS is an upper bound for \Cref{thm:3-witness-implies-47-mms}.
\end{example}


\section{Discussion}

In this paper, we studied fair division of indivisible goods in the comparison-based query model of Bu, Li, Liu, Song, and Tao~\cite{BLLST2024}.
Our main contribution was to propose a witness-certificate scaling framework to approach the optimal query complexity bound in the comparison-query model.

Despite our progress, several open questions remain.

\paragraph{EF1 for non-identical valuations.}
The main open problem is whether an EF1 allocation for non-identical additive valuations can be computed with $\mathrm{poly}(n, \log m)$ comparison queries.
The case $n \le 3$ is resolved with $O(\log m)$ queries~\cite{BLLST2024}, and the case $n = 4$ is already open.
Our partial result shows that witness information can partially substitute for cardinal information in restricted non-identical settings, but the general case remains open.
This open problem is not limited to the comparison-query model.
Even in Feige's stronger communication model, there still exists a gap between the description complexity of $O(n\log m)$ and the best known communication complexity $O(m\log m)$ (via round robin)~\cite{Fei25}.
Thus, even in this stronger setting, a non-identical EF1 allocation with the sublinear-in-$m$ communication cost is open.

\paragraph{Lower bounds and optimality.}
For identical valuations, the remaining gap is the $O(\log n)$ factor between our upper bound and the known lower bound.
Appendix~\ref{app:witness-repair-lb} suggests that closing the gap may require either abandoning witness certificates or proving a matching $\Omega(n \log n \log(m/n))$ lower bound.
For non-identical valuations, the gap is larger.
The best lower bound for PROP1 remains $\Omega(n \log(m/n))$~\cite{Fei25}, against our $O(n^3 \log m)$ upper bound, and it is open whether the dependence on $n$ can be pushed closer to quadratic or whether structural obstacles prevent such improvements.
We also note that for $\alpha$-MMS with $\alpha < 1$, the best known lower bound is only $\Omega(\log(m/n))$~\cite{BLLST2024}, as Feige's linear-in-$n$ bound is known only for exact MMS and PROP1~\cite{Fei25}.
Establishing lower bounds with nontrivial dependence on $n$ for approximate MMS is an interesting direction.

\paragraph{Beyond additivity.}
Our PROP1 results for non-identical valuations rely on additivity.
Extending accelerated frameworks to broader classes (e.g., monotone, submodular, unit demand) is a natural direction, potentially requiring new certification primitives expressible via comparisons.

\paragraph{Computing $k$-witness EF1 efficiently.}
It is open whether the scaling framework of \Cref{thm:sendmaxtomin} can be strengthened to output a $2$- or $3$-witness EF1 allocation within $O(\mathrm{poly}(n, \log m))$ queries.
In particular, once the bound becomes $O(n \cdot \mathrm{poly}(\log n, \log m))$ for $k=3$, \Cref{thm:2-witness-implies-half-mms,thm:3-witness-implies-47-mms} implies that the $1/2$-MMS guarantee of \Cref{thm:ef1-halfmms} is upgraded to $4/7$-MMS with almost no extra cost.
Dealing with two or more witnesses under transfers seems to require ordering information among the top goods of every bundle, which may require a linear number of queries in $m$.
 
\paragraph{Existence of $k$-witness EF1 beyond identical valuations.}
For identical additive valuations, EFX allocations exist~\cite{PR2020}, and hence $k$-witness EF1 allocations exist for every $k$.
For general additive valuations, EFX existence for three agents~\cite{CGM24} implies $k$-witness EF1 for every $k$ when $n \le 3$.
Thus, already for $n = 4$, even the existence of a $2$-witness EF1 allocation is open.
As the existence of an EFX allocation is a major open problem~\cite{Ama2023}, resolving existence for small $k$ may serve as a tractable stepping stone toward EFX.

\appendix

\section{A Sorting Barrier for Witness-Preserving Repair}
\label{app:witness-repair-lb}

In this section, we discuss a lower bound on algorithms that preserve witnesses for EF1.
Such algorithms contain an inherent sorting-type bottleneck.
We remark that this does not establish a lower bound for any EF1 allocation algorithms.
We assume that the agents have an identical additive valuation $u$.

To show the bottleneck, we consider the following witness-extension problem.
Suppose that we are given $n$ bundles $B_1,\dots,B_n$ and $n$ fresh goods $g_1,\dots,g_n$.
We are asked to find a bijection $\sigma\colon [n]\to[n]$ such that $(A_1, \dots, A_n)$ given by $A_i=B_i + g_{\sigma(i)}$ ($i\in [n]$) is 1-witness EF1 such that each $g_{\sigma(i)}$ is the witness for $A_i$, i.e., $u(B_i + g_{\sigma(i)}) \ge u(B_j)$ for all $i,j\in [n]$.
We say such a bijection $\sigma$ is feasible if $u(A_i)\ge u(A_j\setminus\{g_{\sigma(j)}\})=u(B_j)$ for all $i,j \in [n]$.

\begin{proposition}
Any deterministic comparison-query algorithm for the witness-extension problem requires $ \Omega(n\log n)$ queries in the worst case, even for identical additive valuations.
The same lower bound holds for randomized algorithms with constant success probability.
\end{proposition}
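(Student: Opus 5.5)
We argue by an information-theoretic (decision-tree) lower bound, constructing a family of $n!$ hard instances in which the feasible bijection is unique and distinct instances have distinct feasible bijections.

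\emph{The construction.} Fix a permutation $\pi$ of $[n]$. Set $u(B_i)=i$ (for instance, each $B_i$ is a single good of value $i$). Let the fresh goods have values $u(g_{\pi(i)}) = n-i+\delta_i$, with small perturbations $\delta_i\in(0,1/2)$ chosen so that all relevant bundle values are distinct. Then feasibility requires $u(B_i)+u(g_{\sigma(i)})\ge n=\max_j u(B_j)$ for all $i$. Sorting forces uniqueness. $B_1$ needs a good of value at least $n-1$, and only $g_{\pi(1)}$ qualifies. Once it is used, $B_2$ needs value at least $n-2$, leaving only $g_{\pi(2)}$, and so on inductively. Hence $\sigma=\pi$ is the unique feasible bijection. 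Equivalently, the task asks the algorithm to match the sorted order of the fresh goods against the sorted order of the bundles, which is a sorting problem.

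\emph{The deterministic bound.} A comparison query between arbitrary bundles $X,Y$ returns one bit. Because all values are generic, there are no ties, so the answers are determined by $\pi$ and adversarial tie-breaking plays no role. A deterministic algorithm is therefore a binary decision tree whose leaves output $\sigma$. Since distinct $\pi$ require distinct outputs, the tree has at least $n!$ leaves, and so its depth is at least $\log_2 n! = \Omega(n\log n)$.

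\emph{The randomized bound.} Apply Yao's principle with $\pi$ uniform. A deterministic tree of depth $d$ has at most $2^d$ leaves, so it succeeds on at most $2^d$ of the $n!$ inputs. Success probability $c$ then forces $2^d\ge c\,n!$, which again gives $d=\Omega(n\log n)$.

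\emph{Main obstacle.} The delicate step is certifying that the answers are deterministic functions of $\pi$. Queries may compare arbitrary subsets mixing bundles and fresh goods, and equality there would let the adversary inject freedom or confound the counting. I would handle this by choosing the $\delta_i$ (and perhaps scaling the base values) to be generic, for example rationally independent, so that no two distinct subsets have equal value. It also needs checking that the perturbations do not break the uniqueness argument. This holds because the thresholds are integers and the perturbations are less than $1/2$, so every inequality in the forcing chain stays strict.
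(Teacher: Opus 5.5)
Your proposal is correct and follows the paper's proof essentially step for step. The paper uses the same staircase instance (bundle values in arithmetic progression, fresh goods carrying a hidden permutation of shifted integers), derives the same forced unique feasible bijection encoding $\pi$, and concludes with the same $\log_2 n!$ decision-tree count and Yao's principle. The genericity you flag as the main obstacle is not needed: a lower-bound adversary may fix any consistent tie-breaking rule, which already makes every answer a function of $\pi$ (and with $u(B_i)=i$ unperturbed, $B_1\cup B_2$ and $B_3$ do tie, so your ``no ties'' claim would otherwise require perturbing the bundle values too).
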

\begin{proof}
Fix any $\varepsilon\in(0,1)$ and a number $H>n$.
Suppose that the value of each given bundle $B_i$ is $u(B_i)=H-i$.
Hence, $\max_{i\in[n]} u(B_i)=H-1$.
The fresh goods have values $\varepsilon, 1+\varepsilon, 2+\varepsilon, \dots, n-1+\varepsilon$, but their order is hidden.
More precisely, for a hidden permutation $\pi$ over $[n]$, we set $u(g_j)=\pi(j)-1+\varepsilon$ for $j\in [n]$.
We assume a stronger condition that the algorithm is given the values of the bundles and the set of values of the fresh goods, and the only hidden information is the permutation $\pi$.
A lower bound under this condition applies to the comparison-query model.

We first show that every feasible bijection uniquely determines $\pi$.
Let $\sigma$ be a feasible bijection.
Since the fresh goods serve as witnesses, we must have $u(B_i)+u(g_{\sigma(i)})\ge u(B_1) = H-1$ for every $i\in[n]$.
Substituting the values implies $H-i+\bigl(\pi(\sigma(i))-1+\varepsilon\bigr)\ge H-1$, which means $\pi(\sigma(i))+\varepsilon\ge i$.
Since $\pi(\sigma(i))$ is an integer and $\varepsilon<1$, this implies $\pi(\sigma(i))\ge i$ for all $i\in [n]$.
As $\sigma$ is a bijection, the sequence $\pi(\sigma(1)),\pi(\sigma(2)),\dots,\pi(\sigma(n))$ is itself a permutation of $[n]$.
The only permutation $(r_1,\ldots,r_n)$ of $[n]$ satisfying $r_i\ge i$ for every $i$ is $r_i=i$ for every $i$.
Hence, $\pi(\sigma(i))=i$ for all $i\in [n]$.
Thus, $\sigma$ is feasible if and only if it maps each $B_i$ to the unique fresh good of value $i-1+\varepsilon$.
In particular, each hidden permutation $\pi$ has a unique feasible output $\sigma$.

Now consider any deterministic algorithm that makes at most $q$ comparison queries.
Its computation can be represented by a binary decision tree with at most $2^q$ leaves.
Since there are $n!$ possibilities of the hidden permutation and a leaf can output a feasible solution for at most one permutation, correctness on all inputs requires $2^q\ge n!$.
Therefore, the algorithm uses $q\ge \log_2(n!)=\Omega(n\log n)$ queries.

For randomized algorithms, take the distribution that chooses $\pi$ uniformly from all the permutations over $[n]$.
The same decision-tree argument shows that any deterministic $q$-query algorithm succeeds with probability at most $2^q/n!$ under this distribution.
Hence, if a randomized algorithm succeeds with probability at least $2/3$ on every input, then by Yao's minimax principle we must have $\frac{2^q}{n!}\ge \frac23$, and therefore, $ q\ge \log_2(n!)-O(1)=\Omega(n\log n)$.
\end{proof}

\section{Omitted Proofs}

\subsection{Section 3}\label{app:proof-section-3}

\begin{proof}[Proof of \Cref{prop:ef1-undecidable-ties}]
    Consider two agents and three goods $a,b,c$, with the fixed allocation $A_1=\{a\}$ and $A_2=\{b,c\}$.
    The two agents have identical additive valuations.
    We consider two scenarios: their valuation is $v$ given by
    $v(a)=1, v(b)=1, v(c)=2$, or $v'$ given by $v'(a)=1, v'(b)=1+\eps, v'(c)=2$ for a fixed $\eps\in(0,1)$.
    The allocation $\A$ is EF1 under $v$: agent 1's envy toward $A_2$ is eliminated by removing $c$, and agent 2 does not envy agent 1.
    On the other hand, $\A$ is not EF1 under $v'$, because agent 1's envy cannot be eliminated by removing either $b$ or $c$ from $A_2$.

    We observe that whenever $v(X)<v(Y)$, we also have $v'(X)<v'(Y)$ for any $X,Y$, since the bundle values under $v$ are integers and the perturbation changes any bundle value by at most $\eps<1$.
    The comparison for $v$ and $v'$ can differ only on $X,Y$ with $v(X)=v(Y)$.
    However, all such ties can be broken adversarially  and consistently with the strict order induced by $v'$.
    Therefore, any deterministic algorithm receives the same answers in both instances and must return the same decision on $\A$. 
\end{proof}

\subsection{Section 4}\label{app:non-identical-EF1-proof}

Here we prove \Cref{prop:non-identical-EF1-heavy-anchors,prop:non-identical-EF1-binary} based on the following lemma.

\begin{lemma}\label{lem:anchor-core-compression}
    Suppose that we are given an integer $L\ge 1$ and a set $H=\{h_1,\dots, h_n\}$ of pairwise distinct goods satisfying $u_i(M\setminus H)\le L\cdot u_i(h_i)$ for every $i\in [n]$.
    Then one can compute a 1-witness EF1 allocation using $O(n^2L\log m+\mathrm{poly}(n,L))$ comparison queries.
\end{lemma}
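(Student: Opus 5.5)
We compress the non-anchor goods into few meta-goods, then run round-robin on the resulting small instance, with anchors picked first so they can act as witnesses. Put $R\coloneqq M\setminus H$ in an arbitrary fixed order, viewed as a path.

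\emph{Step 1: compression.} For each agent $i$ we cut $R$ into contiguous intervals, each of value at most $u_i(h_i)$ for agent $i$, as follows. Starting from the current left end of the remaining path, use binary search with $\Comp_i(\text{interval},\{h_i\})$ to find the longest prefix whose value does not exceed $u_i(h_i)$. The next good after that prefix makes the value exceed $u_i(h_i)$. So each pair consisting of a cut interval and its following good has value more than $u_i(h_i)$, and additivity together with $u_i(R)\le L\,u_i(h_i)$ bounds the number of cuts by $O(L)$. Each binary search costs $O(\log m)$ queries, so agent $i$ uses $O(L\log m)$ queries, and all agents together use $O(nL\log m)$.

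The common refinement of all $n$ cut structures splits $R$ into $O(nL)$ atoms. We treat each atom, and each single good adjacent to a cut, as a meta-good. Every atom lies inside one interval of agent $i$'s cut, so it has value at most $u_i(h_i)$ for every agent $i$. This gives a reduced instance $M'$ with $|M'| = n + O(nL)$ meta-goods.

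\emph{Step 2: allocation.} We run round-robin on $M'$ in the order $1,\dots,n$, but first assign $h_i$ to agent $i$ in round one. This is not a genuine pick, so we must show that it does no harm.

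\begin{itemize}
\item[(a)] An agent $i$ may envy agent $j$'s anchor $h_j$. We place $h_j$ last in $A_j$, so it serves as $\last(A_j)$.
\item[(b)] For the remaining meta-goods, the standard round-robin EF1 argument applies to agent $i$ versus $j$. Every later item in $A_j$ is matched by an item $i$ picked no later and weakly preferred, except possibly one surplus item when $j$ precedes $i$. That surplus is $j$'s first genuine pick, and $i$ values it at most $u_i(h_i)$ by the compression property. Agent $i$'s own anchor $h_i$ therefore compensates for it.
\end{itemize}

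Combining (a) and (b) with the order $A_j = (\text{picks}, h_j)$, the bound $u_i(A_j-h_j)\le u_i(A_i)$ should follow. Let $p_i^t$ be $i$'s $t$-th pick and $p_j^t$ be $j$'s $t$-th pick. When $j<i$ in the order, we have $u_i(p_i^t)\ge u_i(p_j^{t+1})$, and $u_i(h_i)\ge u_i(p_j^1)$. When $j>i$, we have $u_i(p_i^t)\ge u_i(p_j^t)$ directly. In both cases $u_i(A_j - h_j)\le u_i(A_i)$, which gives 1-witness EF1 with witness $h_j$.

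\emph{Step 3: query count.} Simulating round-robin with comparisons requires each agent to pick its favorite remaining meta-good. This costs $O(|M'|)$ comparisons per pick, for $\mathrm{poly}(n,L)$ in total. Together with the compression this gives the claimed $O(n^2L\log m+\mathrm{poly}(n,L))$.

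The $n^2$ factor in the first term suggests the cuts are more delicate than in Step 1. A single agent's contiguous intervals are judged only by that agent's values. The refinement keeps each atom small for every agent, but Step 1 only gives a count of $O(nL)$ atoms. The concern is that adjacent-good singleton meta-goods might exceed $u_{i'}(h_{i'})$ for some other agent $i'$. If so, we would isolate each such good as its own meta-good and bound its value via EF1-style arguments. Alternatively, we would add one query per (agent, cut) pair to classify it, costing $O(n\cdot nL\log m)$.

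\emph{Main obstacle.} The hardest step is verifying that every meta-good, including these separator goods, is worth at most $u_i(h_i)$ to every agent $i$. This property is exactly what lets the anchor absorb the one-pick lag of round-robin. If a separator good $s$ is valuable to some agent, the first fix to try is to give $s$ the same treatment as anchors. For any agent $i$ who values $s$ above $u_i(h_i)$, the compression guarantees there are at most $L$ such goods. Each such good can be placed at the end of its bundle, with the anchor argument repeated.
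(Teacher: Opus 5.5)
You have a genuine gap, and it sits exactly at the step you flagged as the main obstacle.

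\textbf{What fails.} Your Step 2 is fine, and it is how the paper finishes: once every non-anchor meta-good $Z$ satisfies $u_i(Z)\le u_i(a_i)$ for every agent $i$, where $a_i=\last(A_i)$, round-robin plus the appended anchor gives 1-witness EF1. But Step 1 does not deliver this property, and with anchors fixed at $h_i$ it cannot be delivered. A good $s\in M\setminus H$ may be worth up to $L\,u_i(h_i)$ to agent $i$. Such a good fits in no interval of agent $i$, so it must be its own meta-good. Your surplus argument in (b) then fails whenever such an $s$ is the first pick of an agent preceding $i$.

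Moreover, keeping $h_i$ in $A_i$ is itself fatal. Take $n=2$, $M\setminus H=\{s\}$, $L=2$, with $u_1(h_1)=u_2(h_2)=1$, $u_1(s)=u_2(s)=2$, and $u_1(h_2)=u_2(h_1)=100$. The hypothesis holds. Yet if agent $j$ receives $\{h_j,s\}$, the other agent values this bundle minus either good at least $2>1$. So no EF1 allocation at all has $h_1\in A_1$ and $h_2\in A_2$; a correct output is $A_1=(s,h_2)$, $A_2=(h_1)$. Your suggested fix of making heavy goods additional last goods therefore cannot work. It also breaks for a simpler reason: each bundle has exactly one designated witness, while there may be $\Theta(nL)$ heavy goods.

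\textbf{The missing idea: let the anchors move.} The paper keeps a core $U\supseteq H$ and a matching $i\mapsto a_i$ into $U$.
\begin{itemize}
\item Any good of $M\setminus H$ that some agent values at least as much as her current anchor is added to $U$. There are at most $L$ such goods per agent with $u_i(h_i)>0$, since $u_i(M\setminus H)\le L\,u_i(h_i)$.
\item The matching starts at $a_i=h_i$. Any agent who ranks an unmatched core good above $a_i$ switches to it.
\item Each switch strictly improves the switching agent's rank, so the process terminates.
\item $u_i(a_i)$ never decreases, so blocks cut to be worth at most the anchor stay light.
\item At the end, $u_i(g)\le u_i(a_i)$ for every agent $i$ and every unmatched $g\in U$.
\end{itemize}
The meta-goods are then the $O(nL)$ blocks and the unmatched core goods. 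Your Step 2 goes through with $a_i$, not $h_i$, as $\last(A_i)$. In the example above, this dynamic yields exactly $a_1=h_2$ and $a_2=h_1$.

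\textbf{A smaller omission.} Your $O(L)$ bound on the number of cuts assumes $u_i(h_i)>0$. If $u_i(h_i)=0$, every comparison of a subset of $M\setminus H$ against $\{h_i\}$ is a tie, and adversarial answers can create $\Theta(m)$ pieces. The paper caps each agent at $L$ core charges and $2L$ block charges, and removes an agent that would exceed a cap. This is safe because exceeding a cap forces $u_i(h_i)=0$, hence $u_i(M\setminus H)=0$.
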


\begin{proof}
    We maintain a set $U$ of core goods and a matching from agents to core goods.
    Initially, $U=H$ and agent $i$ is matched to $a_i \coloneqq h_i$.
    We say that $a_i$ is an anchor good for agent $i$.
    For each agent, we keep the order of the current core goods in non-increasing order of value by using the comparison queries.

    Whenever a new good $h'$ is added to $U$, we update the matching as follows.
    While there exists an unmatched good $h'\in U\setminus\{a_1,\ldots,a_n\}$ and an agent $i$ such that $h'$ is ranked above $a_i$ in agent $i$'s order, we match agent $i$ to good $h'$ and make the old anchor $a_i$ unmatched.
    This process terminates because every such update strictly improves the rank of the updated agent.
    Hence, we have 
    \begin{equation}
    \label{eq:unmatched-core-safe-doubling}
        u_i(h')\le u_i(a_i)
        \qquad
        \text{for all } i\in[n] \text{ and } h'\in U\setminus\{a_1,\ldots,a_n\}.
    \end{equation}
    In particular, the value of each agent's anchor never decreases.

    We scan the goods of $R \coloneqq M\setminus H$ in an arbitrary fixed order.
    Let $N' \coloneqq [n]$.
    At any point, a bundle $B$ of goods is called safe if $\Comp_i(B,\{a_i\})=\true$ for every $i\in N'$.
    Let $\{x\}$ be the first remaining good.
    \begin{itemize}
        \item If $\{x\}$ is not safe:
        We have $u_i(x)\ge u_i(a_i)$ for some agent $i$.
        If this is the $(L+1)$st charge to such an agent $i$, then we set $N' \leftarrow N'\setminus \{i\}$ and retry this step.
        Otherwise, we add $x$ to the core $U$, remove $x$ from $R$, update the anchors, and continue.
        \item Otherwise, i.e., if $\{x\}$ is safe:
        Using the binary search, we find a safe bundle $B$ such that either $B$ contains all remaining goods, or $B + y$ is not safe for the next good $y$.
        If $B$ contains all remaining goods, then we output $B$ as a block, remove it from $R$, and terminate the scan.
        Otherwise, let $i\in N'$ be an agent witnessing that $B+y$ is not safe.
        If this is the $(2L+1)$st block cut by agent $i$, then we set $N' \leftarrow N'\setminus \{i\}$ and retry this step.
        Otherwise, we output $B$ as a block and remove it from $R$.
    \end{itemize}
    Since $u_i(a_i)$ only increases over time, every output block remains safe with respect to the anchors at any point.

    By construction, the numbers of core additions and blocks charged to agent $i$ are at most $L$ and $2L$, respectively.
    Thus, the total number of core additions and blocks is at most $O(nL)$.
    
    It remains to show that $u_i(B) \le u_i(a_i)$ for every block $B$ and every agent in $[n]$.
    Since the values of anchors never decrease, the claim holds for any block $B$ and agents in $N'$ at the time when $B$ is created.
    Fix agent $i$.
    We show that agent $i$ remains in $N'$ throughout if $u_i(h_i)>0$.
    \begin{itemize}
        \item Since $u_i(M\setminus H) \le L \cdot u_i(h_i)$, there exist at most $L$ distinct goods with value at least $u_i(h_i)$, and agent $i$ does not reach the $(L+1)$st core charge.
        \item Let $B_1, \dots, B_\alpha$ be candidate blocks charged to agent $i$.
        Let $x'_\ell$ be the first good after $B_\ell$.
        By the choice, $u_i(B_\ell + x'_\ell)\ge u_i(a'_i) \ge u_i(h_i)$, where $a'_i$ is the anchor when $B_\ell$ is charged to agent $i$.
        The blocks $B_1, \dots, B_\alpha$ are disjoint, and the next goods $x'_1,\dots, x'_\alpha$ are distinct.
        Thus, $\sum_{\ell} u_i(B_\ell + x'_\ell) \le 2u_i(M\setminus H) \le 2Lu_i(h_i)$.
        These observations imply that agent $i$ does not reach the $(2L+1)$st block charge.
    \end{itemize}
    Thus, agent $i$ is excluded from $N'$ only if $u_i(h_i)=0$.
    In this case, we have $0 \le u_i(M\setminus H) \leq L\cdot u_i(h_i)=0$.
    Since every block $B$ is a subset of $M\setminus H$, we have $u_i(B) \le u_i(h_i)$.

    Let $F=\{a_1,\ldots,a_n\}$ be the final anchor set.
    At this point, the set of meta-goods consists of the $O(nL)$ blocks and the $O(nL)$ unmatched core goods $U\setminus F$.
    We compute an EF1 allocation $(Y_1,\ldots,Y_n)$ of these meta-goods using the standard round-robin algorithm.
    Finally, we set $A_i\coloneqq Y_i + a_i$ for each $i\in [n]$ so that $\last(A_i) = a_i$.

    We show that $\A=(A_1,\ldots,A_n)$ is 1-witness EF1.
    Fix agents $i,j$.
    If $Y_j=\emptyset$, then $A_j-a_j=\emptyset$ and we are done.
    Otherwise, there exists a meta-good $Z\subseteq Y_j$ such that $u_i(Y_i)\ge u_i(Y_j\setminus Z)$.
    By \eqref{eq:unmatched-core-safe-doubling} and the condition of safe blocks, we have $u_i(Z) \le u_i(a_i)$.
    Thus, 
    \begin{align*}
        u_i(A_j-\last(A_j)) =u_i(Y_j)
        =u_i(Y_j\setminus Z)+u_i(Z)
        \le u_i(Y_i)+u_i(a_i)
        =u_i(A_i).
    \end{align*}
    Therefore, the allocation is 1-witness EF1.

    We analyze the query bound.
    By construction, $O(nL)$ core goods and blocks are created.
    Every time a core good is added, updating the anchor goods takes $O(n \log m+n^2)$ comparison queries.
    Cutting a block uses $O(n\log m)$ comparisons.
    The round-robin algorithm uses only $O(n^2L\log(nL))$ further comparison queries.
    Therefore, we use $O(n^2L\log m + n^3L+n^2L\log L)$ comparison queries in total.
\end{proof}

\begin{proof}[Proof of \Cref{prop:non-identical-EF1-heavy-anchors}]
    Since $P$ is unknown, we use a doubling scheme as follows.
    For $t=1,2,\ldots$, set $q_t=2^t n+1$.
    For each agent $i$, run \Cref{alg:ef1-halfmms} with $q_t$ bundles for valuation $u_i$.
    Let $T_i$ be a bundle with minimum value and let $W_i$ be the set of designated witnesses for EF1 for the agent $i$.
    Define $C_i=\{g\in W_i \mid \Comp_i(T_i,\{g\})=\true\}$.
    If the bipartite graph with a vertex set $[n] \cup \bigcup_{i\in [n]}C_i$ and an edge set $\{\{i,g\} \mid g\in C_i\}$ has no matching saturating all agents, we move on to $q_{t+1}$.    
    Once an agent-saturating matching exists, for each agent $i$, we assign edge weights consistent with the non-increasing order of values for the agent.
    This uses $O(\sum_i |C_i|\log |C_i|)=O(nq_t\log q_t)$ comparison queries.
    Among all agent-saturating matchings, we choose one of maximum total weight, which requires no further comparison queries.
    Let $a_i$ be the good assigned to agent $i$, and let $F\coloneqq\{a_1,\ldots,a_n\}$.
    By maximality, $u_i(a_i)\ge u_i(g)$ for every $g\in C_i\setminus F$.
    
    We claim that these goods satisfy the assumption of \Cref{lem:anchor-core-compression} with $L=2q_t$.
    Since $a_i\in C_i$, we have $u_i(a_i)\ge u_i(T_i)$.
    Consider any good $g\notin F$.
    If $g\in W_i\setminus C_i$, then $\Comp_i(T_i,\{g\})=\false$, and hence $u_i(g)\le u_i(T_i)\le u_i(a_i)$.
    If $g\in C_i\setminus F$, then $u_i(g)\le u_i(a_i)$ by the choice of $F$.
    Thus, in each of the $q_t$ bundles produced for agent $i$, the part other than its designated witness has value at most $u_i(T_i)\le u_i(a_i)$, and the witness itself, if not in $F$, also has value at most $u_i(a_i)$.
    Therefore, we have $u_i(M\setminus F)\le 2q_t \cdot u_i(a_i)$, and \Cref{lem:anchor-core-compression} applies with $L=2q_t$ and $H=F$.
    
    It remains to see that the doubling process to find $F$ stops.
    If the hidden goods $h_1,\ldots,h_n$ from the statement exist and $q_t \ge nP$, then $u_i(T_i)\le u_i(M)/q_t \le u_i(M)/(nP)< u_i(h_i)$.
    Hence, $h_i\in W_i$ by \Cref{lem:large-goods-in-witnesses}, and also $h_i\in C_i$.
    Since the $h_i$'s are pairwise distinct, a matching saturating all agents is obtained, and the process stops.
    At this point, $q_t = O(nP)$, i.e., $t=O(\log P)$.
    The doubling process iterates $O(\log P)$ steps, and each step uses $O(\mathrm{poly}(n, P)\log m)$ queries to compute $C_i$ for each $i$, and $O(\mathrm{poly}(n,P))$ queries to find $F$.
    By \Cref{lem:anchor-core-compression}, the query bound follows.
\end{proof}

\begin{proof}[Proof of \Cref{prop:non-identical-EF1-binary}]
    Let $q_t=2^t n+1$ for $t=1,2,\dots$.
    In each trial $t$, for each agent $i$, run $\ScalingSendMaxToMin$ with $q_t$ bundles for valuation $u_i$, and obtain a bundle $T_i$ with minimum value and a witness set $W_i$.
    Let $C_i = \{g\in W_i \mid \Comp_i(T_i, \{g\}) = \true\}$ for each agent $i$.
    If some $C_i$ is empty, abort this trial and move on to the next.

    We explain that this process stops once $q_t > K$.
    Since $\max_{i\in [n]} |S_i| \le K$, the bundle $T_i$ has value $0$ for every agent $i$.
    Since $|S_i|\ge 1$ ($i\in [n]$), \Cref{lem:large-goods-in-witnesses} implies that every good in $S_i$ is in $W_i$.
    Thus, $C_i\ne \emptyset$ ($i\in [n]$), and the process stops.

    Let $\tau$ be the trial when the above process stops.
    For each agent $i$, since $C_i \neq \emptyset$, we have $u_i(T_i) \le u_i(g) \le 1$ for some $g\in C_i$.
    This and \Cref{lem:scaling-sendmaxtomin-is-witness-EF1} imply that, in the result of $\ScalingSendMaxToMin$ for agent $i$, each nonempty bundle $B$ has value $u_i(B) = u_i(B - \last(B)) + u_i(\last(B)) \le u_i(T_i) + 1 \le 2$.
    Thus, we have $|S_i|=u_i(M) \le 2 q_{\tau}$ for each $i\in [n]$.
    Here, we recompute $C_i$ ($i\in [n]$) by the above process with $t=\tau+2$.
    Because $q_{\tau+2} > 2q_{\tau} \ge K$, we have $C_i \supseteq S_i$ for every $i\in [n]$.
    We remark that $q_{\tau+2} = O(n+K)$.
    
    Let each agent $i$ pick one best good $h_i$ from $(\bigcup_{j\in [n]} C_j)\setminus \{h_1,\dots, h_{i-1}\}$, where if no good remains, then agents pick arbitrary goods from $M\setminus(\bigcup_{j\in [n]} C_j \cup \{h_1,\ldots,h_{i-1}\})$ (since $m>n$, this set is nonempty).
    As $|C_i|\le q_{\tau+2}$ for every agent $i$, we have $\lvert\bigcup_i C_i\rvert\le nq_{\tau+2}$.
    Thus, the best remaining good for each agent can be found using $O(nq_{\tau+2})$ comparisons, and all choices require $O(n^2q_{\tau+2})=O(n^2(n+K))$ comparison queries in total.
    If $u_i(h_i)=1$, then $u_i(M - \{h_1,\dots, h_n\}) \le u_i(M) \le 2q_{\tau} u_i(h_i)$.
    If $u_i(h_i)=0$, then all the goods in $S_i$ are picked by other agents, and hence we have $u_i(M-\{h_1,\dots, h_n\}) =0 = 2q_{\tau} u_i(h_i)$.
    Therefore, we can apply \Cref{lem:anchor-core-compression} with $L=q_{\tau+2}$ and $H=\{h_1,\dots,h_n\}$.
    As $q_{\tau+2} = O(n+K)$, we use $O(\mathrm{poly}(n,K) \log m)$ comparison queries until we fix $C_i$ ($i\in [n]$), and the query bound also follows.
\end{proof}

\section*{Acknowledgments}
The second author was partially supported by JST ERATO Grant Number JPMJER2301, and JSPS KAKENHI Grant Numbers JP21K17708, JP21H03397, JP25K00137, JP26K14718, and JP26K02867.

\printbibliography[title={References}]

\end{document}